\documentclass[11pt]{article}

\usepackage[T1]{fontenc}
\usepackage{lmodern}
\usepackage[letterpaper,margin=1in]{geometry}
\usepackage{microtype}
\usepackage{mathtools}
\usepackage{amssymb}
\usepackage{amsthm}
\usepackage{graphicx}
\usepackage[font=small,labelfont=bf,skip=6pt]{caption}
\usepackage{needspace}
\usepackage{colortbl}
\usepackage{tikz}
\usetikzlibrary{arrows.meta,positioning}
\usepackage{aliascnt}
\usepackage[hidelinks]{hyperref}
\usepackage[capitalise,nameinlink,noabbrev]{cleveref}
\usepackage{authblk}

\allowdisplaybreaks[2]

\theoremstyle{plain}
\newtheorem{theorem}{Theorem}[section]

\newaliascnt{lemma}{theorem}
\newtheorem{lemma}[lemma]{Lemma}
\aliascntresetthe{lemma}

\newaliascnt{proposition}{theorem}

\aliascntresetthe{proposition}

\newaliascnt{corollary}{theorem}
\newtheorem{corollary}[corollary]{Corollary}
\aliascntresetthe{corollary}

\newaliascnt{claim}{theorem}

\aliascntresetthe{claim}

\theoremstyle{definition}
\newaliascnt{definition}{theorem}
\newtheorem{definition}[definition]{Definition}
\aliascntresetthe{definition}

\newaliascnt{procedure}{theorem}
\newtheorem{procedure}[procedure]{Decision Procedure}
\aliascntresetthe{procedure}

\theoremstyle{remark}
\newaliascnt{remark}{theorem}
\newtheorem{remark}[remark]{Remark}
\aliascntresetthe{remark}

\crefname{theorem}{Theorem}{Theorems}
\Crefname{theorem}{Theorem}{Theorems}
\crefname{lemma}{Lemma}{Lemmas}
\Crefname{lemma}{Lemma}{Lemmas}
\crefname{proposition}{Proposition}{Propositions}
\Crefname{proposition}{Proposition}{Propositions}
\crefname{corollary}{Corollary}{Corollaries}
\Crefname{corollary}{Corollary}{Corollaries}
\crefname{claim}{Claim}{Claims}
\Crefname{claim}{Claim}{Claims}
\crefname{definition}{Definition}{Definitions}
\Crefname{definition}{Definition}{Definitions}
\crefname{procedure}{Procedure}{Procedures}
\Crefname{procedure}{Procedure}{Procedures}
\crefname{remark}{Remark}{Remarks}
\Crefname{remark}{Remark}{Remarks}

\newcommand{\AlgNums}{\overline{\mathbb Q}}
\newcommand{\cF}{\mathcal F}

\newcommand{\CC}{\mathsf{CC}}
\newcommand{\BO}{\mathsf{BO}}
\newcommand{\TP}{\mathsf{TP}}
\newcommand{\Mal}{\mathsf{Mal}}
\newcommand{\CSP}[1]{\mathsf{\#CSP}(#1)}
\newcommand{\CSPdeg}[2]{\mathsf{\#CSP}^{#1}(#2)}
\newcommand{\WFdeg}[1]{W_{\cF}^{#1}}
\newcommand{\SuppRel}[1]{\Phi_{#1}}
\newcommand{\RowEq}[1]{\Omega_{#1}}
\newcommand{\Pow}[2]{#1^{\langle #2\rangle}}
\newcommand{\zvec}[1]{\mathbf z_{#1}}
\newcommand{\InstMonoid}[1]{\mathfrak M_{#1}}

\newcommand{\emptytuple}{\epsilon}
\newcommand{\YES}{\textnormal{\textsc{yes}}}
\newcommand{\NO}{\textnormal{\textsc{no}}}
\DeclareMathOperator{\Pure}{Pure}
\DeclareMathOperator{\arity}{arity}
\DeclareMathOperator{\supp}{supp}
\DeclareMathOperator{\typeop}{type}
\DeclareMathOperator{\Row}{Row}
\DeclareMathOperator{\pr}{pr}

\DeclareMathOperator{\Tor}{Tor}
\DeclareMathOperator{\lcm}{lcm}
\DeclarePairedDelimiter{\abs}{\lvert}{\rvert}

\begin{document}

\title{From Block Orthogonality to 
Decidability \\[0.15em] in Complex-Weighted Counting CSP}

\author[1]{Chenghua Liu}
\author[2]{Boning Meng}

\affil[1]{Institute of Software, Chinese Academy of Sciences, Beijing, China}
\affil[2]{University of Regensburg, Regensburg, Germany}
\affil[ ]{\texttt{liuch.russell@gmail.com}, \texttt{mengboning2013@gmail.com}}
\date{}

\maketitle

\begin{abstract}
In a landmark JACM paper recognized with the 2021 G{\"o}del Prize, Cai and
Chen established a complete complexity dichotomy for counting CSPs over
arbitrary finite domains with algebraic complex weights.  Its polynomial-time
side is characterized by three conditions---Block Orthogonality, Type
Partition, and preservation by a common Mal'tsev operation---quantified over
the countably infinite family $W_{\mathcal{F}}$ generated from arbitrary
$\#\mathrm{CSP}(\mathcal{F})$ instances by partial summation.  They asked
whether these infinitary conditions are decidable from the finite language
$\mathcal{F}$ alone---equivalently, whether the polynomial-time side of this
complete fixed-language classification is uniformly recognizable.  We settle
this problem by giving, for every nonempty finite domain $D$ and every finite
exactly encoded algebraic-complex language $\mathcal{F}$, a total exact
algorithm that decides all three conditions on the full unbounded family
$W_{\mathcal{F}}$.  Beyond decidability, we prove that Block Orthogonality
alone forces both Type Partition and the existence of a single Mal'tsev
operation preserving all generated support and row-equivalence relations.
Thus the three-condition characterization collapses to Block Orthogonality,
and the finite input $(D,\mathcal{F})$ determines which side of the dichotomy
applies.  The same framework decides the corresponding conditions in the
dichotomy theorem for degree-multiple counting CSP proved by Lin.
Technically, any violation at external arity greater than $|D|^2+1$ can be
compressed, by identifying variables, to a violation of external arity at
most $|D|^2+1$.  At each bounded arity, testing all generated tables reduces
to finitely many homogeneous polynomial identities.  We construct a
tensor-character oracle that decides these identities without enumerating
$W_{\mathcal{F}}$.  Scalar tags resolve collisions introduced by
tensorization.  Young's pinned constraint-function isomorphism theorem then
reduces equality of induced characters to a finite isomorphism test.  Linear
independence of distinct monoid characters reduces universal vanishing to
finitely many exact coefficient-sum tests.  For the structural collapse,
power amplification controls complex cancellation, while a generated support
detector realizes every row-equivalence relation as another generated
support; the resulting common Mal'tsev operation yields Type Partition.
\end{abstract}
\section*{Acknowledgments}
The authors used OpenAI's ChatGPT in preparing
this manuscript, including for language editing and \LaTeX{} preparation.
ChatGPT also contributed to the exploratory development of most of the
arguments. All claims and proofs were independently checked and finalized
by the authors, who take full responsibility for the content.

\section{Introduction}
\label{sec:introduction}

Counting constraint satisfaction problems provide a common language for
partition functions, weighted graph homomorphisms, and many other counting
problems.  For nonnegative weights, supports and magnitudes interact
monotonically.  Algebraic complex weights remove this monotonicity: partial
summation can create new zeros, and proportionality or orthogonality can
emerge only after exact cancellation.  These phenomena are central to the
dichotomy theorem of Cai and Chen for complex-weighted
$\#\mathrm{CSP}$~\cite{CaiChen2017}.

Let $D$ be a nonempty finite domain and let $\cF$ be a finite language of
algebraic complex-valued constraint functions.  A finite $\CSP{\cF}$
instance $I$, with variable set $V(I)$, defines a function $F_I$ of all its
variables.  After ordering those variables, let $F_I^{[t]}$ be the function
obtained by summing out every variable after the first $t$.  Following Cai and
Chen~\cite[Section~3]{CaiChen2017}, we associate with $\cF$ the countably
infinite generated family
\begin{equation}
  W_{\cF}
  =\left\{F_I^{[t]}:
    \begin{array}{l}
      I\text{ is a finite }\CSP{\cF}\text{ instance},\\
      1\le t\le \abs{V(I)}
    \end{array}
    \right\}.
  \label{eq:intro-generated-family}
\end{equation}
Variable scopes may repeat variables, and the variables may be reordered
before marginalization.  Thus $W_{\cF}$ contains every generated partial
marginal, permutation, and diagonal minor.  Both the instance size and the
arity in~\eqref{eq:intro-generated-family} are unbounded.

Cai and Chen characterize the polynomial-time side of their dichotomy by
three conditions on this entire family~\cite[Theorem~1 and
Section~3.1]{CaiChen2017}.  The first is a jointly purified
\emph{Block Orthogonality} condition.  The second is \emph{Type Partition}
for the complex row representations of generated functions.  The third asks
for one Mal'tsev operation on $D$ that preserves every generated support and
every generated row-equivalence relation.  Their theorem classifies every
fixed language, and Cai and Chen explicitly asked whether the conjunction of
the three conditions can be recognized uniformly from the finite input
language~\cite[Section~9]{CaiChen2017}.

Using the exact definitions in \cref{sec:preliminaries}, write
$\BO(\cF)$, $\TP(\cF)$, and $\Mal(\cF)$ for these three global predicates,
and put
\[
  \CC(\cF)=\BO(\cF)\wedge\TP(\cF)\wedge\Mal(\cF).
\]
Our main result resolves the uniform recognition problem and proves that the
conjunction of these three conditions is equivalent to Block Orthogonality.

\begin{theorem}[Uniform decidability]
\label{thm:main}
Let $D$ be a nonempty finite domain and let $\cF$ be a finite
constraint language whose values are algebraic complex numbers given by
exact encodings.  There is a deterministic algorithm that halts on every
input $(D,\cF)$ and decides whether $\CC(\cF)$ holds on the full generated
family $W_{\cF}$.  Moreover,
\begin{equation}
  \CC(\cF)\quad\Longleftrightarrow\quad\BO(\cF).
  \label{eq:main-equivalence}
\end{equation}
\end{theorem}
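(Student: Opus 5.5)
The theorem has two parts: the structural equivalence \eqref{eq:main-equivalence} and the decision algorithm. I will prove the equivalence first, since it reduces the algorithmic task to deciding $\BO(\cF)$ alone.

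\textbf{Structural collapse.} The direction $\CC(\cF)\Rightarrow\BO(\cF)$ is immediate from the definition of $\CC$. For the converse, assume $\BO(\cF)$. I aim to show that $\Mal(\cF)$ holds, and then derive $\TP(\cF)$ from it.

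The first step is a \emph{generated support detector}. For each generated $F\in W_{\cF}$, I build a generated function whose support is exactly the row-equivalence relation $\RowEq{F}$. The natural candidate is the inner product of rows $\sum_y F(x,y)\overline{F(x',y)}$. Two issues must be handled:
\begin{itemize}
\item Complex conjugation is not available inside $W_{\cF}$.
\item Cancellation can produce spurious zeros.
\end{itemize}
For both, I will use \emph{power amplification}. Block Orthogonality says that within a block, rows are proportional and their purified parts are orthogonal. Raising functions to high powers $\Pow{F}{k}$, realized by taking $k$ copies of the instance, lets the norms dominate and makes the phases cancel in a controlled way. Choosing $k$ from a suitable arithmetic progression should then make the row-equivalence pairs exactly the nonzero entries of the detector.

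Once every $\RowEq{F}$ is itself a generated support, $\Mal(\cF)$ reduces to a single statement: all generated supports admit a common Mal'tsev polymorphism. I expect to obtain this from $\BO(\cF)$ through the standard rectangularity argument. $\BO$ applied to supports gives rectangularity of every pp-definable relation in the support clone, and a relational clone that is closed in this way and strongly rectangular has a Mal'tsev polymorphism (Bulatov--Dalmau). Type Partition should then follow: the common Mal'tsev operation forces the row types of different generated functions to be compatible, essentially by applying the operation to pairs of rows with equal purified representation.

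\textbf{Decision algorithm.} With the equivalence in hand, it suffices to decide $\BO(\cF)$ on all of $W_{\cF}$. This proceeds in three steps.

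\begin{enumerate}
\item \emph{Arity compression.} I show that any violation of external arity greater than $|D|^2+1$ can be compressed to one of arity at most $|D|^2+1$ by identifying variables. A violation is witnessed by a bounded number of row pairs, and a pigeonhole argument on columns lets us merge coordinates whose value patterns across the witnessing rows coincide.
\item \emph{Reduction to polynomial identities.} At each fixed arity, $\BO$ failure amounts to nonvanishing of finitely many homogeneous polynomial identities in the entries of $F$, ranging over all generated $F$. I sum these identities over tensor powers, so the condition becomes the vanishing of character-type sums over the monoid of instances $\InstMonoid{\cF}$.
\item \emph{Finite test.} Linear independence of distinct monoid characters reduces universal vanishing to finitely many coefficient-sum tests, one per distinct character. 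Distinctness of characters is decided by Young's pinned isomorphism theorem, applied after scalar tags separate collisions introduced by tensorization. Each coefficient sum is an exact algebraic-number computation, so the algorithm halts.
\end{enumerate}

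The main obstacle, I expect, is the support detector. Exact cancellation under complex weights can hide or create zeros, and the power-amplification argument must work uniformly across all generated $F$ for a single $k$.
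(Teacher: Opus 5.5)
Your outline has the same architecture as the paper: singleton purification and compression to arity $d^2+1$; BO at fixed arity as finitely many polynomial identities, decided through tensor characters, scalar tags, Young's theorem and independence of characters; then a row-equivalence detector, a common Mal'tsev operation for generated supports, and Type Partition.

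\textbf{The detector.} The genuine gap is the step you flag yourself: the detector is never constructed, and the mechanism you sketch (raise rows to powers so that ``norms dominate'') cannot work. Block Orthogonality only forces \emph{Hermitian} sums over purified magnitude blocks to vanish, so a detector has to reproduce conjugation of the phases, and equal powers on both rows do not. For example, on $D=\{0,1\}$ the affine-type function $F(x,z)=\mathrm i^{z}(-1)^{xz}$ has independent, block-orthogonal rows $\mathbf u=(1,\mathrm i)$ and $\mathbf v=(1,-\mathrm i)$. Yet $\sum_z u_z^kv_z^k=2$ for every $k$.

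The missing idea is an asymmetric pair of exponents. All generated entries lie in the fixed number field $K$. On a purified magnitude block $B$, the entries of one row differ by elements of the finite group $\mu(K)$. So if $K_\mu$ is a multiple of its exponent, then $u_z^{K_\mu-1}=\overline{(u_z/u_{z_B})}\,u_{z_B}^{K_\mu-1}$ on $B$. Now take
\[
  C_q(\mathbf x,\mathbf y)=\sum_{z\in D}G(\mathbf x,z)^{(K_\mu-1)q}G(\mathbf y,z)^{q}.
\]
For an overlapping independent pair, the contribution of $B$ is a nonzero constant times $\sum_{z\in B}\zeta_z^q$, where $\zeta_z=(v_z/u_z)/(v_{z_B}/u_{z_B})\in\mu(K)$. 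BO gives $\sum_{z\in B}\zeta_z=0$. Restricting to $q\equiv1$ modulo an integer that kills these phases keeps every non-equivalent pair at zero. The paper takes $q=1+tL_D$, after proving via BO of the generated powers $\Pow{G}{h}$ and Newton's identities that the phases have order dividing $L_D$.

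On dependent pairs, $C_q$ becomes an exponential sum in $t$ whose grouped bases have no root-of-unity ratios, so Skolem--Mahler--Lech leaves only finitely many bad $t$. The uniformity you worry about is not needed. The exponent may depend on $G$, because the single Mal'tsev operation is fixed beforehand from the supports.

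\textbf{Other steps asserted rather than proved.}

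(a) BO speaks about generated tables. So ``rectangularity of every pp-definable relation'' requires each such relation to be the exact support of some generated table. The obvious table $\sum_{\mathbf y}\prod_e G_{i(e)}(\mathbf z_e)$ can lose tuples through cancellation. The paper raises every factor to a common power $m$ chosen via Skolem--Mahler--Lech (no zero progression of $n\mapsto\sum_a a^n$ has residue $0$), and it uses only equality-free formulas. With this in hand, citing Bulatov--Dalmau is a legitimate substitute for the paper's direct induction on the universal polymorphism relation. You must still note three things: singleton-column rectangularity, which is what block-rank-one gives, suffices; equality atoms need separate handling; and you must pass from finite subfamilies to all of $W_{\cF}$, which the paper does using the finiteness of the set of Mal'tsev tables.

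(b) The polynomial reformulation is not automatic, because purified magnitudes and Hermitian sums are not polynomial in the entries. It is the finiteness of $\mu(K)$ that makes it work. ``Equal purified magnitude'' becomes a finite disjunction $a=\zeta b$, and Hermitian block sums become prechecked root-of-unity sums. This gives a finite union of algebraic sets, which the product trick turns into finitely many identities. Nothing is ``summed over tensor powers'': each monomial is itself a character.

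(c) You omit the joint-to-singleton purification reduction (ambient invariance). That same lemma is what transfers a violation to the compressed table.

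(d) Type Partition concerns one table at two prefixes, not compatibility between different generated functions. It follows by applying $m$ coordinatewise to $(\mathbf c,\mathbf c),(\mathbf a,\mathbf a),(\mathbf b,\mathbf a)\in\RowEq G$.
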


For a prescribed integer $\delta\ge1$, let
$\CSPdeg{\delta}{\cF}$ denote the restriction of $\CSP{\cF}$ to instances in
which every variable has occurrence degree divisible by $\delta$, and let
$\WFdeg{\delta}$ be the corresponding generated family.  Lin proved a
dichotomy theorem for $\CSPdeg{\delta}{\cF}$ whose tractable side is
characterized by the corresponding Block Orthogonality, Type Partition, and
common Mal'tsev conditions on $\WFdeg{\delta}$~\cite[Theorem~3]{Lin2021}.
Our decision procedure and structural-collapse argument apply to this
generated family as well.

\begin{corollary}[Uniform recognition for $\#\mathrm{CSP}^{\delta}$]
\label{cor:degree-multiple-main}
Let $\delta\ge1$.  On input a nonempty finite domain $D$ and a finite
algebraic-complex constraint language $\cF$ whose values are given by exact
encodings, there is a deterministic algorithm deciding whether the three
tractability conditions hold on the full generated family
$\WFdeg{\delta}$.  Writing these conditions as
$\BO^\delta(\cF)$, $\TP^\delta(\cF)$, and $\Mal^\delta(\cF)$, we have
\[
  \BO^\delta(\cF)\wedge\TP^\delta(\cF)\wedge\Mal^\delta(\cF)
  \quad\Longleftrightarrow\quad
  \BO^\delta(\cF).
\]
Consequently, one can decide which side of the dichotomy theorem for
$\CSPdeg{\delta}{\cF}$ proved by Lin~\cite[Theorem~3]{Lin2021} applies.
\end{corollary}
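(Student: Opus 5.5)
The plan is to derive the corollary by running the whole argument behind \cref{thm:main} again, after checking that it uses only closure properties of $W_{\cF}$ that $\WFdeg{\delta}$ also has. There are two ingredients. The first is the decision procedure: compression of violations to external arity at most $\abs{D}^2+1$, reduction to finitely many homogeneous polynomial identities, and the tensor-character oracle. The second is the structural collapse: power amplification and the generated support detector, which together produce a common Mal'tsev operation and hence Type Partition. The key point is that $\WFdeg{\delta}$ is the family of partial marginals $F_I^{[t]}$ of $\CSPdeg{\delta}{\cF}$ instances. It is closed under tensor products, since a disjoint union of instances preserves degrees. It is closed under identification of external variables, since merging two variables adds their degrees and both are divisible by $\delta$. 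It is closed under permutation and marginalization, since these do not change degrees.

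\textbf{Step 1: compression.} For the arity compression and the reduction to bounded arity, I will verify that every gadget used to transform a violation involves only identifying variables, reordering, and summing out. Those operations preserve the degree condition, so a violation in $\WFdeg{\delta}$ compresses to a violation in $\WFdeg{\delta}$ of external arity at most $\abs{D}^2+1$.

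\textbf{Step 2: the oracle.} I expect the character and isomorphism step to need the only real modification. The instance monoid must be replaced by its submonoid of degree-$\delta$-divisible instances, and the characters must be restricted to that submonoid. The plan is to encode the degree constraint into the pinned-structure data: an isomorphism of pinned constraint structures preserves occurrence degrees, so the submonoid is a union of isomorphism classes. Young's theorem therefore still reduces equality of the restricted characters to a finite isomorphism test. Linear independence of distinct characters holds on any monoid, so the universal vanishing test still reduces to finitely many coefficient-sum checks. The scalar tags used to resolve tensor collisions can be attached in the same way.

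\textbf{Step 3: the collapse.} For the structural collapse, I must check that every auxiliary function built from one function or a few is itself degree-admissible. This covers the power amplification $F^{\otimes k}$ together with its diagonal identifications, and the support detectors that realize row-equivalence relations as supports. If some gadget gives a variable the wrong degree, the remedy is to replace it by its $\delta$-fold tensor power and identify copies, which multiplies every degree by $\delta$. This replacement keeps supports unchanged. It raises values to the $\delta$-th power only in a controlled way: proportionality and orthogonality of purified rows are unaffected up to the power maps that power amplification already handles. With these closure facts, BO$^\delta$ implies Mal$^\delta$ and TP$^\delta$ by the same argument as in \cref{thm:main}, and the equivalence follows. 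The decision algorithm then decides BO$^\delta$, and Lin's dichotomy \cite[Theorem~3]{Lin2021} turns that answer into the classification.

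The main obstacle is Step 2, where the degree restriction must be built into the monoid and character machinery without losing finiteness of the isomorphism test.
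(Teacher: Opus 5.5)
Your overall decomposition and Step~1 match the paper. The gap is in Step~2, where you expected the difficulty.

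\Cref{thm:young-pinned} characterizes agreement of two pinned partition functions on \emph{all} simple labelled instances. The criterion it returns is a domain bijection $\sigma:A_M\to A_N$ that is an isomorphism of the tensorized families and sends pins to twins, so it concerns the target families, not the instances. That $\mathfrak M_{k,\delta}$ is closed under instance isomorphism is true but does not help. Agreement of $\chi_M$ and $\chi_N$ on $\mathfrak M_{k,\delta}$ is strictly weaker than agreement on $\InstMonoid{k}$, so the isomorphism test computes the wrong equality classes.

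For example, take $D=\{0,1\}$, $\cF=\{f\}$ with $f$ unary, $f(0)=1$, $f(1)=-1$, and $k=1$, $\delta=2$. If the label has occurrence degree $d_\ell$, the pinned sum factorizes over variables and gives $\zvec I(1)=(-1)^{d_\ell}\zvec I(0)$. Hence $\chi_{X_{(0)}}=\chi_{X_{(1)}}$ on $\mathfrak M_{1,2}$. The simple one-constraint instance $f(x_\ell)$ separates them on $\InstMonoid{1}$, and Young's criterion declares them distinct because $0$ and $1$ are not twins. Your oracle would therefore reject the identity $X_{(0)}-X_{(1)}=0$, which holds on all of $\mathfrak M_{1,2}$. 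So the $\BO^\delta$ procedure built on it is no longer exact. Linear independence on the submonoid, which you invoke correctly, only helps once restricted equality of characters is decidable.

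The missing idea is a device that converts restricted agreement into an unrestricted identity. The paper's proof (\cref{lem:degree-filtered-character-comparison}) uses a cyclic degree filter:
\begin{itemize}
  \item Augment every tensorized function by the factor $\omega_\delta^{s_1+\cdots+s_{r_f}}$ on an auxiliary coordinate in $C_\delta=\mathbb Z/\delta\mathbb Z$.
  \item Pin label $t$ to $(\beta_M(t),s_t)$ and sum over $\mathbf s\in C_\delta^k$.
  \item By \eqref{eq:cyclic-character-filter} applied at each variable, this sum equals $\delta^{\abs{V(I)}}\mathbf 1_{\{I\in\mathfrak M_{k,\delta}\}}\chi_M(I)$.
  \item Hence $\chi_M=\chi_N$ on $\mathfrak M_{k,\delta}$ if and only if two sums of $\delta^k$ ordinary characters agree on the full monoid $\InstMonoid{k}$.
  \item That equality is decided by pairwise Young comparisons, using common scalar tags (legitimate because the augmentation factor never vanishes), and then by comparing class multiplicities via \cref{lem:dedekind-independence}.
\end{itemize}

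Separately, your Step~3 fallback is unnecessary. Every structural construction takes fresh copies of degree-multiple presentations and identifies only boundary variables, so all degrees stay divisible by $\delta$ automatically. As described, the fallback would also not be safe:
\begin{itemize}
  \item Identifying only the boundary variables of $\delta$ copies leaves hidden-variable degrees unchanged, so it cannot repair a bad hidden degree.
  \item Identifying all variables replaces $\sum_{\mathbf w}P$ by $\sum_{\mathbf w}P^\delta$, whose support can differ.
\end{itemize}
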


The proof of \cref{cor:degree-multiple-main} is given after the required
degree-filtered identity oracle in
\cref{subsec:degree-multiple-identities,sec:type-collapse}.

\subsection{Proof sketch}
\label{subsec:proof-sketch}

The proof has an algorithmic part, which decides global Block Orthogonality,
and a structural part, which derives Type Partition and the common Mal'tsev
condition from Block Orthogonality.  We indicate below where each step is
proved.
All computations below take place in explicitly represented algebraic number
fields; the required field constructions, conjugation, equality tests, and
root-of-unity computations are standard~\cite{Cohen1993}.  The
Skolem--Mahler--Lech theorem~\cite{Lech1953} establishes termination of two
explicit exact searches by controlling the zero sets of the associated
linear recurrences.

\paragraph{Purification and external-arity compression.}
The reduction to finitely many external arities is carried out in
\cref{sec:purification-compression}.  First,
\cref{lem:purification-invariance} proves that Block Orthogonality of a
purified table is independent of the finite ambient family in which it is
purified; \cref{cor:singleton-purification} therefore reduces the original
joint-purification condition to singleton purification.  A violation is
witnessed by two rows.  The diagonal-minor construction in
\cref{lem:diagonal-minors} identifies row coordinates carrying the same
ordered pair of values on the two witnessing rows and preserves those rows
entry by entry.  Consequently, \cref{lem:bo-arity-bound} shows that every
violation whose external arity exceeds $d^2+1$ can be compressed to a
violation in $W_{\cF}$ of external arity at most $d^2+1$, where
$d=\abs D$.  This bounds the external arity of the compressed
counterexample, not that of the original violating table or the size of an
instance presenting it.

\paragraph{Finite algebraic tests at a fixed arity.}
In \cref{sec:bo-certificates},
\cref{thm:finite-bo-certificates} describes the singleton-purified
block-orthogonal tables as an effectively enumerable finite union of
algebraic sets; the certificate branches explicitly allow entries and
support rectangles to disappear or purified magnitude blocks to merge.
By \cref{cor:bo-product-identities}, universal membership of the generated
$k$-ary tables in this union is equivalent to finitely many polynomial
identities.

The identity oracle is developed in \cref{sec:universal-identities}.
By \cref{lem:pinned-realization}, generated $k$-ary tables are exactly the
pinned partition vectors of finite $k$-labelled instances.
\Cref{lem:tensor-character} realizes every monomial in a polynomial identity
as a multiplicative character of the labelled-instance monoid.  The pinned
form of Young's constraint-function isomorphism theorem, stated as
\cref{thm:young-pinned}, compares the corresponding tensor
targets~\cite[Corollary~35]{Young2025}.  Because
tensorization may identify distinct
input functions, \cref{lem:scalar-tags} adds common positive rational tags;
\cref{lem:character-comparison} then decides equality of the monomial
characters.  Their linear independence is supplied by
\cref{lem:dedekind-independence}, yielding the universal identity oracle in
\cref{thm:universal-identity}.  Combining this oracle with the arity bound
gives \cref{proc:global-bo}, whose termination and correctness on the full
unbounded family $W_{\cF}$ are proved in
\cref{thm:global-bo-decidable}.

\paragraph{From generated supports to one Mal'tsev operation.}
The structural argument begins in \cref{sec:generated-supports}.
\Cref{lem:simultaneous-power-sums,lem:equality-free-support-realization}
use power amplification to realize every equality-free primitive-positive
combination of finitely many generated supports as the support of a
member of $W_{\cF}$, despite possible cancellation.  Assuming $\BO(\cF)$,
the rectangularity argument in \cref{lem:finite-support-maltsev} produces a
common Mal'tsev operation for each finite family of generated supports.
Since the set of operations $D^3\to D$ is finite,
\cref{thm:common-support-maltsev} extracts one operation preserving every
generated support simultaneously.

\paragraph{Realizing row equivalence and obtaining Type Partition.}
Continue under the assumption $\BO(\cF)$.  The remaining structural step
occupies
\crefrange{sec:omega-realization}{sec:type-collapse}.  For a generated table
$G$, \cref{lem:generated-entrywise-powers} realizes the required entrywise
powers inside $W_{\cF}$, and \cref{lem:finite-row-phases} supplies uniform
finite phase control.  Let $K$ be the number field generated by the input
values and their conjugates, let $\mu(K)$ be its finite group of roots of
unity, let $e_K$ be the exponent of $\mu(K)$, and put
\[
  K_\mu=\lcm(2,e_K),
  \qquad L_D=\lcm(1,\ldots,d).
\]
For each fixed $G$, choose a single suitable $t\ge0$ and set
$q=1+tL_D$.  These ingredients give the generated detector
\begin{equation*}
  C_q(\mathbf x,\mathbf y)
  =\sum_{z\in D}
      G(\mathbf x,z)^{(K_\mu-1)q}G(\mathbf y,z)^q.
\end{equation*}
Block Orthogonality keeps the detector zero on every pair outside
$\RowEq G$, while \cref{lem:nondegenerate-exponential-sums} guarantees that
one allowed exponent $q$ can be chosen so that the detector is nonzero on
every dependent pair of nonzero rows.  Thus
\cref{thm:omega-support-realization} proves
$\SuppRel{C_q}=\RowEq G$, where $\SuppRel H$ denotes the support relation of
$H$ and $\RowEq G$ the nonzero-row dependence relation of $G$.
\Cref{cor:omega-reduces-to-supports} then transfers preservation by the common
operation from generated supports to every generated row-equivalence
relation.  Finally, \cref{lem:omega-implies-type} derives Type Partition by
a coordinatewise Mal'tsev calculation, and
\cref{thm:structural-collapse} concludes
$\CC(\cF)\Longleftrightarrow\BO(\cF)$.

\paragraph{The degree-multiple generated family.}
The cyclic degree filter is proved in
\cref{subsec:degree-multiple-identities}, culminating in
\cref{lem:degree-filtered-character-comparison,thm:degree-multiple-universal-identity};
the proof of \cref{cor:degree-multiple-main} then repeats the structural
construction within $\WFdeg{\delta}$.  The pinned realization, support
amplification, entrywise powers, and row detector are finite-instance
constructions; see
\cref{lem:pinned-realization,lem:equality-free-support-realization,lem:generated-entrywise-powers,thm:omega-support-realization}.
These constructions remain in $\WFdeg{\delta}$: fresh variables retain
occurrence degrees divisible by $\delta$, while identifying boundary
variables replaces their degrees by sums of multiples of $\delta$.
Marginalization changes only which variables are retained, not their
occurrence degrees in the underlying instance.

\subsection{Related work}

\paragraph{Counting graph homomorphisms ($\#\mathrm{GH}$).}
For a symmetric $q\times q$ matrix $A$, write $[q]=\{1,\ldots,q\}$.  On an
input graph $G=(V,E)$, the problem $\#\mathrm{GH}(A)$ evaluates
\[
  Z_A(G)=\sum_{\sigma:V(G)\to[q]}
    \prod_{\{u,v\}\in E(G)}A_{\sigma(u),\sigma(v)}.
\]
The problem $\#\mathrm{GH}(A)$ is the special case of $\#\mathrm{CSP}$ with
one symmetric binary constraint and is therefore covered by the
complex-weighted $\#\mathrm{CSP}$ dichotomy theorem of Cai and
Chen~\cite{CaiChen2017}.  Dyer and Greenhill classified $\{0,1\}$-valued
matrices, Bulatov and Grohe
extended the dichotomy to nonnegative matrices, Goldberg, Grohe, Jerrum, and
Thurley handled arbitrary real matrices, and Cai, Chen, and Lu completed the
classification for algebraic complex matrices
\cite{DyerGreenhill2000,DyerGreenhill2004,BulatovGrohe2005,
GoldbergGroheJerrumThurley2010,CaiChenLu2013}.  For algebraic-complex
matrices, the dichotomy theorem of Cai, Chen, and Lu also provides a
polynomial-time recognition algorithm for its tractable
case~\cite[Theorem~1.2]{CaiChenLu2013}.

Restricting the input $G$ to be planar creates additional matchgate-tractable
cases.  Cai and Maran proved a complete dichotomy for every real symmetric
$3\times3$ matrix, with the genuinely planar tractable cases exactly those
obtained by holographic algorithms using matchgates~\cite{CaiMaran2023}.
They subsequently obtained the analogous classification for nonnegative
full-rank $4\times4$ matrices, where tensor products of matchgates give new
tractable families~\cite{CaiMaran2024}.  In a 2026 preprint, Cai, Maran, and
Young proved a dichotomy for nonnegative matrices with pairwise distinct
diagonal entries, and more generally whenever planar edge gadgets separate
every pair of target vertices~\cite{CaiMaranYoung2026}.  Their characterization
of this separation through quantum automorphism groups yields a sharp barrier
to the usual interpolation approach, and a complete dichotomy for arbitrary
nonnegative symmetric target matrices remains open~\cite{CaiMaranYoung2026}.

\paragraph{Counting CSP.}
For general constraint languages, Dyer, Goldberg, and Jerrum classified
nonnegative weighted Boolean $\#\mathrm{CSP}$
\cite{DyerGoldbergJerrum2009}; Bulatov proved the finite-domain unweighted
dichotomy, and Dyer and Richerby subsequently gave an elementary effective
criterion
\cite{Bulatov2013,DyerRicherby2013}.  For algebraic weights, Cai, Chen, and
Lu obtained an effective dichotomy in the nonnegative setting; Cai, Lu, and
Xia classified complex-weighted Boolean languages, and Cai and Chen
established the fixed-language dichotomy over arbitrary finite domains with
algebraic complex weights
\cite{CaiChenLu2016,CaiLuXia2014,CaiChen2017}.  Lin proved a dichotomy
theorem for the degree-divisibility restriction $\#\mathrm{CSP}^{\delta}$
whose tractable side is characterized by the corresponding three infinitary
conditions on $\WFdeg{\delta}$~\cite[Theorem~3]{Lin2021}.  That work left
their uniform recognition open~\cite[p.~40:3]{Lin2021};
\cref{cor:degree-multiple-main} resolves this question as well.  Existing
effective criteria either exclude complex cancellation,
restrict the domain to Boolean, or restrict the language to a single symmetric
binary constraint; none provides, from an arbitrary finite-domain
algebraic-complex input, a uniform decision procedure for the three
infinitary tractability conditions in the complex-weighted $\#\mathrm{CSP}$
dichotomy theorem of Cai and Chen~\cite[Section~3.1]{CaiChen2017}.  This is
the recognition gap closed here.  Young's constraint-function isomorphism
result for labelled instances~\cite[Corollary~35]{Young2025} supplies our
exact comparison of tensor characters.

\paragraph{Holant.}
Valiant's holographic algorithms introduced basis transformations for
tensor-network partition functions~\cite{Valiant2008}, and Cai, Lu, and Xia
formulated Holant as a general counting framework containing $\#\mathrm{CSP}$
and graph homomorphisms~\cite{CaiLuXia2009}.  On the Boolean domain, a
sequence of dichotomies progressed from symmetric and auxiliary-signature
settings to arbitrary nonnegative signatures, complex-valued
$\operatorname{Holant}^{c}$, and finally
arbitrary real-valued signatures
\cite{CaiGuoWilliams2016,LinWang2018,
Backens2021,Backens2025,ShaoCai2020}.  The full complex-valued Boolean
classification remains open~\cite{MengWangXiaZheng2025,GuanShaoShi2026}.
Meng, Wang, Xia, and Zheng obtained an
$\mathsf{FP}^{\mathsf{NP}}$-versus-$\#\mathsf P$-hard classification in the
presence of a nontrivial odd-arity signature~\cite{MengWangXiaZheng2025}.
Guan, Shao, and Shi subsequently gave polynomial-time algorithms for every
case on the $\mathsf{FP}^{\mathsf{NP}}$ side, yielding a genuine
$\mathsf{FP}$-versus-$\#\mathsf P$-hard dichotomy in this
regime~\cite{GuanShaoShi2026}.  Hence any remaining open case may be taken to
have only even-arity nonzero signatures.

Beyond the Boolean domain, Cai, Lu, and Xia classified
$\operatorname{Holant}^{*}(f)$ for a single complex-valued symmetric ternary
signature on domain size three~\cite{CaiLuXia2013}, while Cai, Guo, and
Williams obtained dichotomies for counting edge colorings and certain
associated higher-domain Holant problems~\cite{CaiGuoWilliams2014}.  Liu,
Fan, and Cai later obtained
single-signature dichotomies for real-valued symmetric ternary signatures on
domain size three and $\{0,1\}$-valued symmetric ternary signatures on domain
size four~\cite{LiuFanCai2025}.  Cai and Ihm proved a
dichotomy for $\operatorname{Holant}^{*}_{3}(\mathcal F)$, where all unary
signatures are freely available and $\mathcal F$ is an arbitrary set of
symmetric real-valued signatures~\cite{CaiIhm2025}.
Despite these advances, a general finite-domain Holant dichotomy remains
unknown~\cite{CaiIhm2025}.

\paragraph{Decision CSP.}
For the decision problem, Schaefer established the Boolean dichotomy
\cite{Schaefer1978}.  Feder and Vardi placed finite-domain CSP in a broad
logical framework and formulated the dichotomy conjecture that every fixed
finite constraint language yields either a polynomial-time problem or an
$\mathsf{NP}$-complete one~\cite{FederVardi1998}.  The universal-algebraic
approach relates CSP complexity to polymorphisms
\cite{BulatovJeavonsKrokhin2005}; Mal'tsev polymorphisms also underlie
effective counting-CSP algorithms~\cite{DyerRicherby2013}.  Bulatov and Zhuk
independently proved the full finite-domain dichotomy in
2017~\cite{Bulatov2017,Zhuk2017}; see also Zhuk's
full journal account~\cite{Zhuk2020}.  Those decision results concern
supports alone; the present recognition problem must additionally control
magnitudes, phases, and zeros created by complex cancellation.

\subsection{Organization}

\Cref{sec:preliminaries} formalizes the three tractability conditions in the
complex-weighted $\#\mathrm{CSP}$ dichotomy theorem of Cai and
Chen~\cite[Section~3.1]{CaiChen2017} and specifies the exact input model,
including the group-theoretic notation used later.
\Crefrange{sec:purification-compression}{sec:bo-algorithm} develops the
decision procedure for global Block Orthogonality; the degree filter for
$\#\mathrm{CSP}^{\delta}$ appears in
\cref{subsec:degree-multiple-identities}.
\Crefrange{sec:generated-supports}{sec:type-collapse} proves the structural
implication in~\eqref{eq:main-equivalence}.  \Cref{sec:conclusion} summarizes
the results.

\section{Preliminaries and the Three Tractability Conditions}
\label{sec:preliminaries}

This section formalizes the three tractability conditions in the
complex-weighted $\#\mathrm{CSP}$ dichotomy theorem of Cai and
Chen~\cite[Sections~2.8--2.9 and~3.1]{CaiChen2017}.

\subsection{Algebraic, group-theoretic, and character notation}

For an integer $n\ge 1$, let $[n]=\{1,\ldots,n\}$; set
$[0]=\varnothing$ and $D^0=\{\emptytuple\}$, where $\emptytuple$ is the empty
tuple.  Throughout, $D$ is a nonempty finite domain and $d=\abs D$.
We write $\AlgNums\subseteq\mathbb C$ for the field of algebraic complex
numbers.  Each input value has an exact encoding that includes its selected
embedding into $\mathbb C$, for example a square-free polynomial over
$\mathbb Z$ and a rational isolating rectangle.  Standard number-field
algorithms can therefore perform arithmetic, conjugation, equality tests,
and zero tests exactly~\cite{Cohen1993}.

We collect the elementary group notation used below.  If $L$ is a field,
then
\[
  L^\times=L\setminus\{0\}
\]
is its multiplicative group.  For $S\subseteq L^\times$, the subgroup
generated by $S$ is
\[
  \begin{aligned}
  \langle S\rangle
  =\biggl\{&\prod_{a\in T}a^{n_a}:
      T\subseteq S\text{ is finite},\\
      &n_a\in\mathbb Z\text{ for every }a\in T\biggr\}.
  \end{aligned}
\]
We write $H\le\Gamma$ when $H$ is a subgroup of the abelian group
$\Gamma$.  A group homomorphism from a multiplicative group to an additive
group satisfies $e(xy)=e(x)+e(y)$; its kernel is
$\ker e=\{x:e(x)=0\}$.  The additive group $\mathbb Z^m$ is torsion-free:
$N\mathbf v=0$ with $N\ge1$ implies $\mathbf v=0$.

An element $x$ of a group is \emph{torsion} if $x^N=1$ for some $N\ge1$.
For an abelian group $\Gamma$, its torsion elements form the subgroup
\[
  \Tor(\Gamma)=\{x\in\Gamma:x^N=1\text{ for some }N\ge1\}.
\]
If $\psi:\Gamma\to\mathbb C^\times$ is a homomorphism, we say that
$\psi$ \emph{fixes torsion pointwise} if
\begin{equation}
  \psi(\xi)=\xi
  \qquad\text{for every }\xi\in\Tor(\Gamma).
  \label{eq:fixes-torsion-definition}
\end{equation}
This differs from two other phrases used below.  An additive homomorphism
$e:\Gamma\to\mathbb Z^m$ \emph{annihilates torsion} when
$e(\xi)=0$ for every $\xi\in\Tor(\Gamma)$, whereas an integer $N$
\emph{kills torsion} in a finite subgroup $T$ when $\xi^N=1$ for every
$\xi\in T$.
Let $\mu_\infty\le\AlgNums^\times$ be the group of all algebraic roots of
unity.  Thus, for $\Gamma\le\AlgNums^\times$,
\begin{equation*}
  \Tor(\Gamma)=\Gamma\cap\mu_\infty.
\end{equation*}
If $K$ is a number field, then
$\mu(K)=K^\times\cap\mu_\infty$ is a finite cyclic group.  The
\emph{exponent} of a finite group is the least positive integer $e$ such
that $x^e=1$ for every group element $x$; equivalently, it is the least
common multiple of the element orders.  We write $\lcm$ for least common
multiple.

Because every group here is abelian, $\Tor(\Gamma)$ is normal and the
quotient $\Gamma/\Tor(\Gamma)$ is defined.  Its elements are cosets
$[x]=x\Tor(\Gamma)$; two elements determine the same coset exactly when
their quotient is torsion.  If a homomorphism
$e:\Gamma\to\mathbb Z^m$ has kernel $\Tor(\Gamma)$, it induces the
injective homomorphism
\[
  \Gamma/\Tor(\Gamma)\hookrightarrow\mathbb Z^m,
  \qquad [x]\longmapsto e(x).
\]

A \emph{monoid} is a set with an associative multiplication and an identity
element.  All monoids below are commutative; unless displayed explicitly in
additive notation, they are written multiplicatively.
A unital multiplicative \emph{character} from a monoid $M$ to a field $L$ is
a map
\[
  \chi:M\to L,
  \qquad
  \chi(1)=1,
  \qquad
  \chi(xy)=\chi(x)\chi(y).
\]
Thus a character is a numerical observable that turns multiplication in the
monoid into multiplication in the field.  Two characters are \emph{equal}
when they agree on every element of $M$; syntactically different formulas
may therefore define the same character.  Pointwise products of characters
are again characters, and the complex conjugate of a complex-valued
character is again a character.  A character on a monoid may take the value
zero.  For example, on the additive monoid $\mathbb Z_{\ge0}$, the map that sends
$0$ to $1$ and every positive integer to $0$ is a character.  By contrast,
a character on a group cannot vanish, because
$\chi(x)\chi(x^{-1})=\chi(1)=1$.

As a basic example, let $M=(\mathbb Z_{\ge0}^r,+,\mathbf 0)$ and choose
$\boldsymbol\lambda=(\lambda_1,\ldots,\lambda_r)\in(L^\times)^r$.  Then
\begin{equation*}
  \chi_{\boldsymbol\lambda}(\mathbf n)
  =\prod_{j=1}^r\lambda_j^{n_j}
\end{equation*}
is a character: addition of exponent vectors multiplies the displayed
values.  The coordinate functions of the pinned partition vectors introduced
in \cref{subsec:labelled-instances} may vanish, but they satisfy the same
multiplicativity law and hence are characters.  Their complex conjugates and
finite pointwise products are characters as well.

We will also use the following elementary principle, stated formally and
proved in \cref{lem:dedekind-independence}: distinct unital multiplicative
characters from a monoid to a characteristic-zero field are linearly
independent as functions on that monoid.  Hence, for a finite expression
\[
  \sum_i c_i\chi_i,
\]
one first partitions the $\chi_i$ into classes of equal characters.  The
expression vanishes on every monoid element if and only if, in each class,
the sum of its coefficients $c_i$ is zero.  This distinction between a
formal expression and the function it induces is the basis of the universal
identity test in \cref{thm:universal-identity}.

Finally, for an integer $\delta\ge1$, write
$C_\delta=\mathbb Z/\delta\mathbb Z$ and choose a primitive
$\delta$th root of unity $\omega_\delta$ (with $\omega_1=1$).  For fixed
$n\in\mathbb Z$, the map $s\mapsto\omega_\delta^{sn}$ is a character of the
additive group $C_\delta$.  It is the constant-one character exactly when
$\delta\mid n$; otherwise its values form a nontrivial geometric progression
whose sum is zero.  Thus
\begin{equation}
  \sum_{s\in C_\delta}\omega_\delta^{sn}
  =
  \begin{cases}
    \delta,&\delta\mid n,\\
    0,&\delta\nmid n.
  \end{cases}
  \label{eq:cyclic-character-filter}
\end{equation}

\subsection{Constraint instances and generated tables}

A weighted constraint language $\cF$ over $D$ is a finite set of typed
functions $f:D^{r_f}\to\AlgNums$ with $r_f\ge1$.  If the input
is presented as a list, we identify exact duplicate typed tables; the
generated family is unchanged.  An instance $I$ of
$\CSP{\cF}$ consists of a variable set
$V(I)=\{x_1,\ldots,x_n\}$, where $n(I)=\abs{V(I)}=n\ge1$, and a finite
multiset $\mathcal C_I$ of constraints $(f;i_1,\ldots,i_{r_f})$, where
$f\in\cF$ and $i_1,\ldots,i_{r_f}\in[n]$.  A scope is ordered and its
indices need not be distinct.  The instance function $F_I:D^n\to\AlgNums$
is
\begin{equation}
  F_I(a_1,\ldots,a_n)
  =\prod_{(f;i_1,\ldots,i_{r_f})\in\mathcal C_I}
      f(a_{i_1},\ldots,a_{i_{r_f}}),
  \label{eq:instance-function}
\end{equation}
where multiplicities are retained and the empty product equals $1$.
For $t\in[n]$, define
\begin{equation}
  F_I^{[t]}(a_1,\ldots,a_t)
  =\sum_{(a_{t+1},\ldots,a_n)\in D^{n-t}}
      F_I(a_1,\ldots,a_n).
  \label{eq:partial-marginal}
\end{equation}
Following Cai and Chen~\cite{CaiChen2017}, the generated family is
\begin{equation*}
  W_{\cF}
  =\left\{F_I^{[t]}:
    \begin{array}{l}
      I\text{ is a finite }\CSP{\cF}\text{ instance},\\
      t\in[n(I)]
    \end{array}
    \right\}.
\end{equation*}
The variable ordering in an instance is arbitrary, so
\eqref{eq:partial-marginal} retains any prescribed ordered subset after
relabeling.  Repeated scope indices permit diagonal minors and arbitrary
identifications.  Instances may contain arbitrarily many variables,
constraints, hidden variables, and retained variables.

For $\delta\ge1$, the \emph{occurrence degree} of a variable is the number
of positions in all constraint scopes occupied by that variable, counting
constraint multiplicity and repeated positions.  Following
Lin~\cite{Lin2021}, $\CSPdeg{\delta}{\cF}$ denotes the restriction of
$\CSP{\cF}$ to instances in which every variable has occurrence degree
divisible by $\delta$.  Its
generated family is
\begin{equation*}
  \WFdeg{\delta}
  =\left\{F_I^{[t]}:
    \begin{array}{l}
      I\text{ is a finite }\CSPdeg{\delta}{\cF}\text{ instance},\\
      t\in[n(I)]
    \end{array}
    \right\}.
\end{equation*}
For $\delta=1$ this is $W_{\cF}$.  Isolated variables have degree zero and
are permitted.  We allow $\cF$ to be empty; its degenerate case is handled
after \cref{def:type-maltsev}, once all three conditions and the relations
they use have been defined.

\subsection{Rows and Block Orthogonality}

For every table $G:D^r\to\AlgNums$, write $\arity(G)=r$.  Following Cai and
Chen~\cite[Sections~2.3 and~2.5]{CaiChen2017}, suppose $r\ge2$ and regard
$G$ as the
$d^{r-1}\times d$ matrix whose row indexed by
$\mathbf x\in D^{r-1}$ is
\[
  G(\mathbf x,\ast)=\bigl(G(\mathbf x,z)\bigr)_{z\in D}.
\]
The nonzero row indices carry the equivalence relation
\begin{equation}
  \mathbf x\sim_G\mathbf y
  \quad\Longleftrightarrow\quad
  \begin{gathered}
    G(\mathbf x,\ast)\text{ and }G(\mathbf y,\ast)\text{ are}\\
    \text{linearly dependent over }\mathbb C.
  \end{gathered}
  \label{eq:complex-row-equivalence}
\end{equation}
Let $S_1,\ldots,S_s$ be its equivalence classes.  For each $j\in[s]$,
let $\mathbf v_j$ be the unique vector proportional to the rows in $S_j$
whose first nonzero entry, with respect to a fixed ordering of $D$, is $1$.
The family
\begin{equation*}
  \Row(G)=\{(S_1,\mathbf v_1),\ldots,(S_s,\mathbf v_s)\}
\end{equation*}
is the \emph{row representation} of $G$.  Distinct representatives are
pairwise nonproportional, although they may be linearly dependent as a
family.  If
$G$ is the zero table, then $s=0$ and $\Row(G)=\varnothing$.

For a table or vector $H$, write $\abs H$ for its entrywise absolute value.
For a vector $\mathbf u\in\mathbb C^D$, write
$\supp(\mathbf u)=\{z\in D:u_z\ne0\}$.  The complex
relation~\eqref{eq:complex-row-equivalence} must be distinguished from the
magnitude-row relation.  The function $G$ is \emph{block-rank one} if the
nonnegative matrix $\abs G$ is block-rank one: for every two nonzero rows,
their magnitude vectors are positively proportional or their supports are
disjoint.  Equivalently, the nonzero support is a disjoint union of complete
row-by-column rectangles, and within each rectangle the magnitude rows are
proportional.

Let $\mathbf u,\mathbf v\in\mathbb C^D$ be nonzero and suppose that
$\abs{\mathbf u}$ and $\abs{\mathbf v}$ are proportional.  Their supports
coincide; call the common support $T$.  Let
$\mu_1>\cdots>\mu_h>0$ be the distinct values of $\abs{u_z}$ on $T$, and
put
\[
  T_i=\{z\in T:\abs{u_z}=\mu_i\},\qquad i\in[h].
\]
The same partition is obtained from $\mathbf v$.  The two vectors are
\emph{block-orthogonal}, written
$\mathbf u\perp_{\mathrm B}\mathbf v$, if
\begin{equation}
  \sum_{z\in T_i}u_z\overline{v_z}=0
  \qquad\text{for every }i\in[h].
  \label{eq:vector-block-orthogonality}
\end{equation}

\begin{definition}[Block-orthogonal function]
\label{def:block-orthogonal-function}
A function $G:D^r\to\AlgNums$, $r\ge2$, is \emph{block-orthogonal} if it
is block-rank one and, whenever two nonzero complex rows have proportional
magnitude vectors, those rows are either linearly dependent or
block-orthogonal in the sense of~\eqref{eq:vector-block-orthogonality}.
\end{definition}

\paragraph{A generated block-rank-one example.}
Let $D=\{0,1,2,3\}$ and let the binary function $A$ be
\[
\begin{array}{c|rrrr}
  x\backslash z&0&1&2&3\\ \hline
  0&1&2&0&0\\
  1&3&6&0&0\\
  2&0&0&1&4\\
  3&0&0&2&8
\end{array}.
\]
For the language $\cF_A=\{A\}$, the instance consisting of the single
constraint $A(x,z)$ has $F_I^{[2]}=A$, so $A\in W_{\cF_A}$.  The first two
magnitude rows are proportional, as are the last two, while rows from
different pairs have disjoint supports.  Equivalently, the support is the
disjoint union of the rectangles
\[
  \{0,1\}\times\{0,1\}
  \qquad\text{and}\qquad
  \{2,3\}\times\{2,3\}.
\]
Thus $A$ is block-rank one.  In this example every pair of
magnitude-proportional complex rows is already linearly dependent, which
establishes Block Orthogonality directly.

\paragraph{A generated block-orthogonality example.}
On the same domain, let
\[
B=
\begin{array}{c|rrrr}
  x\backslash z&0&1&2&3\\ \hline
  0&2&2&1&1\\
  1&2&-2&1&-1\\
  2&0&0&0&0\\
  3&0&0&0&0
\end{array}.
\]
Again $B\in W_{\{B\}}$ by the one-constraint instance.  Its two nonzero
rows
\[
  \mathbf u=(2,2,1,1),\qquad
  \mathbf v=(2,-2,1,-1)
\]
are not linearly dependent, but
$\abs{\mathbf u}=\abs{\mathbf v}=(2,2,1,1)$.  Their equal-magnitude blocks
are $T_1=\{0,1\}$ and $T_2=\{2,3\}$, and cancellation holds separately on
both blocks:
\[
  \sum_{z\in T_1}u_z\overline{v_z}=4-4=0,
  \qquad
  \sum_{z\in T_2}u_z\overline{v_z}=1-1=0.
\]
Thus $\mathbf u\perp_{\mathrm B}\mathbf v$ and $B$ is block-orthogonal.
The displayed table is already pure, and the legal one-element generating
set $g_1=2$, assigned to $p_1=2$, leaves it unchanged under singleton
purification.

\subsection{Joint purification}

A function is
\emph{pure} if each of its values is a nonnegative integer times a root of
unity.  We use the notions of pure functions and generating sets, together
with the Purification Lemma, from Cai and
Chen~\cite[Definitions~2--3 and Lemma~7]{CaiChen2017}.  Given a finite tuple
$\mathbf G=(G_1,\ldots,G_h)$ with $h\ge1$, let $A(\mathbf G)$ be the set of
its nonzero values.  A \emph{generating set} for $A(\mathbf G)$ is a tuple
$g_1,\ldots,g_m$, where $m\ge0$, satisfying the following three conditions:
\begin{enumerate}
  \item $g_i\in\mathbb Q(A(\mathbf G))^\times$ for every $i\in[m]$;
  \item for every nonzero
        $\mathbf k=(k_1,\ldots,k_m)\in\mathbb Z^m$,
        \[
          \prod_{i=1}^m g_i^{k_i}\notin\mu_\infty;
        \]
  \item every $a\in A(\mathbf G)$ has a unique exponent vector
        $e(a)=(e_1(a),\ldots,e_m(a))\in\mathbb Z^m$ such that
\begin{equation}
  a=\vartheta(a)\prod_{i=1}^m g_i^{e_i(a)},
  \qquad \vartheta(a)\in\mu_\infty.
  \label{eq:multiplicative-decomposition}
\end{equation}
\end{enumerate}
Here $\mathbb Q(A(\mathbf G))$ is the smallest subfield of $\AlgNums$
containing $\mathbb Q$ and all elements of $A(\mathbf G)$, and
$\mathbb Q(A(\mathbf G))^\times
=\mathbb Q(A(\mathbf G))\setminus\{0\}$ is its multiplicative group.
When $A(\mathbf G)$ is nonempty, such a generating set
exists~\cite[Lemma~7.2]{CaiChenLu2013}; see
also~\cite[Lemma~8]{CaiChen2017}.  Let $p_1,\ldots,p_m$ be the first $m$
positive primes.
For every nonzero value $a$, put
\[
  q(a)=\prod_{i=1}^m p_i^{e_i(a)}\in\mathbb Q_{>0}.
\]
The transformation displayed in the proof of the Purification
Lemma~\cite[Lemma~7]{CaiChen2017} replaces each nonzero $a$ by
$\vartheta(a)q(a)$; zeros remain zero.  Write $P_i^0$ for the resulting table
corresponding to $G_i$.
Let $N_i$ be the least positive integer such that
$N_iq(a)\in\mathbb Z_{>0}$ for every nonzero value $a$ of $G_i$, taking
$N_i=1$ when $G_i$ is zero, and put $P_i=N_iP_i^0$.  The latter table-wide
normalization clears the denominators introduced by negative exponents and
makes $P_i$ pure.  Positive table-wide scaling
changes neither support nor any clause of Block Orthogonality, so the printed
and normalized forms define the same global condition.

Fix one legal generating-set choice whenever $A(\mathbf G)$ is nonempty and
define
\[
  \Pure(\mathbf G)=(P_1,\ldots,P_h),
\]
using the pure, table-wise normalized outputs.  If every nonzero value is a
root of unity, the empty generating list is legal, and the fixed mapping uses
it.  If $A(\mathbf G)=\varnothing$, define $\Pure(\mathbf G)$ to be the
all-zero tuple.  We call any printed or normalized output obtained from a
legal generating-set choice a \emph{legal purification}.
\Cref{lem:purification-invariance} proves that the relevant structure is
independent of these choices.

\begin{definition}[Global Block Orthogonality]
\label{def:global-bo}
The language $\cF$ satisfies the \emph{Block Orthogonality condition} of Cai
and Chen~\cite[Section~3.1]{CaiChen2017}, denoted $\BO(\cF)$, if for every
nonempty finite tuple
$(G_1,\ldots,G_h)$ of pairwise distinct members of $W_{\cF}$, writing
\[
  \Pure(G_1,\ldots,G_h)=(P_1,\ldots,P_h),
\]
the table $P_i$ is block-orthogonal for every $i$ with
$\arity(G_i)\ge2$.
\end{definition}

The quantifier in \cref{def:global-bo} is joint.
\Cref{cor:singleton-purification} proves its equivalence with the
corresponding singleton condition.

\subsection{Types and derived relations}

Following Cai and Chen~\cite[Sections~2.8--2.9 and~3.1]{CaiChen2017}, let
$G:D^r\to\AlgNums$, $r\ge2$, have row representation
$\{(S_j,\mathbf v_j):j\in[s]\}$.  For $\ell\in[r-1]$ and
$\boldsymbol\alpha\in D^\ell$, define
\begin{equation*}
  \typeop_G(\boldsymbol\alpha)
  =\left\{j\in[s]:
       \text{some }\mathbf x\in S_j\text{ has prefix }
       \boldsymbol\alpha\right\}.
\end{equation*}
The map is a \emph{type-partition map} if
\begin{equation*}
  \begin{aligned}
  &\forall \ell\in[r-1]\ \forall
    \boldsymbol\alpha,\boldsymbol\beta\in D^\ell,\\
  &\qquad
    \typeop_G(\boldsymbol\alpha)=\typeop_G(\boldsymbol\beta),\\[-2pt]
  &\qquad\text{or}\qquad
    \typeop_G(\boldsymbol\alpha)
      \cap\typeop_G(\boldsymbol\beta)=\varnothing.
  \end{aligned}
\end{equation*}
Set $\typeop_G(\emptytuple)=[s]$.  If $G$ is the zero table, then $s=0$, so
every type is empty and Type Partition holds.

\paragraph{A generated Type Partition example.}
Regard $D=\{0,1,2\}$ as the additive group $\mathbb Z_3$, and define
$G:D^3\to\AlgNums$ by
\[
\begin{array}{c|rrr}
  (x_1,x_2)\backslash z&0&1&2\\ \hline
  (0,0)&1&0&0\\
  (0,1)&0&1&0\\
  (0,2)&0&0&1\\
  (1,0)&0&2&0\\
  (1,1)&0&0&2\\
  (1,2)&2&0&0\\
  (2,0)&0&0&4\\
  (2,1)&4&0&0\\
  (2,2)&0&4&0
\end{array}.
\]
Equivalently,
\begin{equation}
  G(x_1,x_2,z)\ne0
  \quad\Longleftrightarrow\quad
  z\equiv x_1+x_2\pmod 3,
  \label{eq:type-example-modular-support}
\end{equation}
and the unique nonzero entry in a row has value $2^{x_1}$.
The one-constraint instance gives $G\in W_{\{G\}}$.  Its three nonzero
complex row classes are
\[
\begin{aligned}
  S_1&=\{(0,0),(1,2),(2,1)\},& \mathbf v_1&=(1,0,0),\\
  S_2&=\{(0,1),(1,0),(2,2)\},& \mathbf v_2&=(0,1,0),\\
  S_3&=\{(0,2),(1,1),(2,0)\},& \mathbf v_3&=(0,0,1).
\end{aligned}
\]
For prefixes of length one,
\[
  \typeop_G(0)=\typeop_G(1)=\typeop_G(2)=\{1,2,3\}.
\]
Thus one type contains three row-class labels and occurs at all three
prefixes.  For prefixes of length two, if $a+b\equiv j\pmod3$ with
$j\in\{0,1,2\}$, then
\[
  \typeop_G(a,b)=\{j+1\}.
\]
Hence, at each fixed prefix length, any two types are equal or disjoint.
Thus a type records row-class labels; domain values and individual rows enter
through the labels they realize.  The modular form of this example will also let
the same table illustrate the Mal'tsev condition below.

For every $G:D^r\to\AlgNums$, define its support relation
\begin{equation*}
  \SuppRel G=\{\mathbf a\in D^r:G(\mathbf a)\ne0\}.
\end{equation*}
For $r\ge2$, define the $2(r-1)$-ary row-equivalence relation
\begin{equation*}
  \begin{aligned}
    \RowEq G
    &=\bigcup_{j=1}^s(S_j\times S_j)\\
    &=\left\{(\mathbf x,\mathbf y)\in D^{r-1}\times D^{r-1}:
      \begin{array}{l}
        G(\mathbf x,\ast),G(\mathbf y,\ast)\text{ are nonzero}\\
        \text{and linearly dependent}
      \end{array}
      \right\}.
  \end{aligned}
\end{equation*}
Both types and $\RowEq G$ are defined from the complex row classes of the
original table $G$.

An operation $m:D^3\to D$ \emph{preserves} a relation $R\subseteq D^q$ if
coordinatewise application of $m$ to any three tuples of $R$ returns a tuple
of $R$.  It is a \emph{Mal'tsev operation} if
\begin{equation*}
  m(a,b,b)=m(b,b,a)=a
  \qquad\text{for all }a,b\in D.
\end{equation*}
Set
\begin{equation*}
  \begin{aligned}
  \Lambda_{\cF}
  ={}&\{\SuppRel G:G\in W_{\cF}\}\\
     &{}\cup
       \{\RowEq G:G\in W_{\cF},\ \arity(G)\ge2\}.
  \end{aligned}
\end{equation*}
$\Lambda_{\cF}$ is a family of relations of varying arities, and preservation
is required separately for every $\SuppRel G$ and every applicable
$\RowEq G$.

\begin{definition}[Type Partition and the common Mal'tsev condition]
\label{def:type-maltsev}
The language $\cF$ satisfies $\TP(\cF)$ if every generated table of arity at
least two has a type-partition map.  It satisfies $\Mal(\cF)$ if
\begin{equation*}
  \begin{aligned}
  \exists m:D^3\to D\quad &m\text{ is a Mal'tsev operation},\\
  &\forall R\in\Lambda_{\cF},\quad m\text{ preserves }R.
  \end{aligned}
\end{equation*}
The order of these quantifiers is essential: one operation must preserve the
entire family.
\end{definition}

\paragraph{The Type Partition table as a Mal'tsev-preservation example.}
Continue with the same $D=\mathbb Z_3$ and the same generated table $G$
from the preceding Type Partition example.  Consider the affine operation
\[
  m(a,b,c)=a-b+c\pmod3.
\]
It is Mal'tsev because
\[
  m(a,b,b)=a-b+b=a,
  \qquad
  m(b,b,a)=b-b+a=a.
\]
By~\eqref{eq:type-example-modular-support}, the support relation is the
nontrivial affine relation
\[
  \SuppRel G
  =\{(x_1,x_2,z)\in D^3:z\equiv x_1+x_2\pmod3\}.
\]
Take any three tuples
$\mathbf u^{(i)}=(x_1^{(i)},x_2^{(i)},z^{(i)})\in\SuppRel G$,
$i\in\{1,2,3\}$, and let
$\mathbf u=(u_1,u_2,u_3)$ be their coordinatewise image under $m$.  Then
\[
\begin{aligned}
  u_3
  &\equiv z^{(1)}-z^{(2)}+z^{(3)}\\
  &\equiv
    \bigl(x_1^{(1)}-x_1^{(2)}+x_1^{(3)}\bigr)
    +\bigl(x_2^{(1)}-x_2^{(2)}+x_2^{(3)}\bigr)\\
  &\equiv u_1+u_2\pmod3.
\end{aligned}
\]
Thus $\mathbf u\in\SuppRel G$, so $m$ preserves the support.

The row classes displayed above also give the direct description
\[
  \RowEq G
  =
  \left\{
  \begin{array}{l}
    (x_1,x_2,y_1,y_2)\in D^4:\\
    x_1+x_2\equiv y_1+y_2\pmod3
  \end{array}
  \right\}.
\]
Indeed, two rows are proportional exactly when their unique nonzero entries
occur in the same column.  Take three tuples
\[
  \mathbf r^{(i)}
  =(x_1^{(i)},x_2^{(i)},y_1^{(i)},y_2^{(i)})\in\RowEq G,
  \qquad i\in\{1,2,3\}.
\]
Put
\[
  s_x^{(i)}=x_1^{(i)}+x_2^{(i)},
  \qquad
  s_y^{(i)}=y_1^{(i)}+y_2^{(i)}.
\]
Then $s_x^{(i)}\equiv s_y^{(i)}\pmod3$ for each $i$.  If the coordinatewise
image of the three tuples is $(X_1,X_2,Y_1,Y_2)$, linearity modulo three
gives
\[
\begin{aligned}
  X_1+X_2
  &\equiv s_x^{(1)}-s_x^{(2)}+s_x^{(3)}\\
  &\equiv s_y^{(1)}-s_y^{(2)}+s_y^{(3)}\\
  &\equiv Y_1+Y_2\pmod3.
\end{aligned}
\]
Therefore $(X_1,X_2,Y_1,Y_2)\in\RowEq G$.  The same operation $m$
preserves $\SuppRel G$ and $\RowEq G$ \emph{separately}, so this one table
simultaneously illustrates Type Partition and the single-table content of
the Mal'tsev condition.  The global predicates $\TP(\cF)$ and $\Mal(\cF)$
still quantify over all applicable tables and relations in $W_{\cF}$.

For $\delta\ge1$, define $\BO^\delta(\cF)$,
$\TP^\delta(\cF)$, and $\Mal^\delta(\cF)$ by the same definitions with
$W_{\cF}$ replaced throughout by $\WFdeg{\delta}$, and put
\[
  \CC^\delta(\cF)
  =\BO^\delta(\cF)\wedge\TP^\delta(\cF)\wedge\Mal^\delta(\cF).
\]

In both the ordinary and degree-$\delta$ settings, Type Partition and the
Mal'tsev condition are considered under the corresponding Block Orthogonality
condition.  Types, $\SuppRel G$, and $\RowEq G$ are formed from the original
generated table; purification occurs only in the corresponding Block
Orthogonality condition.  If $G:D^r\to\AlgNums$ is the zero table, then
$\SuppRel G=\varnothing$ and, when $r\ge2$,
$\RowEq G=\varnothing$; every operation preserves these empty relations
vacuously.

We now discharge the empty-language case.  If $\cF=\varnothing$, then
$F_I\equiv1$, and every generated table $G=F_I^{[t]}$ is the positive
constant table
\[
  G(a_1,\ldots,a_t)=d^{n(I)-t}.
\]
Consequently $\SuppRel G=D^t$.  When $t\ge2$, all rows are nonzero and
proportional, there is exactly one complex row class, and
$\RowEq G=D^{2(t-1)}$.  Every legal purification is again a nonzero constant
table, so Block Orthogonality holds; since $G$ itself is constant, Type
Partition holds as well.  The operation
\[
  m_0(a,b,c)=
  \begin{cases}
    c,&a=b,\\
    a,&a\ne b
  \end{cases}
\]
is Mal'tsev and preserves every full relation.  Thus $\CC(\cF)$ holds; the
same argument applies to every $\CC^\delta(\cF)$ because constraint-free
instances have degree zero.  The decision procedures accept this case
immediately, and henceforth we may assume $\cF\ne\varnothing$.

\section{Purification Invariance and Arity Compression}
\label{sec:purification-compression}

We first remove the ambient-family dependence of purification.

\begin{lemma}[Ambient invariance of purification]
\label{lem:purification-invariance}
Let $A:D^r\to\AlgNums$, where $r\ge2$, be a table that occurs as a
distinguished component of each of two finite ambient tuples.  Legally purify
the two tuples, and let $P$ and $Q$ be the two purified tables occupying the
distinguished position formerly occupied by $A$.  Either printed
purifications or their positive table-wide normalizations may be used.  Then:
\begin{enumerate}
  \item the two purified tables have the same support;
  \item disjointness and positive proportionality of every pair of magnitude
        rows agree under $P$ and $Q$;
  \item complex linear dependence of every pair of nonzero rows agrees under
        $P$ and $Q$; and
  \item for proportional magnitude rows, the two purifications induce the
        same equal-magnitude blocks, and the Hermitian sum on each such block
        vanishes under $P$ if and only if it vanishes under $Q$.
\end{enumerate}
In particular, $P$ is block-orthogonal if and only if $Q$ is
block-orthogonal.
\end{lemma}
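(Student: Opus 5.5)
The plan is to compare the two purifications through a common refinement.

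Let $A(\mathbf G)$ and $A(\mathbf H)$ be the nonzero value sets of the two ambient tuples, with legal generating sets $g_1,\ldots,g_m$ and $h_1,\ldots,h_n$. Both contain the nonzero values of $A$. Write $\Gamma$ for the subgroup of $\AlgNums^\times$ generated by the nonzero values of $A$ together with all $g_i$ and $h_j$. The key observation is that each purification restricted to the values of $A$ factors through a homomorphism that annihilates torsion.

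For the first tuple, the exponent map $e\colon a\mapsto e(a)\in\mathbb Z^m$ extends to a homomorphism on the group $\langle g_1,\ldots,g_m\rangle\mu_\infty$ with kernel exactly the torsion. Conditions~2 and~3 of a generating set make this decomposition unique. The map $a\mapsto q(a)=\prod p_i^{e_i(a)}$ is then an injective homomorphism from $\langle g\rangle\mu_\infty/\mu_\infty$ into $\mathbb Q_{>0}$, by unique factorization. The purified value is $\vartheta(a)q(a)$ times the table constant $N$. The second purification has the same form, with $\vartheta'$, $q'$ and constant $N'$.

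The central claim is that for nonzero values $a,b$ of $A$, the ratio $a/b$ is a root of unity if and only if $q(a)=q(b)$. The forward direction holds because $e$ annihilates torsion, and the reverse holds by injectivity. The same statement holds for $q'$, so the two criteria coincide.

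I will also need the phases to agree. Since $\vartheta(a)=a/\prod g_i^{e_i(a)}$, for $a,b$ with $q(a)=q(b)$ we get $e(a)=e(b)$ and hence $\vartheta(a)/\vartheta(b)=a/b$. The same identity holds for $\vartheta'$. Therefore, on any set of entries sharing a common $q$-value, the purified entries are $N q\,\vartheta(a)$, that is, a single positive constant times $a$ times a common unit $\prod g_i^{-e_i}$.

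Items~(1)--(4) then follow from this.

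For item~(1), zeros stay zero and nonzeros stay nonzero.

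For item~(2), positive proportionality of magnitude rows $\mathbf x,\mathbf y$ of $P$ means the ratios $q(A(\mathbf x,z))/q(A(\mathbf y,z))$ are constant on the common support. This is equivalent to the vector $e(A(\mathbf x,z))-e(A(\mathbf y,z))$ being constant. I would reformulate it group-theoretically: the classes of $A(\mathbf x,z)A(\mathbf y,z)^{-1}$ in $\Gamma/\Tor(\Gamma)$ are constant in $z$. That condition is independent of the choice of generating set, because both $e$ and the corresponding map for the second tuple are injective on $\Gamma\cap(\langle g\rangle\mu_\infty)$ modulo torsion. This step is the main obstacle. The subtlety is that $\Gamma$ mixes both generating sets, so I would restrict to the subgroup generated by the values of $A$ alone, where both maps are defined and both have kernel equal to its torsion.

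For item~(4), the equal-magnitude blocks of a purified row $\mathbf x$ are the classes of $z$ under $z\sim z'$ iff $A(\mathbf x,z)/A(\mathbf x,z')$ is torsion. This is intrinsic to $A$, so the blocks agree.

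On a block $T_i$, the purified entries of row $\mathbf x$ are $c_{\mathbf x}A(\mathbf x,z)$ with a nonzero scalar $c_{\mathbf x}$ constant on $T_i$. The same holds for row $\mathbf y$. Hence the Hermitian sum becomes $c_{\mathbf x}\overline{c_{\mathbf y}}\sum_{z\in T_i}A(\mathbf x,z)\overline{A(\mathbf y,z)}$, and whether it vanishes does not depend on the purification.

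For item~(3), complex dependence of nonzero rows $\mathbf x,\mathbf y$ under $P$ means $P(\mathbf x,z)/P(\mathbf y,z)$ is constant on the common support. Writing this as $\vartheta$-ratios times $q$-ratios and using the homomorphism property shows it is equivalent to two conditions. First, $A(\mathbf x,z)/A(\mathbf y,z)$ is constant modulo torsion. Second, the torsion parts of those ratios are constant. Both conditions can be recast intrinsically through $a/b\cdot(\text{same for }z')^{-1}\in\mu_\infty$ together with the phase identity above, so the verdict is the same under $P$ and $Q$.

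Finally, positive normalization by $N$ or $N'$ affects none of these statements. Block orthogonality is defined by items~(1)--(4), so $P$ is block-orthogonal if and only if $Q$ is.
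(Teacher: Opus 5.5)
Your proposal is correct and follows essentially the same route as the paper: an exponent homomorphism whose kernel is exactly the torsion, restricted to the group generated by the nonzero entries of $A$ (which is precisely how the paper handles the ``obstacle'' you flag), injectivity of the prime encoding by unique factorization, a phase map fixing roots of unity, and factoring each Hermitian block sum through a nonzero prefactor. The differences are cosmetic: you extend the exponent map to $\langle g_1,\ldots,g_m\rangle\mu_\infty$ instead of to the group of ambient values, and you compare block sums with the original sums $\sum_{z\in T}A(\mathbf x,z)\overline{A(\mathbf y,z)}$ rather than with root-of-unity sums.
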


\begin{proof}
If $A$ is the zero table, both $P$ and $Q$ are zero tables and every assertion
is immediate.  Assume henceforth that $A$ has a nonzero entry.
Let
\[
  \Gamma=\langle a:a\text{ is a nonzero entry of }A\rangle
  \le\AlgNums^\times.
\]
Thus $\Tor(\Gamma)=\Gamma\cap\mu_\infty$ consists exactly of those
multiplicative combinations of entries of $A$ that are roots of unity.
Fix one of the two ambient purifications, with
nonzero ambient values $c_1,\ldots,c_t$, legal generators
$g_1,\ldots,g_m$, and exponent vectors
$e(c_j)=(e_1(c_j),\ldots,e_m(c_j))$.  Put
\[
  H=\langle c_1,\ldots,c_t\rangle.
\]
Every nonzero entry of $A$ is one of the ambient values, so $\Gamma\le H$.
We claim that the prescribed exponent vectors define
\begin{equation}
  e\left(\prod_{j=1}^t c_j^{n_j}\right)
  =\sum_{j=1}^t n_j e(c_j)
  \qquad(n_1,\ldots,n_t\in\mathbb Z).
  \label{eq:extended-exponent-map}
\end{equation}
The point requiring proof is that this value is independent of the chosen
representation.  If
\[
  \prod_{j=1}^t c_j^{n_j}=1,
\]
then~\eqref{eq:multiplicative-decomposition} gives
\begin{equation}
  \prod_{i=1}^m
       g_i^{\sum_{j=1}^t n_j e_i(c_j)}
  =\prod_{j=1}^t\vartheta(c_j)^{-n_j}\in\mu_\infty.
  \label{eq:ambient-exponent-relation}
\end{equation}
Condition~2 in the generating-set definition above forces
\[
  \sum_{j=1}^t n_j e_i(c_j)=0
  \qquad\text{for every }i\in[m].
\]
More generally, if two products in the $c_j$ represent the same element of
$H$, their quotient represents $1$, and the same calculation shows that the
two sums in~\eqref{eq:extended-exponent-map} agree.  Hence the displayed rule
is well-defined.  It is a homomorphism because multiplication of products
adds exponent vectors, and it is the unique homomorphism taking each $c_j$
to the prescribed $e(c_j)$.  Restricting this homomorphism to $\Gamma$ gives
$e:\Gamma\to\mathbb Z^m$.

Its kernel is exactly $\Tor(\Gamma)$.  Indeed, write
$x=\prod_j c_j^{n_j}$.  If $e(x)=0$, then the decompositions of the $c_j$
give
\[
  x=\prod_{j=1}^t\vartheta(c_j)^{n_j}\in\mu_\infty.
\]
Conversely, if
$x^N=1$ for some $N\ge1$, then $N e(x)=e(x^N)=0$, and the torsion-free group
$\mathbb Z^m$ gives $e(x)=0$.  The exponent map therefore descends to an
injective homomorphism
\[
  \nu:\Gamma/\Tor(\Gamma)\hookrightarrow\mathbb Z^m,
  \qquad \nu([x])=e(x).
\]
Define, for $x\in\Gamma$,
\[
  \vartheta(x)=x\prod_{i=1}^m g_i^{-e_i(x)}.
\]
The preceding well-definedness argument shows that $\vartheta$ is a
homomorphism into $\mu_\infty$.  It fixes torsion pointwise in the sense of
\eqref{eq:fixes-torsion-definition}: if $\xi\in\Tor(\Gamma)$, then the
exponent map annihilates $\xi$, so
\[
  e(\xi)=0,
  \qquad
  \vartheta(\xi)=\xi.
\]
Thus
\[
  \psi(x)=\vartheta(x)\prod_{i=1}^m p_i^{e_i(x)}
\]
is an injective homomorphism $\Gamma\to\mathbb C^\times$ that also fixes
torsion pointwise, because $\psi(\xi)=\vartheta(\xi)=\xi$ whenever
$e(\xi)=0$.  Here the \emph{multiplicative independence} of the distinct
positive primes means
\[
  \begin{gathered}
    \prod_{i=1}^m p_i^{k_i}=1,
    \qquad k_i\in\mathbb Z,\\
    \Longrightarrow\qquad k_1=\cdots=k_m=0;
  \end{gathered}
\]
this is unique factorization in $\mathbb Q_{>0}$.  It implies injectivity:
if $\psi(x)=1$, taking absolute values forces $e(x)=0$, after which
$x\in\Tor(\Gamma)$ and $\psi(x)=\vartheta(x)=x$, so $x=1$.

We now make the denominator-clearing step explicit.  The printed
purification replaces each nonzero original value $a$ by $\psi(a)$.  If some
$e_i(a)$ is negative, the positive rational number
$\prod_i p_i^{e_i(a)}$ can have a denominator.  Since $A$ has only finitely
many values, one can choose a positive integer $N_A$ such that
\[
  \begin{gathered}
    N_A\prod_{i=1}^m p_i^{e_i(a)}\in\mathbb Z_{>0}\\
    \text{for every nonzero entry }a\text{ of }A.
  \end{gathered}
\]
The normalized purified entry is then $N_A\psi(a)$.  Thus $\psi$ is the
homomorphism \emph{before denominator clearing}; multiplying its values on
one table by $N_A$ produces the pure table.  The normalized entry map need
not be multiplicative.  Every product comparison and Hermitian sum used
below nevertheless acquires the same nonzero factor $N_A^2$, so equality and
vanishing are preserved.

The crucial consequence is
\begin{equation}
  \abs{\psi(x)}=1
  \quad\Longleftrightarrow\quad
  e(x)=0
  \quad\Longleftrightarrow\quad
  x\in\Tor(\Gamma).
  \label{eq:purification-unit-modulus}
\end{equation}
Indeed, $\vartheta(x)$ has modulus one, and unique factorization in
$\mathbb Q_{>0}$ gives $\prod_i p_i^{e_i(x)}=1$ exactly when every exponent
vanishes.

Applying this construction to the two ambient purifications gives
homomorphisms $\psi_P$ and $\psi_Q$ and positive denominator-clearing factors
$N_P$ and $N_Q$, where the factor is $1$ if the printed form is retained.
For a nonzero original value $a$ of $A$, write
\[
  P(a)=N_P\psi_P(a),
  \qquad
  Q(a)=N_Q\psi_Q(a).
\]
This notation means the entry replacing the original value $a$ in the
distinguished purified table.  The factors $N_P$ and $N_Q$ cancel from every
comparison below.

Fixing torsion survives denominator clearing in the precise covariance form
needed later.  For $\xi\in\Tor(\Gamma)$ and $x\in\Gamma$,
\begin{equation}
  \psi_P(\xi x)=\xi\psi_P(x),
  \qquad
  \psi_Q(\xi x)=\xi\psi_Q(x).
  \label{eq:purification-torsion-covariance}
\end{equation}
Consequently, if both $a$ and $\xi a$ occur as entries of $A$, then
\[
  P(\xi a)=\xi P(a),
  \qquad
  Q(\xi a)=\xi Q(a).
\]
Thus denominator clearing preserves every root-of-unity ratio between
entries, while representing a torsion entry $\xi$ itself by $N_P\xi$ or
$N_Q\xi$.

For nonzero entries $a,b,c,u$ of $A$, applying
\eqref{eq:purification-unit-modulus} to the ratios $a/b$ and $au/(bc)$ gives
\begin{equation}
  \begin{gathered}
  \abs{P(a)}=\abs{P(b)}\\
  \Longleftrightarrow\quad a/b\in\Tor(\Gamma)\\
  \Longleftrightarrow\quad \abs{Q(a)}=\abs{Q(b)}.
  \end{gathered}
  \label{eq:pure-equal-magnitude}
\end{equation}
Moreover,
\begin{equation*}
  \begin{gathered}
  \abs{P(a)P(u)}=\abs{P(b)P(c)}\\
  \Longleftrightarrow\quad au/(bc)\in\Tor(\Gamma)\\
  \Longleftrightarrow\quad \abs{Q(a)Q(u)}=\abs{Q(b)Q(c)}.
  \end{gathered}
\end{equation*}
Zeros remain zero, so support is invariant.  The second equivalence, with
the usual separate treatment of zero entries, gives disjointness and
positive proportionality of magnitude rows.  The first gives the common
partition into equal-magnitude blocks.

Ordinary complex row dependence is determined by vanishing $2\times2$
minors.
Write the two rows under comparison as
$\mathbf a=(a_z)_{z\in D}$ and $\mathbf b=(b_z)_{z\in D}$.
For every $z,w\in D$, the corresponding original minor is
\[
  \det\begin{pmatrix}a_z&a_w\\ b_z&b_w\end{pmatrix}
  =a_zb_w-a_wb_z.
\]
For nonzero corners, injectivity and multiplicativity give
\[
  \begin{gathered}
  P(a_z)P(b_w)=P(a_w)P(b_z)\\
  \Longleftrightarrow\quad a_z b_w=a_w b_z\\
  \Longleftrightarrow\quad Q(a_z)Q(b_w)=Q(a_w)Q(b_z).
  \end{gathered}
\]
The equivalence is immediate as well when a corner is zero.  Thus all minors
of a row pair vanish under $P$ exactly when they vanish under $Q$.

It remains to compare a Hermitian block sum.  By this we mean the restriction
of the usual complex inner product to an equal-magnitude block $T$:
\[
  \sum_{z\in T}P(a_z)\overline{P(b_z)}.
\]
Suppose the two magnitude rows are proportional, let $T$ be one of their
common equal-magnitude blocks, and fix $z_0\in T$.  By
\eqref{eq:pure-equal-magnitude}, there are roots of unity
$\xi_z,\eta_z\in\Tor(\Gamma)$ such that
\[
  a_z=\xi_z a_{z_0},\qquad b_z=\eta_z b_{z_0}
  \qquad(z\in T).
\]
By \eqref{eq:purification-torsion-covariance}, the underlying homomorphisms
$\psi_P$ and $\psi_Q$ preserve the displayed root-of-unity ratios; each
purified table is one common positive scalar times its underlying image.
Consequently,
\begin{align*}
  \sum_{z\in T}P(a_z)\overline{P(b_z)}
  &=P(a_{z_0})\overline{P(b_{z_0})}
    \sum_{z\in T}\xi_z\overline{\eta_z},\\
  \sum_{z\in T}Q(a_z)\overline{Q(b_z)}
  &=Q(a_{z_0})\overline{Q(b_{z_0})}
    \sum_{z\in T}\xi_z\overline{\eta_z}.
\end{align*}
The two prefactors are nonzero, so the two block sums vanish simultaneously.
This proves all four assertions.
\end{proof}

\begin{corollary}[Singleton purification suffices]
\label{cor:singleton-purification}
The joint condition $\BO(\cF)$ holds if and only if every
$G\in W_{\cF}$ of arity at least two has a block-orthogonal singleton
purification.
\end{corollary}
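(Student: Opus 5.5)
The plan is to derive both directions directly from \cref{lem:purification-invariance}, using the tuple $(G)$ and a joint tuple $(G_1,\ldots,G_h)$ as the two ambient tuples that share a distinguished component. The lemma then does essentially all the work. The one point needing care is bookkeeping about what $\Pure$ is applied to and which legal choice is fixed.

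For the forward direction, assume $\BO(\cF)$ and let $G\in W_{\cF}$ have arity at least two. Apply \cref{def:global-bo} to the one-element tuple $(G)$; it is trivially a tuple of pairwise distinct members. The fixed purification $\Pure(G)=(P)$ is then a block-orthogonal legal singleton purification of $G$. If the statement is read as requiring every legal singleton purification to be block-orthogonal, \cref{lem:purification-invariance} applied with both ambient tuples equal to $(G)$ upgrades this to all legal choices. Either way the singleton condition holds.

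For the reverse direction, assume every generated $G$ of arity at least two has a block-orthogonal singleton purification $Q$. Fix a nonempty tuple $(G_1,\ldots,G_h)$ of pairwise distinct members of $W_{\cF}$ with $\Pure(G_1,\ldots,G_h)=(P_1,\ldots,P_h)$, and fix an index $i$ with $\arity(G_i)\ge2$. We view $A=G_i$ as a distinguished component of two ambient tuples: the joint tuple, at position $i$, and the singleton tuple $(G_i)$. Both $P_i$ and $Q$ arise from legal purifications of these tuples; normalized versus printed forms are explicitly allowed by the lemma. The final clause of \cref{lem:purification-invariance} gives that $P_i$ is block-orthogonal if and only if $Q$ is, and $Q$ is block-orthogonal by hypothesis. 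Since $i$ was arbitrary, $\BO(\cF)$ follows.

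The only potential obstacle is the degenerate cases. If $G_i$ is the zero table, the lemma covers it, since both purifications are zero tables and a zero table is vacuously block-orthogonal. If the nonzero values of the joint tuple are all roots of unity, the empty generating list is legal and is exactly what the lemma's proof handles with $m=0$. I would state these two cases explicitly and otherwise regard the corollary as immediate.
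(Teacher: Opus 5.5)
Your proposal is correct and follows the paper's proof. The forward direction applies $\BO(\cF)$ to the one-element tuple $(G)$, and the reverse direction applies \cref{lem:purification-invariance} with distinguished table $G_i$ inside the ambient tuples $(G_i)$ and $(G_1,\ldots,G_h)$; comparing $P_i$ directly with the block-orthogonal singleton purification $Q$ given by the hypothesis is, if anything, slightly tidier than the paper's comparison with an arbitrary singleton purification.
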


\begin{proof}
If the joint condition holds, apply it to the one-element tuple $(G)$ for
each $G\in W_{\cF}$.

Conversely, assume every $G\in W_{\cF}$ of arity at least two has a
block-orthogonal singleton purification.  Fix an arbitrary nonempty finite
tuple
\[
  \mathbf G=(G_1,\ldots,G_h)
\]
from $W_{\cF}$ and legally purify it jointly:
\[
  \Pure(\mathbf G)=(J_1,\ldots,J_h).
\]
For each $i$ with $\arity(G_i)\ge2$, also let $S_i$ be a legal singleton
purification of $G_i$.  Apply \cref{lem:purification-invariance} with
distinguished table $A=G_i$ and with the two ambient tuples
\[
  (G_i)
  \qquad\text{and}\qquad
  (G_1,\ldots,G_h).
\]
The two distinguished outputs in that lemma are precisely $S_i$ and $J_i$.
The nonzero values in the other tables may enlarge the ambient multiplicative
group and change the exponent coordinates, the assigned primes, and hence the
numerical entries of $J_i$.  The intrinsic subgroup remains
\[
  \Tor(\Gamma_i),
  \qquad
  \Gamma_i=\langle a:a\text{ is a nonzero entry of }G_i\rangle,
\]
because, if $H$ is the larger ambient group, then
$\Tor(H)\cap\Gamma_i=\Tor(\Gamma_i)$.  Hence
\cref{lem:purification-invariance} applies to the two ambient tuples and gives
that $S_i$ is block-orthogonal if and only if $J_i$ is
block-orthogonal.
By assumption the left side holds, so every jointly purified component of
arity at least two is block-orthogonal.  As $\mathbf G$ was arbitrary, the
joint condition $\BO(\cF)$ holds.
\end{proof}

We next establish a bound on the external arity of a counterexample.

\begin{lemma}[External diagonal minors]
\label{lem:diagonal-minors}
Let $G\in W_{\cF}$ have arity $r$, let $1\le s\le r$, and let
$\pi:[r]\twoheadrightarrow[s]$ be a surjection.  Then
\[
  H(y_1,\ldots,y_s)
  =G(y_{\pi(1)},\ldots,y_{\pi(r)})
\]
belongs to $W_{\cF}$.
\end{lemma}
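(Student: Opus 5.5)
Since $G\in W_{\cF}$, fix an instance $I$ with variables $x_1,\ldots,x_n$ and an index $r\le n$ such that $G=F_I^{[r]}$. The first $r$ variables are retained and $x_{r+1},\ldots,x_n$ are summed out. The plan is to build a new instance $I'$ directly by merging retained variables according to $\pi$, and then to check that $H=F_{I'}^{[s]}$. Membership in $W_{\cF}$ then follows straight from the definition, since variable scopes may repeat indices.

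\textbf{Construction.} Give $I'$ the variables $y_1,\ldots,y_s,x'_{r+1},\ldots,x'_n$, so $n(I')=s+n-r\ge s\ge1$. Define the index map $\rho:[n]\to[s+n-r]$ by
\[
  \rho(i)=\pi(i)\quad (i\le r),\qquad \rho(i)=s+(i-r)\quad (i>r).
\]
Each constraint $(f;i_1,\ldots,i_{r_f})$ of $I$, taken with its multiplicity, is replaced by $(f;\rho(i_1),\ldots,\rho(i_{r_f}))$. Fix an assignment $(b_1,\ldots,b_s,c_{r+1},\ldots,c_n)$ of $I'$, and let $\mathbf a\in D^n$ be given by $a_i=b_{\pi(i)}$ for $i\le r$ and $a_i=c_i$ for $i>r$. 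By \eqref{eq:instance-function}, each constraint factor of $I'$ at this assignment equals the corresponding factor of $I$ at $\mathbf a$, so
\[
  F_{I'}(\mathbf b,\mathbf c)=F_I\bigl(b_{\pi(1)},\ldots,b_{\pi(r)},c_{r+1},\ldots,c_n\bigr).
\]

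\textbf{Marginalization.} Summing both sides over $\mathbf c\in D^{n-r}$ and applying \eqref{eq:partial-marginal} on each side gives
\[
  F_{I'}^{[s]}(\mathbf b)=G\bigl(b_{\pi(1)},\ldots,b_{\pi(r)}\bigr)=H(\mathbf b).
\]
Surjectivity of $\pi$ is what makes every variable $y_j$ appear among the retained arguments of $G$. Without it, an unused $y_j$ would be an isolated retained variable, and $F_{I'}^{[s]}$ would be $H$ extended by a dummy coordinate rather than $H$ itself.

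\textbf{Edge cases.} No step is a genuine obstacle; only bookkeeping needs care. If $r=n$, there are no hidden variables and the sum over $\mathbf c$ is empty, so the identity holds trivially. The construction keeps the retained variables first in the ordering, so no reordering is needed before marginalizing. For the degree-$\delta$ version used later, the occurrence degree of $y_j$ in $I'$ is $\sum_{i\in\pi^{-1}(j)}\deg_I(x_i)$, a sum of multiples of $\delta$, and the degrees of the hidden variables are unchanged. Hence $I'$ is again a $\CSPdeg{\delta}{\cF}$ instance whenever $I$ is, so the same argument shows $H\in\WFdeg{\delta}$ whenever $G\in\WFdeg{\delta}$.
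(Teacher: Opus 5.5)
Your proof is correct and is essentially the paper's argument: you relabel each retained variable $x_i$ as $y_{\pi(i)}$ in every scope, keep the constraints and hidden variables unchanged, and marginalize the same hidden variables, while writing out the index map and the factor-by-factor identity that the paper leaves implicit. One side remark overstates the role of surjectivity: without it the same construction still produces exactly $H$, which would simply not depend on the unused coordinates.
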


\begin{proof}
Choose a base-language instance presenting $G$ as a partial marginal.
Replace its retained variable in position $i$ by $y_{\pi(i)}$, while keeping
all previously marginalized variables distinct and retaining the same sum.
Repeated variables in scopes are legal, so the modified object is again an
instance.  Its marginal is exactly $H$, including every cancellation already
present in $G$.
\end{proof}

To see the instance construction explicitly, suppose a presentation of a
five-ary $G$ has retained variables $x_1,\ldots,x_5$, hidden variables
$z_1,z_2$, and constraint multiset
\begin{equation*}
\begin{aligned}
  \mathcal C_I=\{&f(x_1,z_1,x_3),
                  g(z_1,x_2,x_5),\\
                 &h(x_4,z_2,x_3),
                  \ell(z_2,x_5)\}.
\end{aligned}
\end{equation*}
For $\pi=(1,2,1,3,2)$, construct $I_\pi$ by keeping every constraint and
hidden variable and replacing only retained-variable incidences:
\[
\begin{aligned}
  \mathcal C_{I_\pi}=\{&f(y_1,z_1,y_1),
                        g(z_1,y_2,y_2),\\
                       &h(y_3,z_2,y_1),
                        \ell(z_2,y_2)\}.
\end{aligned}
\]

\begin{figure}[!htbp]
  \centering
  \resizebox{0.94\linewidth}{!}{%
  \begin{tikzpicture}[
      x=1cm,y=1cm,
      var/.style={circle,draw,minimum size=5.8mm,inner sep=0pt,
                  font=\scriptsize},
      hidden/.style={var,dashed},
      factor/.style={draw,rounded corners=1pt,minimum width=6mm,
                     minimum height=5.5mm,inner sep=1pt,
                     double,double distance=.45pt,font=\scriptsize},
      edge/.style={line width=.42pt},
      title/.style={font=\small\bfseries},
      scopes/.style={font=\scriptsize,align=center}
    ]
    \begin{scope}[shift={(2.1,0)}]
      \node[title] at (1.2,2.2) {original instance $I$};
      \node[factor] (f) at (0,1) {$f$};
      \node[factor] (g) at (2.4,1) {$g$};
      \node[factor] (h) at (0,-1) {$h$};
      \node[factor] (l) at (2.4,-1) {$\ell$};

      \node[hidden] (z1) at (1.2,1) {$z_1$};
      \node[hidden] (z2) at (1.2,-1) {$z_2$};
      \node[var] (x3) at (0,0) {$x_3$};
      \node[var] (x5) at (2.4,0) {$x_5$};
      \node[var] (x1) at (-1.15,1.62) {$x_1$};
      \node[var] (x2) at (3.55,1.62) {$x_2$};
      \node[var] (x4) at (-1.15,-1.62) {$x_4$};

      \draw[edge] (f)--(x1) (f)--(z1) (f)--(x3);
      \draw[edge] (g)--(z1) (g)--(x2) (g)--(x5);
      \draw[edge] (h)--(x4) (h)--(z2) (h)--(x3);
      \draw[edge] (l)--(z2) (l)--(x5);

      \node[scopes] at (1.2,-2.25)
        {$f(x_1,z_1,x_3),\ g(z_1,x_2,x_5)$\\
         $h(x_4,z_2,x_3),\ \ell(z_2,x_5)$};
    \end{scope}

    \draw[-{Latex[length=2mm]},line width=.55pt]
      (6.05,0)--(7.65,0)
      node[midway,above=2pt,font=\scriptsize,align=center]
      {$x_1,x_3\mapsto y_1$\\$x_2,x_5\mapsto y_2$\\$x_4\mapsto y_3$};

    \begin{scope}[shift={(9.5,0)}]
      \node[title] at (1.2,2.2) {constructed instance $I_\pi$};
      \node[factor] (fp) at (0,1) {$f$};
      \node[factor] (gp) at (2.4,1) {$g$};
      \node[factor] (hp) at (0,-1) {$h$};
      \node[factor] (lp) at (2.4,-1) {$\ell$};

      \node[hidden] (z1p) at (1.2,1) {$z_1$};
      \node[hidden] (z2p) at (1.2,-1) {$z_2$};
      \node[var] (y1) at (0,0) {$y_1$};
      \node[var] (y2) at (2.4,0) {$y_2$};
      \node[var] (y3) at (-1.15,-1.62) {$y_3$};

      \draw[edge] (fp)--(z1p);
      \draw[edge] (gp)--(z1p);
      \draw[edge] (hp)--(z2p);
      \draw[edge] (lp)--(z2p);
      \draw[edge] (fp) to[bend left=17] (y1);
      \draw[edge] (fp) to[bend right=17] (y1);
      \draw[edge] (hp)--(y1);
      \draw[edge] (gp) to[bend left=17] (y2);
      \draw[edge] (gp) to[bend right=17] (y2);
      \draw[edge] (lp)--(y2);
      \draw[edge] (hp)--(y3);

      \node[scopes] at (1.2,-2.25)
        {$f(y_1,z_1,y_1),\ g(z_1,y_2,y_2)$\\
         $h(y_3,z_2,y_1),\ \ell(z_2,y_2)$};
    \end{scope}
  \end{tikzpicture}%
  }
  \caption{Construction in \cref{lem:diagonal-minors}.  Circles are
  variables, boxes are constraints, and dashed circles are hidden variables.
  Constraint nodes and hidden variables are unchanged; only incidences of
  retained variables are relabelled.  The parallel edges at $f$ and $g$ in
  $I_\pi$ record repeated occurrences created by identification.  Summing
  over the unchanged $z_1,z_2$ gives
  $F_{I_\pi}^{[3]}(y_1,y_2,y_3)
   =G(y_1,y_2,y_1,y_3,y_2)=H(y_1,y_2,y_3)$.}
  \label{fig:diagonal-minor-example}
\end{figure}

\begin{lemma}[Bounded external arity]
\label{lem:bo-arity-bound}
If the singleton Block Orthogonality condition fails on $W_{\cF}$, then it
fails for some member of $W_{\cF}$ of arity at most
\[
  R_{\mathrm{BO}}=d^2+1.
\]
\end{lemma}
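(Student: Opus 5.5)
Suppose $G\in W_{\cF}$ has arity $r>d^2+1$ and its singleton purification $P$ is not block-orthogonal. By \cref{lem:purification-invariance} we may work with any legal purification, and we will use it to move between purifications of $G$ and of a minor. The failure is witnessed by two nonzero rows $\mathbf x,\mathbf y\in D^{r-1}$ of $P$, equivalently of $G$, since supports agree. For this pair, one of three things happens:
\begin{itemize}
  \item[(a)] the magnitude rows $\abs{P(\mathbf x,\ast)}$ and $\abs{P(\mathbf y,\ast)}$ are neither disjoint nor positively proportional, which breaks block-rank one;
  \item[(b)] they are proportional, but the complex rows are neither linearly dependent nor block-orthogonal;
  \item[(c)] nothing else: every block-rank-one failure reduces to a pair of rows of type (a).
\end{itemize}
So in all cases the whole violation is a property of the $2\times d$ submatrix formed by $G(\mathbf x,\ast)$ and $G(\mathbf y,\ast)$, together with the intrinsic group data of its entries.

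The compression is the diagonal-minor step. Partition the $r-1$ row coordinates $i\in[r-1]$ according to the ordered pair $(x_i,y_i)\in D^2$. There are at most $d^2$ classes, say $s-1\le d^2$ of them. Define a surjection $\pi:[r]\twoheadrightarrow[s]$ that sends each coordinate to the index of its class and sends the column coordinate $r$ to the fresh index $s$. By \cref{lem:diagonal-minors}, $H(y_1,\ldots,y_s)=G(y_{\pi(1)},\ldots,y_{\pi(r)})$ lies in $W_{\cF}$ and has arity $s\le d^2+1$. Let $\mathbf x',\mathbf y'\in D^{s-1}$ carry the common values of each class. Then $H(\mathbf x',z)=G(\mathbf x,z)$ and $H(\mathbf y',z)=G(\mathbf y,z)$ for every $z$, so the two witnessing rows reappear in $H$ entry by entry.

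The remaining step, and the one I expect to be the main obstacle, is showing that the violation survives purification of $H$ rather than of $G$. The nonzero entries of $H$ form a subset of those of $G$, so the intrinsic group $\Gamma_H$ is a subgroup of $\Gamma_G$, and $\Tor(\Gamma_G)\cap\Gamma_H=\Tor(\Gamma_H)$. All the relevant predicates are determined by torsion membership of ratios of entries in the two rows, together with vanishing of root-of-unity block sums:
\begin{itemize}
  \item equal magnitudes, via \eqref{eq:pure-equal-magnitude};
  \item positive proportionality of magnitude rows, via $au/(bc)\in\Tor$;
  \item complex dependence, via the $2\times2$ minors and injectivity;
  \item block sums, via the factorisation with $\sum_{z\in T}\xi_z\overline{\eta_z}$.
\end{itemize}
I would rerun the argument of \cref{lem:purification-invariance} with $A=H$ viewed inside the ambient tuple $(G)$. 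The purification of $G$, restricted to the entries of $H$, is an injective torsion-fixing homomorphism on $\Gamma_H$. So the predicates evaluated on rows $\mathbf x',\mathbf y'$ of $\Pure(H)$ coincide with those on rows $\mathbf x,\mathbf y$ of $P$.

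The one subtlety is case (c). A block-rank-one failure was reduced to a single pair of rows of type (a), so it is still captured by the two rows $\mathbf x',\mathbf y'$ of $H$. Hence $H$ is a violating member of arity at most $d^2+1$.
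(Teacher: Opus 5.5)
Your proposal is correct and follows essentially the paper's proof. It uses the same compression, identifying row positions that carry equal ordered pairs $(x_i,y_i)$ via \cref{lem:diagonal-minors}, and the same transfer: applying $G$'s purification map to $H$ is a legal joint purification of $(G,H)$ (the nonzero values of $H$ are among those of $G$), the two witnessing rows agree entry for entry, and \cref{lem:purification-invariance} then carries the failure to every singleton purification of $H$.
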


\begin{proof}
Let $G:D^r\to\AlgNums$, where $r\ge2$, be a counterexample.  Its failure is
witnessed by two
row indices $\mathbf x,\mathbf y\in D^{r-1}$: their purified magnitude rows
overlap but are not proportional, or they are proportional while the complex
rows are neither dependent nor block-orthogonal.  On the row-coordinate
positions $[r-1]$, put
\[
  i\equiv j
  \quad\Longleftrightarrow\quad
  (x_i,y_i)=(x_j,y_j).
\]
There are at most $d^2$ equivalence classes, one for each possible ordered
pair in $D^2$.  For a concrete example with $D=\{0,1\}$ (so $d=2$), take
\[
\begin{aligned}
  \mathbf x&=(0,0,1,0,1,1,0,1),\\
  \mathbf y&=(0,1,0,0,1,0,1,1).
\end{aligned}
\]
\Cref{tab:pair-class-example} displays the four equivalence classes.  Identify
positions in each class and keep the final column variable separate.  By
\cref{lem:diagonal-minors}, the resulting table $H$ belongs to $W_{\cF}$ and
has arity at most $d^2+1$.

\begin{table}[!htbp]
  \centering
  \small
  \setlength{\tabcolsep}{3.6pt}
  \renewcommand{\arraystretch}{1.18}
  \begin{tabular}{c|cccc}
    class
      &$C_{00}\,(\bullet)$
      &$C_{01}\,(\circ)$
      &$C_{10}\,(\blacktriangle)$
      &$C_{11}\,(\diamond)$\\ \hline
    positions &$\{1,4\}$&$\{2,7\}$&$\{3,6\}$&$\{5,8\}$\\
    new variable &$u_{00}$&$u_{01}$&$u_{10}$&$u_{11}$\\
    $\mathbf x$-value &$0$&$0$&$1$&$1$\\
    $\mathbf y$-value &$0$&$1$&$0$&$1$
  \end{tabular}
  \caption{The position classes for the displayed Boolean rows.  Each
  black-and-white marker identifies one class of positions carrying the same
  ordered pair $(x_i,y_i)$.}
  \label{tab:pair-class-example}
\end{table}

In this example the identified table is
\[
\begin{aligned}
  H(u_{00},u_{01},u_{10},u_{11},z)
   =G(&u_{00},u_{01},u_{10},u_{00},\\[-2pt]
      &u_{11},u_{10},u_{01},u_{11},z).
\end{aligned}
\]
Its two selected row assignments are respectively
\[
  (u_{00},u_{01},u_{10},u_{11})=(0,0,1,1)
  \quad\text{and}\quad
  (0,1,0,1),
\]
which expand back to $\mathbf x$ and $\mathbf y$.  There are four identified
row variables and one separate column variable, so the resulting arity is
$4+1=d^2+1=5$.

Assign to each new row variable the first component of its defining ordered
pair to obtain one row, and the second component to obtain the other.  The two
resulting rows of $H$ are entry-for-entry the selected rows of $G$.  Take a
legal joint printed purification
$(\widetilde G,\widetilde H)$ of $(G,H)$.  By
\cref{lem:purification-invariance}, the failed singleton test for $G$ is
preserved in $\widetilde G$.  A joint printed purification applies the same
map to equal original entries, so the selected rows of $\widetilde G$ agree
entry-for-entry with the corresponding rows of $\widetilde H$.  The same
test therefore fails in $\widetilde H$.  Applying
\cref{lem:purification-invariance} to $H$ transfers this failure to every
legal singleton purification of $H$.  Hence $H$ is also a counterexample.
\end{proof}

\section{Algebraic Certificates for Block Orthogonality}
\label{sec:bo-certificates}

This section expresses Block Orthogonality of a singleton purification $P$
directly in the original coordinates of its table $A$.
\Cref{lem:intrinsic-purification-tests} leaves support and ordinary minors
unchanged and converts purified magnitude comparisons and Hermitian block
cancellations into root-of-unity equations among the entries of $A$.
Consequently, both the certificates and the identity oracle operate directly
on the original table.

Fix the computable conjugation-stable number field
\begin{equation}
  K=\mathbb Q\left(
      \{f(\mathbf a),\overline{f(\mathbf a)}:
        f\in\cF,\ \mathbf a\in D^{r_f}\}
    \right),
  \label{eq:working-field}
\end{equation}
and let $\mu(K)$ be its finite, effectively computable group of roots of
unity~\cite{Cohen1993}.  By definition, every input value
$f(\mathbf a)$ belongs to $K$.  Equations~\eqref{eq:instance-function}
and~\eqref{eq:partial-marginal} express each entry of every
$G\in W_{\cF}$ as a finite sum of finite products of such values.  Since
$K$ is a field, every entry of every generated table therefore belongs to
$K$.  We now give an effective description of the $K$-valued tables whose
singleton purifications are block-orthogonal; all relevant torsion choices
lie in the computable finite group $\mu(K)$.

\subsection{Intrinsic torsion tests}

The following consequences of purification convert every magnitude
comparison needed by Block Orthogonality into a finite root-of-unity choice.

\Needspace{8\baselineskip}
\begin{lemma}[Intrinsic purification tests]
\label{lem:intrinsic-purification-tests}
Let $A$ be a finite $K$-valued table, let $P$ be any legal singleton
purification of $A$, and, for every nonzero value $a$ occurring in $A$,
write $\widehat a$ for its purified value.  For nonzero values $a,b,c,u$
occurring in $A$,
\begin{align}
  \abs{\widehat a}=\abs{\widehat b}
  &\quad\Longleftrightarrow\quad
  a/b\in\mu(K),
  \label{eq:intrinsic-equal-magnitude}\\
  \abs{\widehat a}\abs{\widehat u}
     =\abs{\widehat b}\abs{\widehat c}
  &\quad\Longleftrightarrow\quad
  au/(bc)\in\mu(K)\notag\\
  &\quad\Longleftrightarrow\quad
  au=\zeta bc
  \text{ for some }\zeta\in\mu(K),
  \label{eq:intrinsic-magnitude-minor}\\
  \widehat a\,\widehat u=\widehat b\,\widehat c
  &\quad\Longleftrightarrow\quad
  au=bc.
  \label{eq:intrinsic-ordinary-minor}
\end{align}
Moreover, if $a=\rho b$ for $\rho\in\mu(K)$, then
\begin{equation}
  \widehat a=\rho\,\widehat b.
  \label{eq:purification-root-covariance}
\end{equation}
\end{lemma}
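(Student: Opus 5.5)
The plan is to reuse the homomorphism $\psi$ built in the proof of \cref{lem:purification-invariance}, now specialised to the singleton ambient tuple $(A)$. Let
\[
  \Gamma=\langle a:a\text{ is a nonzero entry of }A\rangle\le K^\times .
\]
That proof gives an injective homomorphism $\psi:\Gamma\to\mathbb C^\times$ with the following properties.
\begin{itemize}
  \item[(i)] It fixes $\Tor(\Gamma)$ pointwise.
  \item[(ii)] It satisfies $\abs{\psi(x)}=1$ if and only if $x\in\Tor(\Gamma)$; this is \eqref{eq:purification-unit-modulus}.
  \item[(iii)] Every purified entry has the form $\widehat a=N\psi(a)$ for one positive table-wide factor $N$ ($N=1$ for the printed form).
\end{itemize}
The only new ingredient is to identify the intrinsic torsion group with a subgroup of $\mu(K)$. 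Since $\Gamma\le K^\times$, we have $\Tor(\Gamma)=\Gamma\cap\mu_\infty\subseteq K^\times\cap\mu_\infty=\mu(K)$. Conversely, any element of $\Gamma$ that lies in $\mu(K)$ is torsion. Hence, for every $x\in\Gamma$, $x\in\Tor(\Gamma)\Leftrightarrow x\in\mu(K)$.

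With this identification, I would prove the three equivalences in turn. For \eqref{eq:intrinsic-equal-magnitude}, write $\abs{\widehat a}=\abs{\widehat b}$ as $\abs{\psi(a/b)}=1$. The factor $N$ cancels because it is common to both entries of the single table. Then apply (ii), noting that $a/b\in\Gamma$.

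For \eqref{eq:intrinsic-magnitude-minor}, argue the same way with $x=au/(bc)\in\Gamma$: the magnitude equation becomes $N^2\abs{\psi(au)}=N^2\abs{\psi(bc)}$, i.e.\ $\abs{\psi(x)}=1$. The second equivalence in \eqref{eq:intrinsic-magnitude-minor} is just the rewriting $au=\zeta bc$ with $\zeta=au/(bc)$, which is legitimate because $b,c\ne0$.

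For \eqref{eq:intrinsic-ordinary-minor}, use multiplicativity: $\widehat a\,\widehat u=\widehat b\,\widehat c$ becomes $N^2\psi(au)=N^2\psi(bc)$. Injectivity of $\psi$ then gives $au=bc$, and the converse is immediate.

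For the covariance \eqref{eq:purification-root-covariance}, let $a=\rho b$ with $\rho\in\mu(K)$. Then $\rho=a/b\in\Gamma\cap\mu_\infty=\Tor(\Gamma)$, so by (i) and multiplicativity $\widehat a=N\psi(\rho)\psi(b)=\rho\,\widehat b$. This is \eqref{eq:purification-torsion-covariance} restated.

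The main point needing care is that the purification's generating set may live in the field $\mathbb Q(A)$ rather than in $K$, so torsion is a priori measured in $\mu_\infty$ rather than in $\mu(K)$. This is resolved by the intersection argument above: every ratio being tested lies in $\Gamma\subseteq K^\times$, so any root of unity it produces already lies in $\mu(K)$. The second point of care is that the normalised entry map is not multiplicative. This is handled by carrying out every comparison on equal numbers of factors on each side, so that the same power of $N$ cancels, exactly as in the proof of \cref{lem:purification-invariance}.
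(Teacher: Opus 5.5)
Your proposal is correct and follows essentially the same route as the paper: both reuse the injective purification homomorphism from \cref{lem:purification-invariance}, identify $\Tor(\Gamma)=\Gamma\cap\mu(K)$, cancel the common table-wide factor $N$ or $N^2$, and then apply the unit-modulus criterion, injectivity, and pointwise fixing of torsion. The only cosmetic difference is how you get $\Tor(\Gamma)\subseteq\mu(K)$: you use $\Gamma\le K^\times$ directly, whereas the paper notes that the generators and $\vartheta$ take values in $\mathbb Q(A)\subseteq K$. Either justification suffices.
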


\begin{proof}
If $A$ is the zero table, the assertions are vacuous.  Assume that $A$ has a
nonzero entry.
Let $\Gamma\le K^\times$ be the multiplicative group generated by the
nonzero entries of $A$.  As in the proof of
\cref{lem:purification-invariance}, equation
\eqref{eq:ambient-exponent-relation} shows that the legal generating set
induces homomorphisms
\[
  \nu:\Gamma/\Tor(\Gamma)\hookrightarrow\mathbb Z^m,
  \qquad
  \vartheta:\Gamma\longrightarrow\mu(K),
\]
where $\vartheta$ records the residual root of unity and restricts to the
identity on $\Tor(\Gamma)$.  Here $\vartheta(\Gamma)\subseteq\mu(K)$ because
this is a singleton purification of a $K$-valued table and every generator
lies in the field generated over $\mathbb Q$ by the nonzero entries of $A$,
which is a subfield of $K$.  Before denominator clearing, the purification
map is the injective homomorphism
\[
  \varphi(x)=\vartheta(x)
      \prod_{i=1}^m p_i^{\nu_i([x])}
  \qquad(x\in\Gamma).
\]
There is one positive integer $N$, independent of the entry, such that
$\widehat a=N\varphi(a)$ for every nonzero entry $a$.

Multiplicative independence of the primes gives
\[
  \abs{\varphi(a/b)}=1
  \quad\Longleftrightarrow\quad
  a/b\in\Tor(\Gamma)=\Gamma\cap\mu(K).
\]
This proves~\eqref{eq:intrinsic-equal-magnitude}; applying the same argument
to $au/(bc)$ proves~\eqref{eq:intrinsic-magnitude-minor}.  The common factor
$N^2$ cancels in~\eqref{eq:intrinsic-ordinary-minor}, after which the claim
follows from injectivity of $\varphi$.  Finally, $\varphi$ fixes torsion
pointwise as in \eqref{eq:fixes-torsion-definition}; the common
denominator-clearing factor then gives
\eqref{eq:purification-root-covariance}.
\end{proof}

In particular, ordinary row dependence is preserved by purification: all
ordinary $2\times2$ minors of two original rows vanish if and only if all
corresponding minors of their purified rows vanish.

\subsection{Finite certificate branches}

Fix an external arity $k\ge2$, put
\[
  \mathcal R_k=D^{k-1},
\]
and regard a $k$-ary table as a matrix
\[
  A=(a_{\mathbf x,c})_{\mathbf x\in\mathcal R_k,\ c\in D}.
\]
Let
\[
  \mathbf X=\{X_{\mathbf x,c}:
      \mathbf x\in\mathcal R_k,\ c\in D\}
\]
be the corresponding coordinate variables.  Fix arbitrary total orders on
$\mathcal R_k$ and $D$.

A \emph{Block Orthogonality certificate} of arity $k$ consists of the
following finite discrete data.  First choose a rectangle system
\[
  (R_1,C_1),\ldots,(R_t,C_t),
\]
where every $R_i\subseteq\mathcal R_k$ and $C_i\subseteq D$ is nonempty,
the row sets $R_i$ are pairwise disjoint, and the column sets $C_i$ are
pairwise disjoint.  The empty system $t=0$ is allowed.  Impose the support
equations
\begin{equation}
  X_{\mathbf x,c}=0
  \qquad
  \text{whenever }
  (\mathbf x,c)\notin
  \bigcup_{i=1}^t(R_i\times C_i).
  \label{eq:bo-certificate-support}
\end{equation}

For every $i$, every two distinct rows
$\mathbf x,\mathbf y\in R_i$, and every two distinct columns
$c,e\in C_i$, choose
$\tau_{\mathbf x,\mathbf y;c,e}\in\mu(K)$ and impose the twisted minor
\begin{equation}
  X_{\mathbf x,c}X_{\mathbf y,e}
  -\tau_{\mathbf x,\mathbf y;c,e}
   X_{\mathbf x,e}X_{\mathbf y,c}=0.
  \label{eq:bo-certificate-twisted-minor}
\end{equation}
For every unordered pair $\mathbf x<\mathbf y$ in one $R_i$, choose one of
the following two modes.

\paragraph{Dependent mode.}
Impose every ordinary minor
\begin{equation}
  X_{\mathbf x,c}X_{\mathbf y,e}
  -X_{\mathbf x,e}X_{\mathbf y,c}=0
  \qquad(c,e\in C_i).
  \label{eq:bo-certificate-dependent}
\end{equation}

\paragraph{Orthogonal mode.}
Choose a set partition $\mathcal T_{\mathbf x,\mathbf y}$ of $C_i$.  For
every $T\in\mathcal T_{\mathbf x,\mathbf y}$, choose an anchor $c_T\in T$
and roots
\[
  \rho_{T,c},\sigma_{T,c}\in\mu(K)
  \qquad(c\in T),
\]
with $\rho_{T,c_T}=\sigma_{T,c_T}=1$.  Retain this discrete choice only if
\begin{equation}
  \sum_{c\in T}\rho_{T,c}\overline{\sigma_{T,c}}=0
  \qquad
  \text{for every }T\in\mathcal T_{\mathbf x,\mathbf y}.
  \label{eq:bo-certificate-root-sum}
\end{equation}
For every $T\in\mathcal T_{\mathbf x,\mathbf y}$ and every
$c\in T\setminus\{c_T\}$, impose the anchor equations
\begin{align}
  X_{\mathbf x,c}-\rho_{T,c}X_{\mathbf x,c_T}&=0,
  \label{eq:bo-certificate-first-anchor}\\
  X_{\mathbf y,c}-\sigma_{T,c}X_{\mathbf y,c_T}&=0.
  \label{eq:bo-certificate-second-anchor}
\end{align}

For a retained certificate $\mathfrak c$, rewrite every coordinate equation
imposed above as $p(\mathbf X)=0$ and put the polynomial
$p(\mathbf X)$ into $E_{\mathfrak c}\subseteq K[\mathbf X]$.  Explicitly,
$E_{\mathfrak c}$ consists of all applicable polynomials of the following
types, with the indices and discrete choices specified by
$\mathfrak c$:
\begin{align*}
  &X_{\mathbf x,c},\\
  &X_{\mathbf x,c}X_{\mathbf y,e}
    -\tau_{\mathbf x,\mathbf y;c,e}
     X_{\mathbf x,e}X_{\mathbf y,c},\\
  &X_{\mathbf x,c}X_{\mathbf y,e}
    -X_{\mathbf x,e}X_{\mathbf y,c},\\
  &X_{\mathbf x,c}-\rho_{T,c}X_{\mathbf x,c_T},\\
  &X_{\mathbf y,c}-\sigma_{T,c}X_{\mathbf y,c_T}.
\end{align*}
The third type is imposed only in dependent mode, and the last two only in
orthogonal mode.
Thus ``left-hand side'' means precisely the polynomial before $=0$ in one
of~\eqref{eq:bo-certificate-support},
\eqref{eq:bo-certificate-twisted-minor},
\eqref{eq:bo-certificate-dependent}, or
\cref{eq:bo-certificate-first-anchor,eq:bo-certificate-second-anchor}.
Certificate enumeration retains exactly the discrete root choices satisfying
\eqref{eq:bo-certificate-root-sum}.  For each retained choice,
$E_{\mathfrak c}$ consists of homogeneous polynomials in the unbarred
coordinate variables: degree one for support and anchor equations, and degree
two for minor equations.  Conjugation is confined to the finite
precomputation over $\mu(K)$.

The certificate family is finite and effectively enumerable because the row
and column sets, their partitions, and $\mu(K)$ are finite.  Retention is
decided by exact number-field equality tests, and homogeneity makes the zero
table a solution of every retained branch.  Let $\mathfrak C_k$ denote the
family of retained Block Orthogonality certificates of arity $k$.

For $E\subseteq K[\mathbf X]$, write
\[
  V_K(E)=\{A\in K^{D^k}:e(A)=0\text{ for every }e\in E\}.
\]
Let $\mathcal B_k$ denote the set of $K$-valued $k$-ary tables whose
singleton purifications are block-orthogonal.  This is well-defined by
\cref{lem:purification-invariance}: whether the purification is
block-orthogonal is independent of the legal singleton purification chosen.
Although $\mathcal B_k$ is defined through a purification $P$ of an original
table $A$, every coordinate variable $X_{\mathbf x,c}$ in
$E_{\mathfrak c}$ represents $a_{\mathbf x,c}$, not
$p_{\mathbf x,c}$.  The passage from $P$ back to $A$ uses exactly
\cref{lem:intrinsic-purification-tests}:
\eqref{eq:intrinsic-magnitude-minor} gives the twisted-minor equations,
\eqref{eq:intrinsic-ordinary-minor} gives the ordinary-minor equations, and
\cref{eq:intrinsic-equal-magnitude,eq:purification-root-covariance} give
the anchor equations and translate each purified Hermitian block sum into
the prechecked root sum.  Support is unchanged by purification.  Hence the
sets $V_K(E_{\mathfrak c})$ below are loci of original tables even though
they characterize Block Orthogonality after purification.

\begin{theorem}[Algebraic description of the BO locus]
\label{thm:finite-bo-certificates}
For every $k\ge2$, the effectively enumerable certificate family
$\mathfrak C_k$ satisfies
\begin{equation}
  \mathcal B_k
  =\bigcup_{\mathfrak c\in\mathfrak C_k}V_K(E_{\mathfrak c}).
  \label{eq:bo-certificate-union}
\end{equation}
\end{theorem}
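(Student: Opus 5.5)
We prove the two inclusions separately, working throughout with a fixed legal singleton purification $P$ of a $K$-valued $k$-ary table $A$ and translating every clause of \cref{def:block-orthogonal-function} for $P$ into a statement about $A$ via \cref{lem:intrinsic-purification-tests}. Support is unchanged by purification, so the support pattern of $P$ and $A$ agree. We write $\widehat a$ for purified values as in that lemma.

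\emph{Inclusion $\mathcal B_k\subseteq\bigcup_{\mathfrak c}V_K(E_{\mathfrak c})$.} Let $A\in\mathcal B_k$, so $P$ is block-orthogonal, and build a certificate from $A$.

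Block-rank one of $\abs P$ means the support is a disjoint union of rectangles with pairwise disjoint row sets and pairwise disjoint column sets. Take these rectangles as $(R_i,C_i)$, with $t=0$ when $A=0$. This gives \eqref{eq:bo-certificate-support}.

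Within a rectangle, all entries are nonzero, and any two rows $\mathbf x,\mathbf y\in R_i$ have positively proportional magnitude rows in $P$. Hence $\abs{\widehat a_{\mathbf x,c}}\abs{\widehat a_{\mathbf y,e}}=\abs{\widehat a_{\mathbf x,e}}\abs{\widehat a_{\mathbf y,c}}$. By \eqref{eq:intrinsic-magnitude-minor} this supplies a $\tau\in\mu(K)$ for each twisted minor \eqref{eq:bo-certificate-twisted-minor}.

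For each pair $\mathbf x<\mathbf y$ in $R_i$, either the complex rows of $P$ are dependent or they are block-orthogonal.
\begin{itemize}
\item[] If they are dependent, \eqref{eq:intrinsic-ordinary-minor} gives the ordinary minors \eqref{eq:bo-certificate-dependent} for $A$, and we choose dependent mode.
\item[] Otherwise we choose orthogonal mode. Let $\mathcal T_{\mathbf x,\mathbf y}$ be the equal-magnitude block partition of $C_i$ for row $\mathbf x$ of $P$; it is the same for row $\mathbf y$ by proportionality. On each block $T$ with anchor $c_T$, \eqref{eq:intrinsic-equal-magnitude} gives roots $\rho_{T,c}=a_{\mathbf x,c}/a_{\mathbf x,c_T}$ and $\sigma_{T,c}=a_{\mathbf y,c}/a_{\mathbf y,c_T}$ in $\mu(K)$. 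These yield the anchor equations \eqref{eq:bo-certificate-first-anchor} and \eqref{eq:bo-certificate-second-anchor}.
\end{itemize}
To see that this choice is retained, apply \eqref{eq:purification-root-covariance}. It gives
\[
\sum_{c\in T}\widehat a_{\mathbf x,c}\overline{\widehat a_{\mathbf y,c}}
=\widehat a_{\mathbf x,c_T}\overline{\widehat a_{\mathbf y,c_T}}\sum_{c\in T}\rho_{T,c}\overline{\sigma_{T,c}}.
\]
The prefactor is nonzero, so block-orthogonality of $P$ forces \eqref{eq:bo-certificate-root-sum}. The certificate is therefore retained, and $A\in V_K(E_{\mathfrak c})$.

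\emph{Inclusion $V_K(E_{\mathfrak c})\subseteq\mathcal B_k$.} Let $A$ satisfy the equations of a retained certificate $\mathfrak c$. This direction is the main obstacle, because solutions may degenerate: entries inside a rectangle may vanish, and distinct anchor blocks may acquire equal magnitudes and merge. We handle this by showing that every clause of Block Orthogonality for $P$ is implied by the equations, without using the certificate's rectangles as the true ones.

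\emph{Block-rank one.} Take two nonzero rows $\mathbf x,\mathbf y$ whose supports intersect. By \eqref{eq:bo-certificate-support} both rows lie in one $R_i$, and their supports lie in $C_i$. Suppose $c$ is in both supports and $e$ is in only one, say $a_{\mathbf x,e}\ne0=a_{\mathbf y,e}$. Then the twisted minor on $(c,e)$ reads $a_{\mathbf x,c}\cdot 0=\tau\,a_{\mathbf x,e}a_{\mathbf y,c}$, whose right side is nonzero, a contradiction. So the two supports coincide. The twisted minors on that common support then give, via \eqref{eq:intrinsic-magnitude-minor}, that the magnitude rows of $P$ are proportional. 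Hence $\abs P$ is block-rank one, with the true rectangles possibly finer than the certificate's.

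\emph{The orthogonality clause.} Take two such rows, which lie in one $R_i$.
\begin{itemize}
\item[] In dependent mode, the ordinary minors and \eqref{eq:intrinsic-ordinary-minor} make the rows of $P$ dependent.
\item[] In orthogonal mode, the anchor equations make each certificate block $T$ either entirely zero in both rows or entirely nonzero, with constant purified magnitude on $T$ by \eqref{eq:intrinsic-equal-magnitude}. Each block's Hermitian sum for $P$ is a nonzero multiple of $\sum_{c\in T}\rho_{T,c}\overline{\sigma_{T,c}}=0$, or it is zero outright.
\end{itemize}
Every true equal-magnitude block of $P$ is a disjoint union of certificate blocks $T$ lying in the support. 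Its Hermitian sum is therefore a sum of zeros, and the rows are block-orthogonal.

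The bookkeeping above, where true blocks are unions of certificate blocks and true rectangles refine certificate rectangles, is the point I expect to need the most care. With it, both inclusions hold and \eqref{eq:bo-certificate-union} follows.
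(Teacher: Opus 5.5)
Your proposal is correct and follows the paper's proof essentially step for step: the rectangles are the components of the support graph, every clause is translated through \cref{lem:intrinsic-purification-tests}, and the reverse inclusion handles vanishing entries and merged magnitude blocks in the same way. The only difference is cosmetic: you obtain ``each certificate group survives in both rows or in neither'' from the equal-support fact you already proved, whereas the paper gives a separate twisted-minor argument for this.
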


\begin{proof}
We prove the two inclusions separately.

\paragraph{Every block-orthogonal table has a certificate.}
Let $A\in\mathcal B_k$, and fix a singleton purification
$P=(p_{\mathbf x,c})$ of $A$.  The support of $P$ equals that of $A$.
Because $\abs P$ is block-rank one, the nonempty connected components of its
bipartite support graph are complete bipartite graphs
\[
  R_1\times C_1,\ldots,R_t\times C_t,
\]
with pairwise disjoint row sets and pairwise disjoint column sets.  Choose
these components as the certificate rectangles.  This gives
\eqref{eq:bo-certificate-support}.

Within one rectangle, purified magnitude rows are proportional.  Hence, for
$\mathbf x,\mathbf y\in R_i$ and $c,e\in C_i$,
\[
  \abs{p_{\mathbf x,c}}\abs{p_{\mathbf y,e}}
  =\abs{p_{\mathbf x,e}}\abs{p_{\mathbf y,c}}.
\]
All four original entries are nonzero.  By
\eqref{eq:intrinsic-magnitude-minor}, there is a root
$\tau_{\mathbf x,\mathbf y;c,e}\in\mu(K)$ for which the corresponding
twisted minor vanishes.

Fix $\mathbf x<\mathbf y$ in $R_i$.  If their purified rows are linearly
dependent, then~\eqref{eq:intrinsic-ordinary-minor} shows that every ordinary
minor of the original rows vanishes.  We choose dependent mode.

Otherwise, Block Orthogonality partitions $C_i$ into the nonempty level sets
$T$ of $c\mapsto\abs{p_{\mathbf x,c}}$.  Since the magnitude rows are
proportional, these are also the level sets for the other row.  Choose any
anchor $c_T\in T$.  For every $c\in T$,
\[
  \rho_{T,c}=\frac{a_{\mathbf x,c}}{a_{\mathbf x,c_T}},
  \qquad
  \sigma_{T,c}=\frac{a_{\mathbf y,c}}{a_{\mathbf y,c_T}}
\]
belong to $\mu(K)$ by~\eqref{eq:intrinsic-equal-magnitude}.  These roots
satisfy the anchor equations.  Root covariance and block orthogonality on
$T$ give
\begin{align*}
  0
  &=\sum_{c\in T}p_{\mathbf x,c}\overline{p_{\mathbf y,c}}\\
  &=p_{\mathbf x,c_T}\overline{p_{\mathbf y,c_T}}
    \sum_{c\in T}\rho_{T,c}\overline{\sigma_{T,c}}.
\end{align*}
The prefactor is nonzero, so the root sum vanishes.  This constructs a
retained certificate containing $A$.

\paragraph{Every certified table is block-orthogonal.}
Fix a certificate $\mathfrak c$, let
$A\in V_K(E_{\mathfrak c})$, and let $P$ be a singleton purification of
$A$.  We allow every zero degeneration admitted by the homogeneous
certificate equations.

Every nonzero row belongs to a unique certified row set $R_i$, and its
support is contained in $C_i$.  Rows belonging to different certified
rectangles have disjoint supports because the corresponding column sets are
disjoint.

Consider two nonzero rows $\mathbf x,\mathbf y\in R_i$.  If their supports
are disjoint, they already satisfy the block-rank-one alternative.  Suppose
instead that they overlap, and choose $e\in C_i$ with
$a_{\mathbf x,e}a_{\mathbf y,e}\ne0$.  For every $c\in C_i\setminus\{e\}$,
the twisted minor gives
\[
  a_{\mathbf x,c}a_{\mathbf y,e}
  =\tau_{\mathbf x,\mathbf y;c,e}
   a_{\mathbf x,e}a_{\mathbf y,c}.
\]
Consequently,
$a_{\mathbf x,c}\ne0$ if and only if $a_{\mathbf y,c}\ne0$.
Thus overlapping rows have the same support.  On that common support,
\eqref{eq:intrinsic-magnitude-minor} applied to the same twisted equations
gives
\[
  \abs{p_{\mathbf x,c}}\abs{p_{\mathbf y,e}}
  =\abs{p_{\mathbf x,e}}\abs{p_{\mathbf y,c}},
\]
so their purified magnitude rows are positively proportional.  Hence
$\abs P$ is block-rank one.  A certified rectangle may vanish completely or
split into smaller surviving support rectangles; the argument covers both
possibilities.

It remains to verify the dependent-or-block-orthogonal alternative for an
overlapping pair.  In dependent mode,
\eqref{eq:bo-certificate-dependent} and
\eqref{eq:intrinsic-ordinary-minor} show that all purified ordinary minors
vanish.  Since both rows are nonzero, they are linearly dependent.

Suppose the pair is in orthogonal mode.  The anchor equations imply that,
in each certified group $T$, all entries of either row survive together or
vanish together.  A group cannot survive in only one row while some other
group survives in both rows.  Indeed, if $T$ survives only in row
$\mathbf x$, if $U$ survives in both, and if $c\in T$ and $e\in U$, then the
twisted minor has a nonzero left product and a zero right product.  The
symmetric case is identical.  Since the two rows overlap, at least one
common surviving group exists; hence every surviving group is common to
both rows.

For every common surviving group $T$, root covariance and the prechecked
root sum yield
\begin{align}
  \sum_{c\in T}p_{\mathbf x,c}\overline{p_{\mathbf y,c}}
  &=p_{\mathbf x,c_T}\overline{p_{\mathbf y,c_T}}
    \sum_{c\in T}\rho_{T,c}\overline{\sigma_{T,c}}\notag\\
  &=0.
  \label{eq:certified-group-remains-orthogonal}
\end{align}
The anchor equations also imply that, within $T$, the purified entries in each
row have a common magnitude.  Therefore a certified group cannot split into
smaller magnitude blocks after specialization.  Distinct certified
groups may acquire the same purified magnitude.  Because the two magnitude
rows are proportional, every resulting magnitude block is a disjoint union
of whole common surviving certified groups.  Summing
\eqref{eq:certified-group-remains-orthogonal} over those groups shows that
the Hermitian sum on the merged block is still zero.

The preceding cases cover every specialization permitted by the homogeneous
certificate equations.  Thus $P$ is block-orthogonal and
$A\in\mathcal B_k$.
\end{proof}

\begin{remark}[The role of the working field]
\label{rem:bo-certificates-k-points}
Equation~\eqref{eq:bo-certificate-union} is an equality inside $K^{D^k}$,
exactly the ambient space required here: every table in $W_{\cF}$
is $K$-valued by \cref{eq:instance-function,eq:partial-marginal}, and every
relevant torsion quotient therefore lies in $\mu(K)$.
\end{remark}

\subsection{From a finite union to polynomial identities}

Enumerate the certificate branches as
\[
  \mathfrak C_k=\{\mathfrak c_1,\ldots,\mathfrak c_s\}
\]
and write $E_j=E_{\mathfrak c_j}$, after deleting syntactically zero
polynomials.  If some $E_j$ is empty, then
$V_K(E_j)=K^{D^k}$ and $\mathcal B_k=K^{D^k}$; in that case define
$Q_k=\varnothing$.  Otherwise define the finite set
\begin{equation*}
  Q_k=
  \left\{
    \prod_{j=1}^s e_j:
    (e_1,\ldots,e_s)\in E_1\times\cdots\times E_s
  \right\}
  \subseteq K[\mathbf X].
\end{equation*}

\begin{corollary}[Polynomial identities for the BO locus]
\label{cor:bo-product-identities}
For every $A\in K^{D^k}$,
\begin{equation}
  A\in\mathcal B_k
  \quad\Longleftrightarrow\quad
  q(A)=0\text{ for every }q\in Q_k.
  \label{eq:bo-locus-by-identities}
\end{equation}
The finite set $Q_k$ is effectively computable from $D$, $k$, and $\mu(K)$.
\end{corollary}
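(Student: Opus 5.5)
The proof combines \cref{thm:finite-bo-certificates} with the elementary fact that, over a field, a finite union of zero sets is the zero set of all products choosing one polynomial from each defining set. By~\eqref{eq:bo-certificate-union}, $A\in\mathcal B_k$ if and only if $A\in V_K(E_j)$ for some $j\in[s]$. So it suffices to show
\[
  \bigcup_{j=1}^s V_K(E_j)=V_K(Q_k),
\]
with the convention of the degenerate case.

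First I treat the degenerate case. Deleting syntactically zero polynomials does not change $V_K(E_j)$, since the zero polynomial vanishes everywhere. If some $E_j$ becomes empty, then $V_K(E_j)=K^{D^k}$, so $\mathcal B_k$ is the whole space. Also $Q_k=\varnothing$, so the right side of~\eqref{eq:bo-locus-by-identities} holds vacuously. Both sides are therefore true for every $A$.

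Now assume every $E_j$ is nonempty.

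For the forward direction, suppose $A\in V_K(E_{j_0})$. Every $q\in Q_k$ has the form $\prod_j e_j$ with $e_{j_0}\in E_{j_0}$. Since $e_{j_0}(A)=0$ and evaluation at $A$ is a ring homomorphism $K[\mathbf X]\to K$, we get $q(A)=0$.

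For the converse, suppose $A\notin V_K(E_j)$ for every $j$. Then for each $j$ I choose $e_j\in E_j$ with $e_j(A)\neq0$. The product $q=\prod_j e_j$ lies in $Q_k$, and $q(A)=\prod_j e_j(A)\neq0$ because $K$ has no zero divisors. This is the only place where an actual argument is needed, and the key point is simply that $K$ is a field.

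Finally, computability. The family $\mathfrak C_k$ is finite and effectively enumerable: the rectangle systems, set partitions, anchors and root choices range over finite sets determined by $D$, $k$ and $\mu(K)$. Retention of a branch requires checking~\eqref{eq:bo-certificate-root-sum}, which is an exact equality test in the number field generated by $\mu(K)$. Each $E_j$ is an explicit finite list of polynomials with coefficients in $\mu(K)\subseteq K$, and the syntactic zero test is exact. So the finite products forming $Q_k$ can be written down explicitly. The only subtle input is that $\mu(K)$ itself is effectively computable from the exact encodings; this is the standard number-field computation cited in the paper~\cite{Cohen1993}.
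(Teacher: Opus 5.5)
Your proposal is correct and follows the paper's proof: you use the certificate union from \cref{thm:finite-bo-certificates}, a vanishing factor from the relevant $E_j$ for the forward direction, and one nonvanishing $e_j$ per branch together with the absence of zero divisors in $K$ for the converse. Your explicit treatment of the degenerate case $Q_k=\varnothing$ and your computability argument via finite enumeration over sets determined by $D$, $k$, and $\mu(K)$ also match the paper's conventions.
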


\begin{proof}
If $A\in V_K(E_j)$ for some $j$, every polynomial $q\in Q_k$ contains a
factor from $E_j$, and that factor vanishes at $A$.  Hence every $q(A)$ is
zero.

Conversely, suppose that $A\notin V_K(E_j)$ for every $j$.  For each $j$,
choose $e_j\in E_j$ with $e_j(A)\ne0$.  Since $K$ is a field,
\[
  \left(\prod_{j=1}^s e_j\right)(A)\ne0.
\]
The resulting product belongs to $Q_k$, contradicting universal vanishing.
Together with \cref{thm:finite-bo-certificates}, this proves
\eqref{eq:bo-locus-by-identities}.  All choices used to enumerate the
$E_j$ and $Q_k$ range over finite explicitly represented sets.
\end{proof}

Combining the preceding results gives the following intrinsic test.
Let $G\in W_{\cF}$ have arity $k$, and let $P$ be any legal singleton
purification of $G$.  Then
\begin{equation*}
\begin{aligned}
  &P\text{ is block-orthogonal}\\
  &\quad\Longleftrightarrow\quad G\in\mathcal B_k\\
  &\quad\Longleftrightarrow\quad
    G\in\bigcup_{\mathfrak c\in\mathfrak C_k}V_K(E_{\mathfrak c})\\
  &\quad\Longleftrightarrow\quad
    q(G)=0\text{ for every }q\in Q_k.
\end{aligned}
\end{equation*}
The first equivalence and its independence from the choice of $P$ use
\cref{lem:purification-invariance}; the middle equivalence is
\cref{thm:finite-bo-certificates}, whose proof invokes the translations in
\cref{lem:intrinsic-purification-tests}; and the last equivalence is
\cref{cor:bo-product-identities}.

For fixed $k$, let $\InstMonoid{k}$ denote the monoid of finite $k$-labelled
$\cF$-instances and let $\zvec I$ denote the pinned partition vector of
$I\in\InstMonoid{k}$.  Precise definitions appear in
\cref{subsec:labelled-instances}.  Writing $W_{\cF}^{(k)}$ for the $k$-ary
part of the generated family, the global test is
\begin{equation*}
\begin{aligned}
  W_{\cF}^{(k)}\subseteq\mathcal B_k
  &\quad\Longleftrightarrow\quad
  \begin{gathered}
    q(G)=0\text{ for every }G\in W_{\cF}^{(k)},\\[-2pt]
    \text{and for every }q\in Q_k
  \end{gathered}\\
  &\quad\Longleftrightarrow\quad
  \begin{gathered}
    q(\zvec I)=0\text{ for every }I\in\InstMonoid{k},\\[-2pt]
    \text{and for every }q\in Q_k.
  \end{gathered}
\end{aligned}
\end{equation*}
The second equivalence is the exact parametrization
$W_{\cF}^{(k)}=\{\zvec I:I\in\InstMonoid{k}\}$ proved in
\cref{lem:pinned-realization}.  Each instance supplies one joint coordinate
assignment
\[
  X_{\mathbf a}=\zvec I(\mathbf a)
  \qquad(\mathbf a\in D^k).
\]
Hence the oracle in \cref{thm:universal-identity} tests vanishing on this
parametrized subset of $K^{D^k}$.

Here $\zvec I$ is the original pinned partition vector, and the polynomials in
$Q_k$ use its unbarred coordinates.  Thus the oracle is applied directly to
the generated table $G=\zvec I$.

\section{Universal Identities on Generated Tables}
\label{sec:universal-identities}

We continue to use the field $K$ from~\eqref{eq:working-field}.

\subsection{Labelled instances and pinned partition vectors}
\label{subsec:labelled-instances}

Fix $k\ge1$.  Let $\InstMonoid{k}$ be the commutative monoid of
label-preserving isomorphism classes of finite $k$-labelled $\cF$-instances.
The $k$ labels are carried by distinct variables.  Multiplication takes the
disjoint union of two instances and identifies variables with the same
label; the unit consists of the $k$ labelled variables and no constraints.
Unlabelled isolated variables are permitted.  For $I\in\InstMonoid{k}$,
define its pinned partition vector $\zvec I\in K^{D^k}$ by
\begin{equation}
  \zvec I(\mathbf a)
  =\sum_{\substack{\sigma:V(I)\to D\\
                    \sigma\text{ assigns }\mathbf a\text{ to the labels}}}
       \prod_{\substack{(f;v_1,\ldots,v_{r_f})\\
                         \in\mathcal C_I}}
          f\bigl(\sigma(v_1),\ldots,\sigma(v_{r_f})\bigr).
  \label{eq:pinned-partition-vector}
\end{equation}

\begin{lemma}[Pinned realization of the generated family]
\label{lem:pinned-realization}
Let $W_{\cF}^{(k)}$ be the set of $k$-ary members of $W_{\cF}$.  Then
\[
  \{\zvec I:I\in\InstMonoid{k}\}=W_{\cF}^{(k)}.
\]
Moreover, if $IJ$ denotes the labelled product, then
\[
  \zvec{IJ}=\zvec I\odot\zvec J,
\]
where $\odot$ is coordinatewise multiplication.
\end{lemma}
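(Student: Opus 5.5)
The lemma asserts two things: the set equality and the multiplicativity of $\zvec{\cdot}$. I would prove each by a direct translation between labelled instances and ordered instances with a marginalization index.

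\emph{Inclusion $W_{\cF}^{(k)}\subseteq\{\zvec I\}$.} Let $G=F_J^{[k]}$ for an instance $J$ with variables $x_1,\ldots,x_n$, $n\ge k$. Declare $x_1,\ldots,x_k$ to be the labelled variables, with $x_i$ carrying label $i$. The remaining variables become unlabelled, and the constraint multiset is kept unchanged. The labels sit on distinct variables, as the definition of $\InstMonoid{k}$ requires. An assignment $\sigma$ that pins $\mathbf a$ on the labels is exactly a tuple $(a_{k+1},\ldots,a_n)$ of hidden values. Hence \eqref{eq:pinned-partition-vector} coincides term by term with \eqref{eq:partial-marginal}, and $\zvec I=G$. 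Label-preserving isomorphism does not change the sum, so $\zvec I$ is well defined on isomorphism classes.

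\emph{Reverse inclusion.} Given a labelled instance $I$, list its labelled variables first, in label order, and then list the unlabelled ones in any order. This gives a $\CSP{\cF}$ instance with $n(I)\ge k$, and its $k$-th partial marginal equals $\zvec I$ by the same term-by-term identity. The only subtlety is that the paper requires $n\ge1$ and $t\in[n]$. This is automatic because the $k\ge1$ labelled variables are always present, even when there are no constraints. Constraint-free and isolated-variable cases are also covered: unlabelled isolated variables contribute factors of $d$, matching hidden isolated variables in $J$.

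\emph{Multiplicativity.} In $IJ$, the variable set is the disjoint union of the unlabelled variables of $I$ and of $J$, together with the $k$ shared labelled variables. The constraint multiset is the disjoint union of the two multisets. With $\mathbf a$ fixed on the labels, an assignment of $IJ$ is exactly a pair consisting of an assignment of the unlabelled variables of $I$ and one of $J$. The product over constraints factors accordingly. Distributivity then turns the double sum into the product of the two sums, which gives $\zvec{IJ}(\mathbf a)=\zvec I(\mathbf a)\zvec J(\mathbf a)$ for every $\mathbf a$.

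None of these steps is a real obstacle; the argument is essentially bookkeeping. The point needing the most care is the well-definedness of the labelled product on isomorphism classes, together with the requirement that labels lie on distinct variables. Identifications of retained variables are not needed here: repeated labels are never used, because diagonal minors are realized by repeated scope indices, as in \cref{lem:diagonal-minors}.
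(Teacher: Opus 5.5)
Your proposal is correct and follows essentially the same route as the paper. You label the retained variables of a presenting instance, and conversely list the labelled variables first and marginalize the rest, so that the pinned sum and the partial marginal coincide term by term; you then obtain $\zvec{IJ}=\zvec I\odot\zvec J$ because the two factors' unlabelled variables are independent once the labels are pinned. Your remarks on well-definedness over isomorphism classes and on $n(I)\ge k\ge1$ are bookkeeping that the paper leaves implicit.
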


\begin{proof}
Given $G=F_I^{[k]}\in W_{\cF}^{(k)}$, label the retained variables of the
presenting instance.  Equation~\eqref{eq:pinned-partition-vector} is exactly
$G$.  Conversely, order the labelled variables of a labelled instance first
and sum all unlabelled variables; this is a member of $W_{\cF}^{(k)}$.
The correspondence leaves repeated scopes, arbitrary sharing, isolated
variables, previous boundary identifications, and exact cancellation
unchanged.  In the labelled product, the unlabelled variables of the two
factors remain independent after the common labels are pinned, which gives
the coordinatewise product formula.
\end{proof}

\subsection{Tensor characters and pinned indistinguishability}

We recall the consequence of Young's constraint-function isomorphism theorem
needed below~\cite[Definitions~3--7, Section~4.1, and
Corollary~35]{Young2025}.  A constraint-function
family $\mathcal H$ is defined over a common finite domain, denoted
$V(\mathcal H)$.  Two finite families $\mathcal H$ and $\mathcal G$ are
\emph{similar} if they are equipped with an arity-preserving bijection between
their functions.  A domain bijection $\sigma:V(\mathcal H)\to V(\mathcal G)$
is an \emph{isomorphism} of similar families if, for every corresponding pair
$h\leftrightarrow g$ of arity $r$,
\[
  h(a_1,\ldots,a_r)=g(\sigma(a_1),\ldots,\sigma(a_r))
  \qquad(a_1,\ldots,a_r\in V(\mathcal H)).
\]
Elements $u,v\in V(\mathcal H)$ are \emph{twins in $\mathcal H$} if, for
every $h\in\mathcal H$ of arity $r$, every $j\in[r]$, and every
$(a_1,\ldots,a_{r-1})\in V(\mathcal H)^{r-1}$,
\begin{multline*}
  h(a_1,\ldots,a_{j-1},u,a_j,\ldots,a_{r-1})\\
  =h(a_1,\ldots,a_{j-1},v,a_j,\ldots,a_{r-1}).
\end{multline*}
A labelled $\mathcal H$-instance is \emph{simple} if the
variables in each constraint scope are pairwise distinct and, for every fixed
function $h\in\mathcal H$, no two occurrences of $h$ have scope tuples that
are permutations of one another.

For a $k$-labelled $\mathcal H$-instance $I$ and a pin
$\beta:[k]\to V(\mathcal H)$, write $Z_{\beta}^{\mathcal H}(I)$ for its
pinned partition value.  If $\mathcal H$ and $\mathcal G$ are similar, let
$I_{\mathcal H\to\mathcal G}$ be the $\mathcal G$-instance obtained by
replacing every function of $\mathcal H$ by its corresponding function of
$\mathcal G$, while retaining the variables, labels, and constraint scopes.
Under this notation, the cited result gives the following pinned form.

\begin{theorem}[Pinned constraint-function isomorphism]
\label{thm:young-pinned}
Let $\mathcal H$ and $\mathcal G$ be similar finite sets of constraint
functions over a field of characteristic zero.  Fix $k\ge1$ and pins
$\beta:[k]\to V(\mathcal H)$ and $\gamma:[k]\to V(\mathcal G)$.
If
\begin{equation*}
  Z_{\beta}^{\mathcal H}(I)
  =Z_{\gamma}^{\mathcal G}(I_{\mathcal H\to\mathcal G})
\end{equation*}
for every simple $k$-labelled $\mathcal H$-instance $I$, then the two
domains have the same size and there is an isomorphism
$\sigma:V(\mathcal H)\to V(\mathcal G)$ of the two families such that
$\sigma(\beta(i))$ and $\gamma(i)$ are twins in $\mathcal G$ for every
$i\in[k]$.

Conversely, such a domain bijection and twin replacements preserve pinned
partition values on every $k$-labelled instance, whether or not it is simple.
\end{theorem}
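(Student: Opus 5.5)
This theorem is a restatement of Young's result \cite[Corollary~35]{Young2025}, specialized to the pinned setting. The plan is therefore to translate between the two formalisms rather than to prove isomorphism from scratch. I would handle the forward direction by matching notation, and the converse by a direct substitution argument.

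\textbf{Converse direction.} This is the elementary half, and I would prove it directly. Suppose $\sigma$ is an isomorphism of the similar families and $\sigma(\beta(i))$ and $\gamma(i)$ are twins in $\mathcal G$. For any $k$-labelled instance $I$, the map $\tau\mapsto\sigma\circ\tau$ bijects assignments of $V(I)$ into $V(\mathcal H)$ extending $\beta$ onto assignments into $V(\mathcal G)$ extending $\sigma\circ\beta$. The isomorphism identity $h(a_1,\ldots,a_r)=g(\sigma(a_1),\ldots,\sigma(a_r))$ matches the products in \eqref{eq:pinned-partition-vector} term by term, so $Z^{\mathcal H}_\beta(I)=Z^{\mathcal G}_{\sigma\circ\beta}(I_{\mathcal H\to\mathcal G})$. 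Next, replace the pin value $\sigma(\beta(i))$ by $\gamma(i)$ one label at a time. The twin condition says that changing a single argument from $u$ to its twin $v$ leaves every function value unchanged. Repeated occurrences of the label, including within one scope, are handled by changing the occurrences one position at a time. Each summand is therefore unchanged, and simplicity of $I$ is never used.

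\textbf{Forward direction.} I would unpack Young's definitions of labelled instances, pinned partition functions and twins \cite[Definitions~3--7, Section~4.1]{Young2025}. The aim is to check that the hypothesis of Corollary~35 is exactly equality of pinned values on all simple $k$-labelled $\mathcal H$-instances under the function correspondence $I\mapsto I_{\mathcal H\to\mathcal G}$, and that its conclusion is an isomorphism carrying each pin to a twin of the corresponding pin. The points to verify are the following.
\begin{itemize}
  \item Young's notion of ``similar'' is an arity-preserving bijection of function symbols, as here.
  \item The ground field is an arbitrary characteristic-zero field; our application is over $K\subseteq\mathbb C$.
  \item Labels in Young's setting are distinct variables, matching $\InstMonoid{k}$.
\end{itemize}

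\textbf{Main obstacle.} I expect the delicate step to be the bookkeeping of simplicity and twins. I must confirm that Corollary~35 quantifies only over simple instances, so that our hypothesis is the weaker one, and that its twin conclusion is phrased in $\mathcal G$ at the images of the pins. If Young states the result with twin equivalence on both sides, or with pins composed through $\sigma$ differently, I would bridge the two formulations as follows. Twinness is an equivalence relation, and an isomorphism transports twins in $\mathcal H$ to twins in $\mathcal G$. These two facts suffice to rewrite the conclusion into the displayed form. Domain size equality is part of the existence of the bijection $\sigma$.
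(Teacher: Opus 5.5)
Your proposal takes essentially the paper's approach. The forward implication is imported from Young's Corollary~35 (the paper gives no independent proof of it either). Your converse is a correct, more detailed version of the paper's one-line justification: an isomorphism is a change of summation variables, and replacing a pin by a twin preserves every constraint value, with repeated label occurrences handled one position at a time.
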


The theorem is applied at a fixed triple $(k,\beta,\gamma)$: equality is
quantified over all simple instances for these fixed data, and the resulting
isomorphism may depend on the triple.

For the converse, an isomorphism is a change of summation variables, and
replacing a pin by a twin preserves every constraint value.  In our exactly
represented algebraic-number setting, the stated condition is decidable by
enumerating the bijections between the finite domains $V(\mathcal H)$ and
$V(\mathcal G)$ and testing the corresponding function tables and twin
relations exactly.

Let $p,q\ge0$.  For $f:D^{r_f}\to K$, define the function
$T_{p,q}(f):(D^{p+q})^{r_f}\to K$ by
\begin{align*}
  T_{p,q}(f)(\mathbf u_1,\ldots,\mathbf u_{r_f})
  &={}
    \prod_{\ell=1}^{p}
      f(u_{1,\ell},\ldots,u_{r_f,\ell})\\
  &\quad\cdot
    \prod_{\ell=p+1}^{p+q}
      \overline{f(u_{1,\ell},\ldots,u_{r_f,\ell})},
\end{align*}
where $\mathbf u_j\in D^{p+q}$.  When $p=q=0$, the domain is the
singleton $D^0$ and $T_{0,0}(f)$ is the constant-one function of arity
$r_f$ on that domain.  Write
\[
  T_{p,q}(\cF)=\bigl(T_{p,q}(f):f\in\cF\bigr)
\]
for this $f$-indexed tensorized family; distinct indices may initially give
the same function table.

Fix $k\ge1$.  For each boundary assignment
$\mathbf a=(a_1,\ldots,a_k)\in D^k$, introduce two independent formal
variables $X_{\mathbf a}$ and $Y_{\mathbf a}$.  When a polynomial in these
variables is evaluated at the pinned partition vector of an instance $I$,
the substitutions are
\begin{equation}
  X_{\mathbf a}\longmapsto\zvec I(\mathbf a),
  \qquad
  Y_{\mathbf a}\longmapsto\overline{\zvec I(\mathbf a)}.
  \label{eq:formal-coordinate-substitution}
\end{equation}
A monomial of $X$-degree $p$ and $Y$-degree $q$ can be written as
\begin{equation*}
  M=\prod_{i=1}^{p}X_{\mathbf a^{(i)}}
    \prod_{j=1}^{q}Y_{\mathbf b^{(j)}}.
\end{equation*}
For $I\in\InstMonoid{k}$, define
\begin{equation*}
  \chi_M(I):=M(\zvec I,\overline{\zvec I}),
\end{equation*}
using the simultaneous substitution in
\eqref{eq:formal-coordinate-substitution}.  The next lemma proves that
$\chi_M$ is a unital multiplicative character.
Here the possibly repeated indices
\[
\begin{aligned}
  \mathbf a^{(i)}&=(a_1^{(i)},\ldots,a_k^{(i)})\in D^k
    &&(i\in[p]),\\
  \mathbf b^{(j)}&=(b_1^{(j)},\ldots,b_k^{(j)})\in D^k
    &&(j\in[q])
\end{aligned}
\]
list the factors of $M$.  The parenthesized superscript numbers a factor and
is not exponentiation; for example, $a_t^{(i)}$ is coordinate $t$ of the
tuple indexing the $i$th $X$-factor.

The pin $\beta_M:[k]\to D^{p+q}$ is obtained by transposing these $p+q$
boundary assignments.  For label $t\in[k]$, set
\[
  \beta_M(t)
  =\bigl(a_t^{(1)},\ldots,a_t^{(p)},
          b_t^{(1)},\ldots,b_t^{(q)}\bigr)\in D^{p+q}.
\]
Equivalently, regarding the values $\beta_M(t)$ as columns,
\begin{equation*}
  \bigl(\beta_M(1)\ \cdots\ \beta_M(k)\bigr)
  =
  \begin{pmatrix}
    a_1^{(1)}&\cdots&a_k^{(1)}\\
    \vdots&&\vdots\\
    a_1^{(p)}&\cdots&a_k^{(p)}\\
    b_1^{(1)}&\cdots&b_k^{(1)}\\
    \vdots&&\vdots\\
    b_1^{(q)}&\cdots&b_k^{(q)}
  \end{pmatrix}.
\end{equation*}
The block of $\mathbf a$-rows or $\mathbf b$-rows is omitted when $p=0$ or
$q=0$, respectively.
For example, let $D=\{0,1\}$, $k=2$, $p=2$, $q=1$, and
\[
  M=X_{(0,1)}X_{(1,1)}Y_{(1,0)}.
\]
Then the three row assignments are $(0,1),(1,1),(1,0)$, so
\[
  \beta_M(1)=(0,1,1),
  \qquad
  \beta_M(2)=(1,1,0).
\]

\Cref{fig:tensor-character-layers} visualizes this example and the
construction used in the next lemma.

\begin{figure}[!t]
  \centering
  \resizebox{0.80\linewidth}{!}{%
  \begin{tikzpicture}[
      x=1cm,y=1.04cm,
      title/.style={font=\small\bfseries},
      panel/.style={draw,rounded corners=2pt,line width=.45pt},
      cell/.style={draw,minimum width=7mm,minimum height=5.5mm,
                   inner sep=0pt,font=\scriptsize},
      conjcell/.style={cell,double,double distance=.6pt},
      rowlabel/.style={font=\scriptsize,anchor=east},
      inst/.style={draw,rounded corners=3pt,minimum width=4.5cm,
                   minimum height=1.55cm,align=center,font=\scriptsize},
      pin/.style={draw,circle,minimum size=6.5mm,inner sep=1pt,
                  font=\scriptsize},
      layer/.style={draw,rounded corners=2pt,minimum width=4.35cm,
                    minimum height=6.5mm,align=left,font=\scriptsize},
      conjlayer/.style={layer,double,double distance=.6pt},
      arrow/.style={-{Latex[length=2mm]},line width=.55pt}
    ]
    % Panel (a): rows indexing the factors of the monomial.
    \draw[panel] (.35,7.85) rectangle (13.65,11.40);
    \node[title] at (7.00,11.05) {(a) factors of $M$ as rows};
    \node[font=\scriptsize] at (7.15,10.62) {$t=1$};
    \node[font=\scriptsize] at (7.95,10.62) {$t=2$};
    \node[rowlabel] at (6.68,10.19) {$X_{\mathbf a^{(1)}}$};
    \node[cell] at (7.15,10.19) {$0$};
    \node[cell] at (7.95,10.19) {$1$};
    \node[rowlabel] at (6.68,9.49) {$X_{\mathbf a^{(2)}}$};
    \node[cell] at (7.15,9.49) {$1$};
    \node[cell] at (7.95,9.49) {$1$};
    \node[rowlabel] at (6.68,8.79) {$Y_{\mathbf b^{(1)}}$};
    \node[conjcell] at (7.15,8.79) {$1$};
    \node[conjcell] at (7.95,8.79) {$0$};
    \node[font=\scriptsize,align=center] at (7.00,8.23)
      {single border: ordinary factor; double border: conjugated factor};

    \draw[arrow] (7.00,7.68)--(7.00,7.05)
      node[midway,right=3pt,font=\scriptsize]
      {transpose rows into $D^3$-valued columns};

    % Panel (b): bundle the columns into D^3-valued pins.
    \draw[panel] (.35,3.50) rectangle (13.65,6.85);
    \node[title] at (7.00,6.50) {(b) transpose and bundle columns};
    \node[inst] (IM) at (7.00,5.55)
      {$I^M$ has the same variables, labels, and scopes as $I$\\[2pt]
       every occurrence $f$ is replaced by $T_{2,1}(f)$};
    \node[pin] (p1) at (5.55,4.15) {$1$};
    \node[pin] (p2) at (8.45,4.15) {$2$};
    \draw[line width=.45pt] (p1.north |- IM.south)--(p1.north);
    \draw[line width=.45pt] (p2.north |- IM.south)--(p2.north);
    \node[font=\scriptsize,align=center] at (3.55,4.15)
      {$\beta_M(1)=(0,1,1)$};
    \node[font=\scriptsize,align=center] at (10.45,4.15)
      {$\beta_M(2)=(1,1,0)$};

    \draw[arrow] (7.00,3.32)--(7.00,2.89)
      node[midway,right=3pt,font=\scriptsize]
      {split each $D^3$-assignment into its three components};

    % Panel (c): component layers of a D^3-valued assignment.
    \draw[panel] (.35,-.85) rectangle (13.65,2.68);
    \node[title] at (7.00,2.33) {(c) independent component layers};
    \node[layer,minimum width=8.0cm] (L1) at (7.00,1.75)
      {layer $1$: pin $(0,1)$; value $\zvec I(0,1)$};
    \node[layer,minimum width=8.0cm] (L2) at (7.00,.92)
      {layer $2$: pin $(1,1)$; value $\zvec I(1,1)$};
    \node[conjlayer,minimum width=8.0cm] (L3) at (7.00,.09)
      {layer $3$: pin $(1,0)$; value $\overline{\zvec I(1,0)}$};
    \node[font=\scriptsize,align=center] at (7.00,-.49)
      {multiply the three layer sums to obtain $\chi_M(I)$};
  \end{tikzpicture}%
  }
  \caption{Tensor realization for the example preceding
  \cref{lem:tensor-character}.  The rows indexing the three monomial factors
  are transposed into one $D^3$-valued pin at each label.  An assignment to
  $I^M$ is then exactly three independent $D$-valued assignments to the same
  instance template $I$; double borders mark the factor and layer that use
  conjugated constraint weights.
  Hence its pinned partition value factors as
  $\zvec I(0,1)\zvec I(1,1)\overline{\zvec I(1,0)}=\chi_M(I)$.}
  \label{fig:tensor-character-layers}
\end{figure}

\begin{lemma}[Tensor realization of monomial characters]
\label{lem:tensor-character}
For $I\in\InstMonoid{k}$, let $I^M$ be the $k$-labelled
$T_{p,q}(\cF)$-instance obtained by replacing each occurrence of
$f\in\cF$ by $T_{p,q}(f)$ while retaining all variables, labels, and scopes.
Then
\begin{equation*}
  Z_{\beta_M}^{T_{p,q}(\cF)}(I^M)
  =\chi_M(I)
  =\prod_{i=1}^{p}\zvec I(\mathbf a^{(i)})
    \prod_{j=1}^{q}\overline{\zvec I(\mathbf b^{(j)})}.
\end{equation*}
Thus $\chi_M:\InstMonoid{k}\to K$ is a unital multiplicative character.
\end{lemma}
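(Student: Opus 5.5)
The plan is to expand the pinned partition value of $I^M$ directly from~\eqref{eq:pinned-partition-vector}, applied to the tensorized family $T_{p,q}(\cF)$ over the domain $D^{p+q}$. An assignment $\Sigma:V(I)\to D^{p+q}$ that extends the pin $\beta_M$ splits into its $p+q$ coordinate layers $\sigma_1,\ldots,\sigma_{p+q}:V(I)\to D$. This gives a bijection between such $\Sigma$ and tuples $(\sigma_1,\ldots,\sigma_{p+q})$ with the following property: for $\ell\le p$, the layer $\sigma_\ell$ assigns $\mathbf a^{(\ell)}$ to the labels, and for $\ell=p+j$, it assigns $\mathbf b^{(j)}$. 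This is exactly the transposition defining $\beta_M$, since coordinate $\ell$ of $\beta_M(t)$ is $a^{(\ell)}_t$ or $b^{(\ell-p)}_t$. There is no constraint coupling different layers, because the domain is a product and the pin is specified coordinatewise.

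Next, for a single constraint $(f;v_1,\ldots,v_{r_f})$, the definition of $T_{p,q}(f)$ yields
\[
  T_{p,q}(f)\bigl(\Sigma(v_1),\ldots,\Sigma(v_{r_f})\bigr)
  =\prod_{\ell=1}^{p}f\bigl(\sigma_\ell(v_1),\ldots\bigr)
   \prod_{\ell=p+1}^{p+q}\overline{f\bigl(\sigma_\ell(v_1),\ldots\bigr)}.
\]
Taking the product over the constraint multiset $\mathcal C_I$, with multiplicities and repeated scope entries, and then summing over independent layers, the sum over $\Sigma$ factors as a product of $p+q$ separate sums. The $\ell$th sum, for $\ell\le p$, is $\zvec I(\mathbf a^{(\ell)})$. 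The remaining sums are $\sum\overline{\prod f}=\overline{\zvec I(\mathbf b^{(j)})}$, using that conjugation is a field automorphism of $\mathbb C$ commuting with finite sums and products. Unlabelled and isolated variables are handled in the same way, since each contributes an unconstrained factor $D^{p+q}=\prod D$. This proves the displayed identity, and by~\eqref{eq:formal-coordinate-substitution} the middle expression is $\chi_M(I)$.

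For the character property, I would first check well-definedness on isomorphism classes, which holds because $\zvec I$ is invariant under label-preserving isomorphism. Unitality follows because the unit instance has $\zvec{1}\equiv1$, so every factor is $1$; when $p=q=0$ the empty product is $1$. Multiplicativity follows from \cref{lem:pinned-realization}: $\zvec{IJ}=\zvec I\odot\zvec J$, and complex conjugation is multiplicative, so each factor of $\chi_M(IJ)$ splits. The values lie in $K$ because $K$ is conjugation-stable.

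The only delicate point is the degenerate case $p=q=0$, where the domain is $D^0$. There the pinned value is a single empty assignment with all constraint values $1$, and I would treat it separately. Beyond that, the main care needed is in stating the layer bijection precisely, including the labels.
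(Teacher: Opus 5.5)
Your proposal is correct and follows essentially the same route as the paper. You split each $D^{p+q}$-valued assignment into $p+q$ independent $D$-valued layers pinned by the rows of $\beta_M$, factor the sum using that conjugation commutes with finite sums, and obtain unitality and multiplicativity from $\zvec{IJ}=\zvec I\odot\zvec J$ in \cref{lem:pinned-realization}. Your additional checks (well-definedness on isomorphism classes, and values lying in the conjugation-stable field $K$) are harmless details the paper leaves implicit.
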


\begin{proof}
An assignment $\eta:V(I)\to D^{p+q}$ is uniquely a tuple
$(\sigma_1,\ldots,\sigma_{p+q})$ of assignments
$\sigma_\ell:V(I)\to D$.  By the definition of $\beta_M$, the assignment
$\sigma_i$ pins the $k$ labels to $\mathbf a^{(i)}$ for $i\in[p]$, while
$\sigma_{p+j}$ pins them to $\mathbf b^{(j)}$ for $j\in[q]$.
For an assignment $\sigma:V(I)\to D$, abbreviate its constraint product by
\[
  w_I(\sigma)
  =\prod_{(f;v_1,\ldots,v_{r_f})\in\mathcal C_I}
     f\bigl(\sigma(v_1),\ldots,\sigma(v_{r_f})\bigr).
\]
Expanding the tensorized constraint functions and using this abbreviation
gives
\begin{align*}
  Z_{\beta_M}^{T_{p,q}(\cF)}(I^M)
  &=\sum_{\sigma_1,\ldots,\sigma_{p+q}}
      \prod_{i=1}^p w_I(\sigma_i)
      \prod_{j=1}^q\overline{w_I(\sigma_{p+j})}\\
  &=\prod_{i=1}^{p}\zvec I(\mathbf a^{(i)})
    \prod_{j=1}^{q}\overline{\zvec I(\mathbf b^{(j)})}.
\end{align*}
The first sum ranges over the component assignments with the pins just
specified.  The second equality holds because these component assignments
are independent, so the finite sum factors; finite sums commute with complex
conjugation.  When $p=q=0$, the domain is $D^0$ and the value is the empty
product $1$.  Finally, \cref{lem:pinned-realization} gives
$\chi_M(IJ)=\chi_M(I)\chi_M(J)$ and $\chi_M(1)=1$, so $\chi_M$ is a unital
multiplicative character.
\end{proof}

Tensorization may collapse distinct input functions.  For example, if
$g=\mathrm{i}f$, where $\mathrm{i}^2=-1$, then
$T_{1,1}(g)=T_{1,1}(f)$.  Common scalar tags restore the canonical indexing
required by \cref{thm:young-pinned}; the next lemma constructs them.

\begin{lemma}[Common scalar tags]
\label{lem:scalar-tags}
Let $(h_f:f\in\cF)$ and $(g_f:f\in\cF)$ be indexed families over finite
domains $A$ and $B$, respectively, where
$h_f:A^{r_f}\to\AlgNums$ and $g_f:B^{r_f}\to\AlgNums$ are given by exact
tables.  Suppose that, within
each family and each arity, at most one indexed table is identically zero.
One can effectively choose positive rational numbers $\lambda_f$, $f\in\cF$,
such that, within every arity, both
\[
  \{\lambda_f h_f:f\in\cF\}
  \quad\text{and}\quad
  \{\lambda_f g_f:f\in\cF\}
\]
contain no duplicate tables.
\end{lemma}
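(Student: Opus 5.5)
Within one arity we only need to rule out coincidences of the form $\lambda_f h_f=\lambda_{f'}h_{f'}$ or $\lambda_f g_f=\lambda_{f'}g_{f'}$ for distinct $f\ne f'$ of the same arity. The plan is to reduce each potential coincidence to a single forbidden value of the ratio $\lambda_f/\lambda_{f'}$. Then we choose the tags so that they avoid the finitely many forbidden ratios.

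For distinct $f,f'$ of the same arity $r$ we argue as follows. If both $h_f$ and $h_{f'}$ are identically zero, the hypothesis says this cannot happen within one family. If exactly one of them is zero, then no positive scaling can make them equal, since positive scalars preserve nonzeroness. If both are nonzero, then $\lambda_fh_f=\lambda_{f'}h_{f'}$ forces $h_f$ and $h_{f'}$ to be proportional, $h_{f'}=c\,h_f$ with $c=\lambda_f/\lambda_{f'}$. So at most one value $c_{f,f'}^{h}\in\AlgNums^\times$ is forbidden for the ratio, namely the common ratio $h_{f'}(\mathbf a)/h_f(\mathbf a)$ at any $\mathbf a$ where $h_f(\mathbf a)\ne0$. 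We compute this value exactly from the tables and check that it is consistent on the whole table. If the tables are not proportional, or the ratio is not a positive rational, nothing is forbidden. The same holds for $g$. This produces a finite, effectively computable set $\Xi$ of forbidden positive rational ratios, with at most two entries per unordered pair.

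Next we choose the tags so that $\lambda_f/\lambda_{f'}\notin\Xi$ for every relevant pair. I would enumerate $\cF=\{f_1,\ldots,f_n\}$ and take $\lambda_{f_i}=p^{\,i}$, where $p$ is a prime. Then the ratio $\lambda_{f_i}/\lambda_{f_j}=p^{\,i-j}$ with $i\ne j$. It suffices to choose $p$ so that no element of $\Xi$ is a nonzero integral power of $p$. Let $P_\Xi$ be the set of primes dividing the numerator or denominator of any element of $\Xi\setminus\{1\}$; this set is finite. Any prime $p\notin P_\Xi$ works, because $p^{m}$ with $m\ne0$ equals some $\xi\in\Xi$ only if $p$ divides the numerator or denominator of $\xi$. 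The element $\xi=1$ is never equal to $p^m$ with $m\ne0$. Such a prime $p$ is found effectively by trial, and the same tags serve both families simultaneously.

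The main obstacle is the degenerate zero case and the requirement that exact proportionality tests be effective. Effectiveness follows from exact number-field arithmetic. The zero case is handled by the stated hypothesis that each arity contains at most one zero table per family, which is exactly what makes scaling sufficient. The claim is nonvacuous only because duplicates among the nonzero tables can arise, as in the example $T_{1,1}(\mathrm{i}f)=T_{1,1}(f)$.
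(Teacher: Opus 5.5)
Your proof is correct and rests on the same key observation as the paper's: for distinct same-arity indices, each family forbids at most one value of the tag ratio, and the hypothesis excludes the case of two zero tables. The only difference is how the tags are then picked. The paper chooses them greedily, letting each new tag avoid the finitely many excluded rationals by enumerating positive integers. You instead precompute the set of forbidden ratios and take $\lambda_{f_i}=p^i$ for a prime $p$ outside their numerators and denominators, a closed-form choice that works equally well.
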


\begin{proof}
Choose the tags sequentially.  For a previously treated function $e$ of the
same arity, the equality
$\lambda h_f=\lambda_e h_e$ excludes at most one positive rational value of
$\lambda$; the all-zero case cannot occur for two distinct indices by
hypothesis.  The same statement holds for the $g$-family.  Thus only finitely
many rational values are excluded at each stage, and enumerating the positive
integers finds a valid tag.
\end{proof}

The tensorized families associated with monomial characters satisfy the
hypothesis of \cref{lem:scalar-tags}.  At positive tensor degree,
$T_{p,q}(f)$ is zero exactly when $f$ is zero, and exact deduplication of the
input leaves at most one zero function of each arity.  At degree zero every
function table is the nonzero constant-one table.

\begin{lemma}[Effective comparison of monomial characters]
\label{lem:character-comparison}
Given two monomials in the coordinates
$\{X_{\mathbf a},Y_{\mathbf a}:\mathbf a\in D^k\}$, it is decidable whether
their associated characters agree on every element of $\InstMonoid{k}$.
\end{lemma}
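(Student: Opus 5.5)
The plan is to reduce equality of the two characters $\chi_M$ and $\chi_{M'}$ to a pinned isomorphism test and then apply \cref{thm:young-pinned}. Let $M$ have bidegree $(p,q)$ and $M'$ have bidegree $(p',q')$. By \cref{lem:tensor-character}, $\chi_M(I)=Z^{T_{p,q}(\cF)}_{\beta_M}(I^M)$ and $\chi_{M'}(I)=Z^{T_{p',q'}(\cF)}_{\beta_{M'}}(I^{M'})$ for every $I\in\InstMonoid{k}$. The tensorized families are indexed by $f\in\cF$ and so are automatically similar. However, they may contain duplicate tables, so the map $I\mapsto I^M$ need not be the canonical replacement required by the theorem.

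To repair this, first apply \cref{lem:scalar-tags} to the two indexed families $h_f=T_{p,q}(f)$ and $g_f=T_{p',q'}(f)$. The remark after that lemma checks its hypothesis. This yields positive rationals $\lambda_f$ for which both tagged families $\mathcal H=(\lambda_f h_f)$ and $\mathcal G=(\lambda_f g_f)$ are duplicate-free within each arity. Tagging multiplies both pinned values on an instance $I$ by the same nonzero factor $\prod_{(f;\dots)\in\mathcal C_I}\lambda_f$. Hence $\chi_M=\chi_{M'}$ on $\InstMonoid{k}$ if and only if $Z^{\mathcal H}_{\beta_M}(I^{\mathcal H})=Z^{\mathcal G}_{\beta_{M'}}(I^{\mathcal G})$ for every $k$-labelled $\cF$-instance $I$.

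Because the tagged families are duplicate-free, every $\mathcal H$-instance is $I^{\mathcal H}$ for a unique $\cF$-instance $I$, and the bijection $\mathcal H\leftrightarrow\mathcal G$ sends it to $I^{\mathcal G}$. The hypothesis of \cref{thm:young-pinned} for all simple instances is therefore implied by character equality. The converse part of the theorem gives equality on all instances, simple or not, from an isomorphism with twin-compatible pins.

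Character equality is thus equivalent to the following finite condition. The domains $D^{p+q}$ and $D^{p'+q'}$ have the same size, and some bijection $\sigma:D^{p+q}\to D^{p'+q'}$ satisfies two requirements. First, $\lambda_f h_f(\mathbf u)=\lambda_f g_f(\sigma\mathbf u)$ holds for all $f$ and all arguments. Second, $\sigma(\beta_M(t))$ and $\beta_{M'}(t)$ are twins in $\mathcal G$ for each $t\in[k]$.

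The algorithm enumerates all bijections and checks the tables and the twin relations exactly in $K$. All these tables are computable, since the conjugates of the input values lie in $K$.

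The main obstacle is the careful handling of the canonical indexing and the degenerate cases, which are the reason for the tagging step. The degenerate cases are $p+q=0$, where the domain is $D^0$ with constant-one functions, and an identically zero $f$, where the character vanishes on any instance using $f$. These must be consistent with the twin conclusion of \cref{thm:young-pinned}. I would verify in particular that a simple $\mathcal H$-instance corresponds to a legitimate element of $\InstMonoid{k}$. The labels are carried by distinct variables and isolated variables are allowed, so it does. The rest is routine exact arithmetic over $K$.
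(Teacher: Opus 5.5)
Your proposal is correct and follows essentially the same route as the paper. You realize both characters as pinned partition values of the tensorized families via \cref{lem:tensor-character}, and apply common scalar tags from \cref{lem:scalar-tags} so that both families become duplicate-free and similar while the two values acquire the same nonzero factor. You then use \cref{thm:young-pinned} on simple instances, together with its converse, to reduce the question to a finite search over bijections $D^{p+q}\to D^{p'+q'}$ with exact table and twin tests. The degenerate cases you flag ($p+q=0$ and identically zero input functions) are handled in the paper just as you indicate, by the hypothesis check stated immediately after \cref{lem:scalar-tags}.
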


\begin{proof}
Let $M$ have $X$-degree $p$ and $Y$-degree $q$, and let $N$ have
$X$-degree $r$ and $Y$-degree $s$.  The decision has three stages: represent
the two monomial characters by tensorized constraint-function families,
repair possible collisions in those families with common scalar tags, and
then apply the finite isomorphism-and-twin criterion in
\cref{thm:young-pinned}.

\paragraph{Tensorized representations.}
Write
\[
\begin{aligned}
  \mathcal H_M=(h_f:f\in\cF)
    &=\bigl(T_{p,q}(f):f\in\cF\bigr),\\
  \mathcal H_N=(g_f:f\in\cF)
    &=\bigl(T_{r,s}(f):f\in\cF\bigr).
\end{aligned}
\]
These families are defined over
\[
  A_M=D^{p+q},
  \qquad
  A_N=D^{r+s},
\]
and carry the pins $\beta_M:[k]\to A_M$ and
$\beta_N:[k]\to A_N$.  For an original labelled instance template
$I\in\InstMonoid{k}$, let $I^M$ and $I^N$ be its interpretations by the two
families, as in \cref{lem:tensor-character}.  That lemma gives
\begin{equation}
  Z_{\beta_M}^{\mathcal H_M}(I^M)=\chi_M(I),
  \qquad
  Z_{\beta_N}^{\mathcal H_N}(I^N)=\chi_N(I).
  \label{eq:two-monomial-realizations}
\end{equation}
Thus $A_M$ and $A_N$ are domains for the auxiliary constraint functions;
the arguments of both characters remain the original templates
$I\in\InstMonoid{k}$.

\paragraph{Common scalar tags.}
Tensorization may make two differently indexed functions equal.  Apply
\cref{lem:scalar-tags} to choose common positive rational tags $\lambda_f$
that make both indexed families duplicate-free, and put
\[
\begin{aligned}
  \widehat h_f&=\lambda_fh_f,
  &\widehat g_f&=\lambda_fg_f,\\
  \widehat{\mathcal H}_M&=(\widehat h_f:f\in\cF),
  &\widehat{\mathcal H}_N&=(\widehat g_f:f\in\cF).
\end{aligned}
\]
Both tagged families are duplicate-free within each arity.  If $N_f(I)$ is
the number of occurrences of $f$ in the original template $I$, interpreting
that template with the tagged tables multiplies both values in
\eqref{eq:two-monomial-realizations} by the same factor
\begin{equation*}
  \tau(I)=\prod_{f\in\cF}\lambda_f^{N_f(I)}.
\end{equation*}
This factor is nonzero.  Denoting the two tagged interpretations by
$\widehat I^M$ and $\widehat I^N$, define
\begin{align*}
  \widehat Z_M(I)
  &:=Z_{\beta_M}^{\widehat{\mathcal H}_M}(\widehat I^M)
    =\tau(I)\chi_M(I),\\
  \widehat Z_N(I)
  &:=Z_{\beta_N}^{\widehat{\mathcal H}_N}(\widehat I^N)
    =\tau(I)\chi_N(I).
\end{align*}
Since the same nonzero factor occurs on both sides, we have
\begin{equation}
\begin{aligned}
  &\chi_M(I)=\chi_N(I)
    \text{ for every }I\in\InstMonoid{k}\\
  &\qquad\Longleftrightarrow\qquad
    \widehat Z_M(I)=\widehat Z_N(I)
    \text{ for every }I\in\InstMonoid{k}.
\end{aligned}
  \label{eq:character-agreement-tagged-agreement}
\end{equation}

\paragraph{Finite isomorphism-and-twin criterion.}
The tagged families are genuine sets of size $\abs{\cF}$, and
\[
  \widehat h_f\longleftrightarrow\widehat g_f
  \qquad(f\in\cF)
\]
is an arity-preserving bijection.  They are therefore similar.  Because the
tables are duplicate-free, every labelled
$\widehat{\mathcal H}_M$-instance corresponds uniquely to an original
$\cF$-instance template, and replacing $\widehat h_f$ by
$\widehat g_f$ produces its $\widehat{\mathcal H}_N$ interpretation.

If the right-hand side of
\eqref{eq:character-agreement-tagged-agreement} holds, it holds in
particular for every simple instance.  By \cref{thm:young-pinned}, the
domains have equal size and there is a bijection
\begin{equation}
  \sigma:A_M\to A_N
  \label{eq:character-comparison-domain-bijection}
\end{equation}
that is an isomorphism of the two tagged families and satisfies
\begin{equation}
  \sigma(\beta_M(t))
  \text{ and }\beta_N(t)
  \text{ are twins in }\widehat{\mathcal H}_N
  \qquad(t\in[k]).
  \label{eq:character-comparison-pin-twins}
\end{equation}
Conversely, such a bijection maps every assignment of
$\widehat I^M$ to an assignment of $\widehat I^N$, preserves each
constraint weight, and preserves the pins through the twin conditions.
The converse direction of \cref{thm:young-pinned} therefore gives equality of
the two tagged pinned values
for every $I$.  Since $\tau(I)\ne0$, division by the common factor gives
$\chi_M(I)=\chi_N(I)$ for every $I\in\InstMonoid{k}$.

The resulting decision is finite.  If $\abs{A_M}\ne\abs{A_N}$, no
bijection in \eqref{eq:character-comparison-domain-bijection} exists and
the characters are different.  Otherwise enumerate every bijection
$\sigma:A_M\to A_N$.  For each one, test exactly whether all corresponding
tagged function tables are carried to one another and whether
\eqref{eq:character-comparison-pin-twins} holds.  The characters agree if
and only if some bijection passes.  All domains and tables are finite, so
the enumeration terminates.
\end{proof}

\subsection{The universal identity oracle}

We use the elementary linear independence of distinct monoid characters.

\begin{lemma}[Linear independence of monoid characters]
\label{lem:dedekind-independence}
Let $M$ be a monoid, let $L$ be a field of characteristic zero, and let
\[
  \chi_1,\ldots,\chi_s:M\to L
\]
be pairwise distinct unital multiplicative characters.  If
$c_1,\ldots,c_s\in L$ satisfy
\begin{equation}
  \sum_{i=1}^s c_i\chi_i(x)=0
  \qquad\text{for every }x\in M,
  \label{eq:dedekind-assumed-relation}
\end{equation}
then $c_1=\cdots=c_s=0$.  Equivalently,
$\chi_1,\ldots,\chi_s$ are linearly independent in the vector space $L^M$.
\end{lemma}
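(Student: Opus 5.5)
We will prove this by the classical Dedekind--Artin argument, inducting on $s$ and deleting one character at a time. The case $s=1$ is immediate: evaluating \eqref{eq:dedekind-assumed-relation} at the identity $1\in M$ gives $c_1\chi_1(1)=c_1=0$, because $\chi_1$ is unital. Note that characters may vanish on parts of $M$, so we will only ever evaluate at elements where the argument does not require invertibility; unitality at $1$ is the only nonvanishing input we use.

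For the induction step, suppose $s\ge2$ and that a nontrivial relation \eqref{eq:dedekind-assumed-relation} exists. We choose one with the minimal number of nonzero coefficients and discard the characters whose coefficient is zero. This relation has at least two nonzero coefficients, since a single term fails at $x=1$. Because $\chi_1\ne\chi_2$, we pick $y\in M$ with $\chi_1(y)\ne\chi_2(y)$.

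We then form two relations. Replacing $x$ by $yx$ and using multiplicativity gives
\[
  \sum_i c_i\chi_i(y)\chi_i(x)=0 \qquad\text{for all }x\in M,
\]
and multiplying the original relation by $\chi_1(y)$ gives
\[
  \sum_i c_i\chi_1(y)\chi_i(x)=0 \qquad\text{for all }x\in M.
\]
Subtracting the second from the first yields
\[
  \sum_{i\ge2}c_i\bigl(\chi_i(y)-\chi_1(y)\bigr)\chi_i(x)=0 .
\]
The $\chi_1$ term cancels, while the $\chi_2$ coefficient $c_2(\chi_2(y)-\chi_1(y))$ is nonzero. This produces a nontrivial relation with strictly fewer nonzero coefficients, which contradicts minimality. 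Commutativity of $M$ is not even needed here, because $\chi_i(yx)=\chi_i(y)\chi_i(x)$ holds in any monoid.

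The only delicate point is that $\chi_1(y)$ may be zero, which would break the usual proof that divides by character values. The subtraction above avoids division entirely, so the argument goes through unchanged. The characteristic-zero hypothesis is not used; the statement holds over any field. The final equivalence with linear independence in $L^M$ is just a restatement.
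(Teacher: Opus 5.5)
Your proposal is correct and follows essentially the same route as the paper's proof. Both use a minimal nontrivial relation, the identity element to dispose of the one-term case, a witness separating two characters, and a division-free subtraction of the relation evaluated at $ux$ against a scalar multiple of the relation at $x$. You are also right that neither the characteristic-zero hypothesis nor commutativity is actually used.
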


\begin{proof}
Suppose the conclusion is false, and choose a nontrivial relation of the
form~\eqref{eq:dedekind-assumed-relation} involving the minimum possible
number $s$ of characters.  After deleting zero terms and relabelling, we may
assume
\begin{equation}
  c_i\ne0
  \qquad(i\in[s]).
  \label{eq:dedekind-nonzero-coefficients}
\end{equation}
If $s=1$, evaluation at the monoid identity gives
\[
  0=c_1\chi_1(1)=c_1,
\]
contrary to \eqref{eq:dedekind-nonzero-coefficients}.  Hence $s\ge2$.

Because $\chi_1$ and $\chi_s$ are distinct functions on $M$, choose and fix
$u\in M$ such that
\begin{equation}
  \chi_1(u)\ne\chi_s(u).
  \label{eq:dedekind-fixed-witness}
\end{equation}
For an arbitrary $v\in M$, evaluate the assumed relation at $uv$ and use
multiplicativity:
\begin{equation}
\begin{aligned}
  0
  &=\sum_{i=1}^s c_i\chi_i(uv)\\
  &=\sum_{i=1}^s c_i\chi_i(u)\chi_i(v).
\end{aligned}
  \label{eq:dedekind-relation-at-uv}
\end{equation}
Evaluation at $v$, followed by multiplication by the fixed scalar
$\chi_s(u)$, gives
\begin{equation}
  0
  =\chi_s(u)\sum_{i=1}^s c_i\chi_i(v)
  =\sum_{i=1}^s c_i\chi_s(u)\chi_i(v).
  \label{eq:dedekind-scaled-relation-at-v}
\end{equation}
Subtracting \eqref{eq:dedekind-scaled-relation-at-v} from
\eqref{eq:dedekind-relation-at-uv} yields
\begin{equation}
  0=\sum_{i=1}^{s-1}
      c_i\bigl(\chi_i(u)-\chi_s(u)\bigr)\chi_i(v)
  \qquad\text{for every }v\in M.
  \label{eq:dedekind-shorter-relation}
\end{equation}
Indeed, the term indexed by $s$ has coefficient zero.  Because $u$ is now
fixed, the quantities
\[
  d_i=c_i\bigl(\chi_i(u)-\chi_s(u)\bigr)
  \qquad(i\in[s-1])
\]
are fixed scalars in $L$, independent of $v$.  Moreover,
\[
  d_1=c_1\bigl(\chi_1(u)-\chi_s(u)\bigr)\ne0
\]
by \cref{eq:dedekind-nonzero-coefficients,eq:dedekind-fixed-witness}.
Consequently, \eqref{eq:dedekind-shorter-relation} is a nontrivial linear
relation involving at most $s-1$ of the characters, contradicting the
minimality of $s$.
\end{proof}

\begin{theorem}[Universal identity oracle]
\label{thm:universal-identity}
Fix $k\ge1$.  The input is an ordinary commutative polynomial
\[
  P\in\AlgNums[X_{\mathbf a},Y_{\mathbf a}:\mathbf a\in D^k],
\]
given as a finite list of monomials with exactly encoded algebraic-complex
coefficients.  The $2\abs{D}^k$ symbols
$X_{\mathbf a},Y_{\mathbf a}$ are independent formal variables; no
homogeneity assumption is imposed.  For $I\in\InstMonoid{k}$, the notation
$P(\zvec I,\overline{\zvec I})$ means the simultaneous substitution
\begin{equation*}
  X_{\mathbf a}=\zvec I(\mathbf a),
  \qquad
  Y_{\mathbf a}=\overline{\zvec I(\mathbf a)}
  \qquad(\mathbf a\in D^k),
\end{equation*}
as in \eqref{eq:formal-coordinate-substitution}.  There is a total exact
algorithm deciding whether
\begin{equation}
  P(\zvec I,\overline{\zvec I})=0
  \qquad\text{for every }I\in\InstMonoid{k}.
  \label{eq:universal-star-identity}
\end{equation}
\end{theorem}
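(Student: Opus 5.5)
The plan is to reduce \eqref{eq:universal-star-identity} to a finite collection of coefficient-sum tests, combining \cref{lem:tensor-character}, \cref{lem:character-comparison}, and \cref{lem:dedekind-independence}.

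First write $P=\sum_{i=1}^s c_iM_i$ with pairwise syntactically distinct monomials $M_i$ and nonzero exactly encoded coefficients $c_i\in\AlgNums$. If $s=0$, accept immediately. By \cref{lem:tensor-character}, each $M_i$ induces a unital multiplicative character $\chi_{M_i}:\InstMonoid{k}\to K\subseteq\AlgNums$. By the definition of the substitution \eqref{eq:formal-coordinate-substitution}, for every $I\in\InstMonoid{k}$,
\[
  P(\zvec I,\overline{\zvec I})=\sum_{i=1}^s c_i\chi_{M_i}(I).
\]
Note that the constant monomial gives the constant-one character. This character arises from $p=q=0$ and is already covered by \cref{lem:tensor-character}.

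Next, decide which of these characters coincide. For each pair $i<j$, run the decision procedure of \cref{lem:character-comparison} on $(M_i,M_j)$. Equality of characters, meaning agreement on all of $\InstMonoid{k}$, is an equivalence relation. The $\binom s2$ answers therefore determine a partition of $[s]$ into classes $J_1,\ldots,J_u$. Within each class all characters are one and the same function $\psi_\ell$, and the functions $\psi_1,\ldots,\psi_u$ are pairwise distinct. Regrouping gives
\[
  P(\zvec I,\overline{\zvec I})=\sum_{\ell=1}^u\Bigl(\sum_{i\in J_\ell}c_i\Bigr)\psi_\ell(I).
\]

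The algorithm now tests exactly, by number-field arithmetic, whether $\sum_{i\in J_\ell}c_i=0$ for every $\ell$, and accepts exactly in that case. For soundness: if every class sum vanishes, the displayed expression is identically zero. For the converse, apply \cref{lem:dedekind-independence} with $M=\InstMonoid{k}$ and $L=\AlgNums$, which has characteristic zero. The $\psi_\ell$ are pairwise distinct unital characters, so universal vanishing forces every class coefficient to be zero. For totality: the monomial list is finite, each comparison call halts by \cref{lem:character-comparison}, and the final equality tests are exact.

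The substantive difficulty, namely deciding equality of characters on an infinite monoid, is already absorbed into \cref{lem:character-comparison}. That lemma in turn relies on the scalar tags and the pinned isomorphism theorem. Two points still need care here. The first is that the relation produced by the comparison oracle is genuinely transitive, which holds because it decides actual functional equality. The second is that conjugated coordinates are handled by the $Y$-variables inside $T_{p,q}$, so no separate treatment of conjugation is needed.
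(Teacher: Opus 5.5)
Your proposal is correct and follows the paper's proof essentially step for step. You realize each monomial as a character via \cref{lem:tensor-character}, partition the characters into equality classes with \cref{lem:character-comparison}, and reduce universal vanishing to exact class-coefficient-sum tests using \cref{lem:dedekind-independence}. The only difference is that you apply the independence lemma over $\AlgNums$ rather than over the number field generated by $K$ and the coefficients of $P$, which is harmless since both fields have characteristic zero.
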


\begin{proof}
Let $L$ be the number field generated over $K$ by the finitely many
coefficients of $P$.  It is effectively constructible from their exact
encodings.  Every monomial character takes values in $K\subseteq L$, so we
may regard all of them as $L$-valued characters.

Expand $P$ into finitely many monomials and combine identical monomials.  By
\cref{lem:character-comparison}, we can decide equality of every pair of the
resulting characters and partition them into equivalence classes.  Choosing
one representative $\chi_E$ for every class $E$ in the resulting finite set
$\mathcal E_{\mathrm{char}}$ gives, for every
$I\in\InstMonoid{k}$,
\[
  P(\zvec I,\overline{\zvec I})
  =\sum_{E\in\mathcal E_{\mathrm{char}}}
     \left(\sum_{M\in E}c_M\right)\chi_E(I),
\]
where $c_M\in L$ is the coefficient of $M$.  The characters $\chi_E$ are
pairwise distinct, so \cref{lem:dedekind-independence} makes them linearly
independent over $L$.  Therefore~\eqref{eq:universal-star-identity} holds if
and only if
\[
  \sum_{M\in E}c_M=0
  \qquad\text{for every }E\in\mathcal E_{\mathrm{char}}.
\]
These are finitely many exact algebraic-number equality tests.

All tensorized function tables, scalar-tag searches, domain bijections, table
isomorphisms, twin tests, and coefficient sums are finite and effective.  If
one class sum is nonzero, \cref{lem:dedekind-independence} says that the
displayed linear combination is not the zero function on $\InstMonoid{k}$.
Hence a
finite labelled instance $I$ satisfies
$P(\zvec I,\overline{\zvec I})\ne0$, and the algorithm rejects.
\end{proof}

\subsection{Degree-multiple generated families}
\label{subsec:degree-multiple-identities}

Fix $\delta\ge1$.  Let $\mathfrak M_{k,\delta}$ be the submonoid of
$\InstMonoid{k}$ consisting of the labelled instances in which every
variable has occurrence degree divisible by $\delta$.  It is a submonoid:
in a labelled product, the degrees of identified labels add, and zero is
divisible by $\delta$, so the constraint-free labelled unit belongs to it.
The proof of \cref{lem:pinned-realization} restricts verbatim to give
\begin{equation*}
  \{\zvec I:I\in\mathfrak M_{k,\delta}\}
  =\{G\in\WFdeg{\delta}:\arity(G)=k\}.
\end{equation*}

To compare monomial characters after restriction to
$\mathfrak M_{k,\delta}$, we use a root-of-unity filter that reduces the
problem to finitely many ordinary character comparisons on
$\InstMonoid{k}$.

\begin{lemma}[Degree-filtered comparison of monomial characters]
\label{lem:degree-filtered-character-comparison}
Fix $k,\delta\ge1$.  Given two monomials $M,N$ in the coordinates
$\{X_{\mathbf a},Y_{\mathbf a}:\mathbf a\in D^k\}$ and their associated
characters $\chi_M,\chi_N:\InstMonoid{k}\to K$, it is decidable whether
\begin{equation}
  \chi_M(I)=\chi_N(I)
  \qquad\text{for every }I\in\mathfrak M_{k,\delta}.
  \label{eq:degree-filtered-character-goal}
\end{equation}
\end{lemma}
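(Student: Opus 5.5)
The idea is to replace the submonoid $\mathfrak M_{k,\delta}$ by a finite family of twisted characters on the full monoid $\InstMonoid{k}$. Then \cref{lem:character-comparison,lem:dedekind-independence} do the rest.

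\textbf{Step 1: twists.} For each instance $I\in\InstMonoid{k}$, record the residue modulo $\delta$ of the occurrence degree of every variable. For a label $t\in[k]$, write $\deg_t(I)$ for its degree. For an exponent vector $\mathbf s\in C_\delta^{k}$ (labels) and $s_0\in C_\delta$ (unlabelled variables), I would twist each function $f$ into a new function that depends on a hidden cyclic "phase" coordinate. Concretely, I enlarge the domain to $D\times C_\delta$ and set
\[
  f^{\sharp}((u_1,j_1),\ldots,(u_r,j_r))=f(u_1,\ldots,u_r)\prod_{i}\omega_\delta^{j_i}.
\]
Summing a free variable's phase over $C_\delta$ then produces the factor in \eqref{eq:cyclic-character-filter}. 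That factor equals $\delta$ if the variable's degree is divisible by $\delta$, and $0$ otherwise.

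Carrying this through the tensor realization of \cref{lem:tensor-character} gives a modified character
\[
  \widetilde\chi_M(I)=\delta^{\#\text{unlabelled vars}}\chi_M(I)\,[I\in\text{degree filter on unlabelled vars}].
\]
The labels are pinned, so their phases are chosen via the pin. Averaging over the label phases $\mathbf s\in C_\delta^k$ with the factor $\omega_\delta^{-\dots}$ removes the label degrees. The net effect is a finite family of characters
\[
  \chi_{M,\mathbf j}(I)=\chi_M(I)\,\omega_\delta^{\sum_t j_t\deg_t(I)}\,\mathbf 1[\text{all unlabelled degrees}\equiv0]\cdot c^{n(I)}.
\]
Each of these is multiplicative on $\InstMonoid{k}$, and each is realized as a pinned partition function of a tensor-tagged family.

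\textbf{Step 2: reduction to finitely many comparisons.} On $\mathfrak M_{k,\delta}$, $\chi_M=\chi_N$ holds if and only if
\[
  \sum_{\mathbf j\in C_\delta^k}\bigl(\chi_{M,\mathbf j}-\chi_{N,\mathbf j}\bigr)\equiv0
\]
on $\InstMonoid{k}$, after normalizing the unlabelled factor. By \eqref{eq:cyclic-character-filter}, this filtered sum equals $\delta^k\,[I\in\mathfrak M_{k,\delta}]\,(\chi_M-\chi_N)(I)$ up to the common nonzero factor.

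I then partition the $2\delta^k$ characters into equality classes. Each pairwise comparison is decided by \cref{lem:character-comparison}, applied to the tensorized and phased families with the scalar tags of \cref{lem:scalar-tags}; \cref{thm:young-pinned} is valid over the phased domain. \Cref{lem:dedekind-independence} then shows that the identity holds exactly when every class has zero coefficient sum. This is a finite exact test.

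\textbf{Main obstacle.} The hard step is realizing the degree indicator for unlabelled variables inside a single pinned partition function, so that Young's theorem still applies. The summed phase coordinate must be attached once per variable, not once per occurrence. My first attempt would be a unary-free encoding: give each occurrence of a variable its own phase factor $\omega^{j}$, with $j$ a component of the variable's value. Then the total weight for a variable is $\omega^{j\cdot\deg}$, and summing $j$ over $C_\delta$ gives exactly the filter. Because $j$ is part of the variable's value, it is shared across occurrences automatically.

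Two points need checking. First, the twisted family is still a finite constraint-function family over the domain $(D\times C_\delta)^{p+q}$; the phase only needs to sit in one layer. Second, the pin for labels fixes $j_t$, which yields the twist $\omega^{j_t\deg_t}$; averaging over $j_t$ then handles the labels.
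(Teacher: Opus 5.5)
Your proposal is essentially the paper's proof. You augment each tensorized function by one cyclic coordinate weighted by $\omega_\delta^{s_1+\cdots+s_{r_f}}$, pin the label phases to $\mathbf j\in C_\delta^k$, and sum to get $\delta^{\abs{V(I)}}\mathbf 1_{\{I\in\mathfrak M_{k,\delta}\}}\chi_M(I)$; you then compare the $2\delta^k$ characters with the scalar-tagged Young criterion and Dedekind independence. Two small corrections: the $\omega_\delta^{-\cdots}$ reweighting is unnecessary, since the plain sum already forces label degrees $\equiv0$ (your Step~2 in fact uses it). Also, the phase should be a single extra coordinate, i.e.\ domain $D^{p+q}\times C_\delta$ rather than $(D\times C_\delta)^{p+q}$. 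Then the factor $\delta^{\#\text{unlabelled}}$ is common to $M$ and $N$ even when their total degrees differ, and the case $p=q=0$ is covered.
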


\begin{proof}
We implement the comparison by tensorization, cyclic filtering, finite
character comparison, and the linear independence of distinct monoid
characters.

\paragraph{Tensorized representations.}
Let
\[
  \mathcal H_M=(h_f:f\in\cF)
  \quad\text{and}\quad
  \mathcal H_N=(g_f:f\in\cF)
\]
be the tensorized families associated with $M$ and $N$, defined over the
finite domains $A_M$ and $A_N$, with pins
$\beta_M:[k]\to A_M$ and $\beta_N:[k]\to A_N$, respectively.  For an
original labelled $\cF$-instance template $I\in\InstMonoid{k}$, let $I^M$
and $I^N$ denote its interpretations by these two families.  By
\cref{lem:tensor-character},
\begin{equation}
  Z_{\beta_M}^{\mathcal H_M}(I^M)=\chi_M(I),
  \qquad
  Z_{\beta_N}^{\mathcal H_N}(I^N)=\chi_N(I).
  \label{eq:degree-filter-monomial-realizations}
\end{equation}

\paragraph{Cyclic degree filter.}
Recall that
\[
  C_\delta=\mathbb Z/\delta\mathbb Z
  =\{0,1,\ldots,\delta-1\},
\]
with addition modulo $\delta$.  This is an auxiliary cyclic domain, separate
from the original CSP domain $D$, introduced solely to filter occurrence
degrees modulo $\delta$.  Let $L_\delta=K(\omega_\delta)$.  For an input
function $f\in\cF$ of arity
$r_f$, define the everywhere-nonzero function
\begin{equation*}
  u_f(s_1,\ldots,s_{r_f})
  =\omega_\delta^{s_1+\cdots+s_{r_f}}.
\end{equation*}
Augment $h_f$ and $g_f$ to functions over
$A_M\times C_\delta$ and $A_N\times C_\delta$, respectively, by
\begin{multline*}
  \widehat h_f\bigl((a_1,s_1),\ldots,(a_{r_f},s_{r_f})\bigr)\\
  =h_f(a_1,\ldots,a_{r_f})
     u_f(s_1,\ldots,s_{r_f}),
\end{multline*}
and define $\widehat g_f$ by the same formula with $g_f$ in place of $h_f$.
For $\mathbf s=(s_1,\ldots,s_k)\in C_\delta^k$, interpret an original
template $I$ using the $\widehat h_f$-family and pin label $t$ to
$(\beta_M(t),s_t)$.  Denote the resulting pinned value by
\[
  \rho^M_{\mathbf s}(I).
\]
Define $\rho^N_{\mathbf s}(I)$ analogously using the $\widehat g_f$-family
and the pins $(\beta_N(t),s_t)$.  Coordinatewise multiplication under the
labelled product shows that every
\[
  \rho^M_{\mathbf s},\rho^N_{\mathbf s}:
  \InstMonoid{k}\to L_\delta
\]
is a unital multiplicative character.

\paragraph{Formula for the filter.}
Fix $I\in\InstMonoid{k}$.  For an assignment
$\theta:V(I)\to C_\delta$ to the cyclic coordinate, the product of all
auxiliary factors is
\begin{equation}
\begin{aligned}
  &\prod_{(f;v_1,\ldots,v_{r_f})\in\mathcal C_I}
    u_f\bigl(\theta(v_1),\ldots,\theta(v_{r_f})\bigr)\\
  &\quad={}
    \omega_\delta^{
      \sum_{(f;v_1,\ldots,v_{r_f})\in\mathcal C_I}
      \sum_{j=1}^{r_f}\theta(v_j)}\\
  &\quad={}
    \omega_\delta^{
      \sum_{v\in V(I)}\deg_I(v)\theta(v)}.
  \label{eq:degree-filter-assignment-weight}
\end{aligned}
\end{equation}
Here $\deg_I(v)$ is the occurrence degree of $v$.  If a variable occurs
several times in one scope, it appears the same number of times in the first
sum in \eqref{eq:degree-filter-assignment-weight}; hence repeated scope
positions are counted correctly.

Summing $\rho^M_{\mathbf s}(I)$ over all cyclic label pins
$\mathbf s\in C_\delta^k$ is equivalent to summing $\theta$ over all cyclic
assignments to $V(I)$.  The tensorized coordinates and the cyclic coordinate
are independent, so
\cref{eq:degree-filter-monomial-realizations,eq:degree-filter-assignment-weight}
give the following identity, where $\mathbf 1_E$ denotes the indicator of an event
$E$:
\begin{equation}
\begin{aligned}
  \sum_{\mathbf s\in C_\delta^k}\rho^M_{\mathbf s}(I)
  &=\chi_M(I)
    \sum_{\theta:V(I)\to C_\delta}
      \omega_\delta^{\sum_v\deg_I(v)\theta(v)}\\
  &=\chi_M(I)
    \prod_{v\in V(I)}
      \left(
        \sum_{t\in C_\delta}
          \omega_\delta^{t\deg_I(v)}
      \right)\\
  &=\delta^{\abs{V(I)}}
    \mathbf 1_{\{I\in\mathfrak M_{k,\delta}\}}
    \chi_M(I).
  \label{eq:degree-filter-H}
\end{aligned}
\end{equation}
The last equality applies \eqref{eq:cyclic-character-filter} separately at
every variable.  The identical calculation for $N$ gives
\begin{equation}
  \sum_{\mathbf s\in C_\delta^k}\rho^N_{\mathbf s}(I)
  =\delta^{\abs{V(I)}}
    \mathbf 1_{\{I\in\mathfrak M_{k,\delta}\}}
    \chi_N(I).
  \label{eq:degree-filter-G}
\end{equation}

\paragraph{Reduction to two ordinary character sums.}
Define functions on the full monoid $\InstMonoid{k}$ by
\begin{equation}
  R_M=\sum_{\mathbf s\in C_\delta^k}\rho^M_{\mathbf s},
  \qquad
  R_N=\sum_{\mathbf s\in C_\delta^k}\rho^N_{\mathbf s}.
  \label{eq:degree-filter-character-sums}
\end{equation}
If $I\notin\mathfrak M_{k,\delta}$, both functions vanish at $I$ by
\cref{eq:degree-filter-H,eq:degree-filter-G}.  If
$I\in\mathfrak M_{k,\delta}$, the same equations give
\[
  R_M(I)=\delta^{\abs{V(I)}}\chi_M(I),
  \qquad
  R_N(I)=\delta^{\abs{V(I)}}\chi_N(I).
\]
The common factor is nonzero because $L_\delta$ has characteristic zero.
Consequently,
\begin{equation}
  \chi_M=\chi_N\text{ on }\mathfrak M_{k,\delta}
  \quad\Longleftrightarrow\quad
  R_M=R_N\text{ on }\InstMonoid{k}.
  \label{eq:degree-filter-reduction}
\end{equation}

\paragraph{Finite comparison and acceptance test.}
There are $2\delta^k$ ordinary characters in
\eqref{eq:degree-filter-character-sums}.  The scalar-tagged comparison from
the proof of \cref{lem:character-comparison} applies, for any fixed pair
of pins, to two finite arity-matched indexed families represented by exact
algebraic tables on finite domains and satisfying the zero-table hypothesis
of \cref{lem:scalar-tags}.  Positive tensor degree preserves the zero pattern
of the input family, while tensor degree zero produces the nonzero
constant-one table.  Thus each tensorized family has at most one zero table
of each arity.  Moreover, $u_f$ is everywhere nonzero, so $\widehat h_f$ is
zero exactly when $h_f$ is zero, and likewise for $\widehat g_f$.  Hence the
augmented families satisfy the hypothesis.  Common scalar tags make both
families duplicate-free, after which the comparison procedure of
\cref{lem:character-comparison} decides each pairwise equality.

Partition the $2\delta^k$ characters into equality classes, let
$\mathcal E_{\mathrm{char}}$
be the resulting finite set of classes, and write
$E\in\mathcal E_{\mathrm{char}}$ for one class.  Define its multiplicities
on the two sides by
\begin{equation*}
\begin{aligned}
  m_M(E)&=\abs{\{\mathbf s\in C_\delta^k:
                    \rho^M_{\mathbf s}\in E\}},\\
  m_N(E)&=\abs{\{\mathbf s\in C_\delta^k:
                    \rho^N_{\mathbf s}\in E\}}.
\end{aligned}
\end{equation*}
Choose one representative character $\rho_E$ from each
$E\in\mathcal E_{\mathrm{char}}$.
Then, as functions on $\InstMonoid{k}$,
\begin{equation*}
  R_M-R_N
  =\sum_{E\in\mathcal E_{\mathrm{char}}}
     \bigl(m_M(E)-m_N(E)\bigr)\rho_E.
\end{equation*}
The representatives $\rho_E$, $E\in\mathcal E_{\mathrm{char}}$, are
pairwise distinct.  By
\cref{lem:dedekind-independence},
\begin{equation}
\begin{aligned}
  R_M=R_N
  \quad\Longleftrightarrow\quad&
  m_M(E)=m_N(E)\\[-2pt]
  &\text{for every }E\in\mathcal E_{\mathrm{char}}.
\end{aligned}
  \label{eq:degree-filter-acceptance-test}
\end{equation}
Combining
\cref{eq:degree-filter-reduction,eq:degree-filter-acceptance-test}
gives a finite exact decision procedure
for \eqref{eq:degree-filtered-character-goal}.
\end{proof}

\begin{theorem}[Degree-multiple universal identity oracle]
\label{thm:degree-multiple-universal-identity}
For every $k,\delta\ge1$ and every polynomial
\[
  P\in\AlgNums[X_{\mathbf a},Y_{\mathbf a}:\mathbf a\in D^k],
\]
given as a finite list of monomials with exactly encoded algebraic-complex
coefficients, it is decidable whether
\[
  P(\zvec I,\overline{\zvec I})=0
  \qquad\text{for every }I\in\mathfrak M_{k,\delta}.
\]
\end{theorem}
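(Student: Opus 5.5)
The plan mirrors the proof of \cref{thm:universal-identity}, with the monoid $\InstMonoid{k}$ replaced by the submonoid $\mathfrak M_{k,\delta}$ and \cref{lem:character-comparison} replaced by \cref{lem:degree-filtered-character-comparison}.

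First, let $L$ be the number field generated over $K$ by the finitely many coefficients of $P$. Expand $P$ into monomials $M$ with coefficients $c_M\in L$, combining identical monomials. For every $I\in\mathfrak M_{k,\delta}$ we have $P(\zvec I,\overline{\zvec I})=\sum_M c_M\chi_M(I)$, where each $\chi_M$ is the monomial character of \cref{lem:tensor-character}. Restricting to $\mathfrak M_{k,\delta}$ preserves the character property. It is a submonoid containing the labelled unit, as noted at the start of \cref{subsec:degree-multiple-identities}, so $\chi_M|_{\mathfrak M_{k,\delta}}$ is still a unital multiplicative character, with values in $K\subseteq L$.

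Second, partition the finitely many monomials into classes according to equality of their restricted characters. Each pairwise test is decidable by \cref{lem:degree-filtered-character-comparison}. That relation is an equivalence relation because it is equality of functions on $\mathfrak M_{k,\delta}$, so the pairwise answers determine the partition. Choosing one representative $\chi_E$ per class $E$ gives
\[
  P(\zvec I,\overline{\zvec I})=\sum_{E}\Bigl(\sum_{M\in E}c_M\Bigr)\chi_E(I)
  \qquad(I\in\mathfrak M_{k,\delta}).
\]
The representatives are pairwise distinct characters on the monoid $\mathfrak M_{k,\delta}$. Applying \cref{lem:dedekind-independence} with $M=\mathfrak M_{k,\delta}$ and the field $L$ of characteristic zero, the identity holds on all of $\mathfrak M_{k,\delta}$ if and only if every class sum $\sum_{M\in E}c_M$ vanishes. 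These are finitely many exact algebraic-number tests, so the algorithm is total.

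There is no real obstacle left, since the substantive work (the cyclic filter) was done in \cref{lem:degree-filtered-character-comparison}. The one point to state carefully is that \cref{lem:dedekind-independence} is applied on the submonoid, not on $\InstMonoid{k}$. Two monomials may define different characters on $\InstMonoid{k}$ yet agree on $\mathfrak M_{k,\delta}$. This is exactly why the grouping must use the degree-filtered comparison rather than \cref{lem:character-comparison}.
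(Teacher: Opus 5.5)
Your proposal is correct and follows the paper's proof essentially step for step. You restrict the monomial characters to the submonoid $\mathfrak M_{k,\delta}$, which contains the unit, group them with \cref{lem:degree-filtered-character-comparison}, and apply \cref{lem:dedekind-independence} on $\mathfrak M_{k,\delta}$ over the number field generated by $K$ and the coefficients; your remark that the grouping must use the filtered comparison rather than \cref{lem:character-comparison} is exactly the point on which the paper's argument rests.
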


\begin{proof}
Every monomial of $P$ is a unital multiplicative character after restriction
to the submonoid $\mathfrak M_{k,\delta}$.  By
\cref{lem:degree-filtered-character-comparison}, equality of any two such
restricted characters is decidable.  Partition the finitely many monomials
into restricted-character classes and sum their coefficients in the number
field generated by the input coefficients and $K$.  Applying
\cref{lem:dedekind-independence} to the monoid $\mathfrak M_{k,\delta}$ shows
that the identity
holds exactly when every class sum is zero.  All steps are finite and exact.
\end{proof}

\section{The Global Block Orthogonality Decision Procedure}
\label{sec:bo-algorithm}

Recall that, for every $k\ge2$,
\[
  \mathcal B_k
  =\left\{A\in K^{D^k}:
    \begin{array}{l}
      \text{a singleton purification of $A$}\\
      \text{is block-orthogonal}
    \end{array}
  \right\}.
\]
By \cref{cor:bo-product-identities}, one can effectively construct a finite
set
\[
  Q_k\subseteq K[X_{\mathbf a}:\mathbf a\in D^k]
\]
such that, for every $A\in K^{D^k}$,
\begin{equation}
  A\in\mathcal B_k
  \quad\Longleftrightarrow\quad
  q(A)=0\text{ for every }q\in Q_k.
  \label{eq:bo-algorithm-membership}
\end{equation}
When invoking \cref{thm:universal-identity}, we regard $q$ as a polynomial in
the full displayed ring that is independent of every variable
$Y_{\mathbf a}$.

\begin{procedure}[Global Block Orthogonality]
\label{proc:global-bo}
On input a nonempty finite domain $D$ and a finite algebraic-complex
constraint language $\cF$ whose values are given by exact encodings, perform
the following steps.
\begin{enumerate}
  \item Delete duplicate typed tables.  If $\cF=\varnothing$, return
        \YES.  Otherwise construct the field $K$
        in~\eqref{eq:working-field} and compute $\mu(K)$.
  \item For every integer
        \[
          2\le k\le d^2+1,
          \qquad d=\abs D,
        \]
        enumerate the Block Orthogonality certificates of arity $k$ and
        construct $Q_k$ as in \cref{cor:bo-product-identities}.
  \item For every tested $k$ and every $q\in Q_k$, use
        \cref{thm:universal-identity} to decide whether
        \[
          q(\zvec I)=0
          \qquad\text{for every }I\in\InstMonoid{k}.
        \]
        Return \NO{} as soon as one identity fails.
  \item If every identity at every indicated arity succeeds, return
        \YES.
\end{enumerate}
\end{procedure}

\begin{theorem}[Exact decision of global BO]
\label{thm:global-bo-decidable}
\Cref{proc:global-bo} is a total exact algorithm.  It returns \YES{}
on $(D,\cF)$ if and only if the Block Orthogonality condition from the
complex-weighted $\#\mathrm{CSP}$ dichotomy theorem of Cai and
Chen~\cite[Section~3.1]{CaiChen2017}, denoted $\BO(\cF)$ in
\cref{def:global-bo}, holds on the entire unbounded family $W_{\cF}$.
\end{theorem}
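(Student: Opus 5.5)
The proof has two parts, totality and correctness, and correctness will be split into soundness of \NO{} and soundness of \YES{}.

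\emph{Totality.} Step~1 is finite. Building $K$ and $\mu(K)$ uses only standard exact number-field algorithms. The empty language is returned immediately and was already shown to satisfy $\BO$ in \cref{sec:preliminaries}. Step~2 ranges over the finitely many arities $2\le k\le d^2+1$. For each $k$, the certificate family $\mathfrak C_k$ is finite and effectively enumerable, and $Q_k$ is computable by \cref{cor:bo-product-identities}. Step~3 makes finitely many calls to the oracle of \cref{thm:universal-identity}. Each $q\in Q_k$ is regarded as a polynomial free of the $Y_{\mathbf a}$ with coefficients in $K\subseteq\AlgNums$, and each call halts. Hence the procedure halts on every input.

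\emph{Soundness of \NO.} Suppose the procedure returns \NO. Then for some $k\le d^2+1$ and some $q\in Q_k$ there is $I\in\InstMonoid{k}$ with $q(\zvec I)\ne0$. By \cref{lem:pinned-realization}, $G=\zvec I$ lies in $W_{\cF}^{(k)}$, and $G$ is $K$-valued. By \eqref{eq:bo-algorithm-membership}, $G\notin\mathcal B_k$. So some (equivalently, by \cref{lem:purification-invariance}, every) singleton purification of $G$ fails to be block-orthogonal. \Cref{cor:singleton-purification} then shows that $\BO(\cF)$ fails.

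\emph{Soundness of \YES.} Suppose the procedure returns \YES{} but $\BO(\cF)$ fails. By \cref{cor:singleton-purification}, some $G\in W_{\cF}$ of arity at least two has a singleton purification that is not block-orthogonal. \Cref{lem:bo-arity-bound} replaces $G$ by such a counterexample $H$ with arity $k$ satisfying $2\le k\le d^2+1$; the lower bound holds because the diagonal-minor construction keeps the separate column variable and at least one row class. Again $H$ is $K$-valued and $H\notin\mathcal B_k$. So by \eqref{eq:bo-algorithm-membership} some $q\in Q_k$ has $q(H)\ne0$. By \cref{lem:pinned-realization}, $H=\zvec I$ for some $I\in\InstMonoid{k}$. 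Then the universal identity for that $q$ fails, and the oracle, being exact, would have reported this in Step~3. This contradicts the \YES{} answer.

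The only step needing care is the arity range in the \YES{} direction. Arity one is never tested and imposes no condition, and the compressed counterexample from \cref{lem:bo-arity-bound} must land in the tested window. Its arity is the number of pair classes plus one, so it lies between $2$ and $d^2+1$. Everything else is a direct assembly of the cited lemmas.
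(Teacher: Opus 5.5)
Your proof is correct and follows essentially the same route as the paper: termination comes from the finite arity window, the finite certificate enumeration defining $Q_k$, and the totality of the identity oracle, while correctness combines \cref{lem:pinned-realization}, \eqref{eq:bo-algorithm-membership}, \cref{cor:singleton-purification}, and the arity compression of \cref{lem:bo-arity-bound}. The paper also proves completeness of \YES{} and of \NO{} as separate directions; your two soundness arguments already cover these, because the procedure is total and has only the two outputs \YES{} and \NO{}.
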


\begin{proof}
The empty-language case was verified in \cref{sec:preliminaries}; assume
$\cF\ne\varnothing$.

\paragraph{Termination.}
The arity loop is finite.  At each fixed arity, the row and column sets are
finite, $\mu(K)$ is finite, and the certificate construction ranges over
finitely many rectangle systems, row-pair modes, set partitions, anchors,
and root-of-unity labels.  Every root-sum test is an exact equality test in
$K$.  Each certificate has finitely many equations, and hence $Q_k$ is
finite and effectively computable.  A branch with no nonzero equations is
recognized syntactically and yields $Q_k=\varnothing$.

Finally, \cref{thm:universal-identity} is total and exact for each queried
polynomial.  Consequently every loop in \cref{proc:global-bo} is finite.

\paragraph{Soundness of \YES.}
Suppose the procedure returns \YES.  Fix
$2\le k\le d^2+1$ and a $k$-ary $G\in W_{\cF}$.  By
\cref{lem:pinned-realization}, there is a finite
$I\in\InstMonoid{k}$ such that $G=\zvec I$.  Every queried identity is
universal, so $q(G)=0$ for every $q\in Q_k$.  By
\eqref{eq:bo-algorithm-membership}, $G\in\mathcal B_k$.  Thus every generated
table through arity $d^2+1$ has a block-orthogonal singleton purification.

If a generated table of larger arity had a non-block-orthogonal singleton
purification, \cref{lem:bo-arity-bound} would produce such a member of
$W_{\cF}$ at a tested arity, a contradiction.  Hence every member of
$W_{\cF}$ of arity at least two has a block-orthogonal singleton
purification.  By \cref{cor:singleton-purification}, this is exactly the
original joint condition $\BO(\cF)$.

\paragraph{Completeness of \YES.}
Conversely, suppose $\BO(\cF)$ holds.  Every $G\in W_{\cF}$ of arity at
least two has a block-orthogonal singleton purification.  At every tested
arity and for every $I\in\InstMonoid{k}$, the table $\zvec I$ therefore lies
in $\mathcal B_k$.  Equation~\eqref{eq:bo-algorithm-membership} gives
\[
  q(\zvec I)=0
  \qquad
  \text{for every }q\in Q_k
  \text{ and every }I\in\InstMonoid{k}.
\]
The universal identity oracle accepts every query, and the procedure returns
\YES.

\paragraph{Soundness and completeness of \NO.}
If the procedure returns \NO, then some $q\in Q_k$ is not universal.
By \cref{thm:universal-identity}, a finite labelled instance
$I\in\InstMonoid{k}$ exists for which $q(\zvec I)\ne0$.
Equation~\eqref{eq:bo-algorithm-membership}
gives $\zvec I\notin\mathcal B_k$, so a generated table violates the
singleton condition.  By \cref{cor:singleton-purification}, $\BO(\cF)$ is
false.

Conversely, if $\BO(\cF)$ is false, then
\cref{cor:singleton-purification,lem:bo-arity-bound} give a
$G\in W_{\cF}$ of some tested arity $k$ with $G\notin\mathcal B_k$.  Write
$G=\zvec I$ using \cref{lem:pinned-realization}.  The reverse implication
in~\eqref{eq:bo-algorithm-membership} supplies $q\in Q_k$ with
$q(\zvec I)\ne0$.  That identity is not universal, so the procedure returns
\NO.

Because $\InstMonoid{k}$ ranges over all finite labelled instances and
$\zvec I$ records their exact pinned values, the argument covers every member
of $W_{\cF}$, including cancellation-created zeros; the bound $d^2+1$
concerns only retained arity.
\end{proof}

\section{Generated Supports and a Common Mal'tsev Operation}
\label{sec:generated-supports}

We now prove that global Block Orthogonality supplies one Mal'tsev operation
preserving every generated support.  Every auxiliary relation used below is
realized as the support of a member of $W_{\cF}$.

\subsection{Equality-free support realization}

We first isolate the cancellation argument used throughout this section.

\Needspace{8\baselineskip}
\begin{lemma}[Simultaneous nonvanishing of power sums]
\label{lem:simultaneous-power-sums}
Let $\mathcal A_1,\ldots,\mathcal A_t$ be nonempty finite multisets of
elements of $\AlgNums^\times$.  There exist infinitely many integers $m\ge1$
such that
\begin{equation}
  \sum_{a\in\mathcal A_i}a^m\ne0
  \qquad\text{for every }i\in[t].
  \label{eq:simultaneous-power-sums}
\end{equation}
Moreover, such an $m$ can be found by enumerating the positive integers and
performing exact algebraic zero tests.
\end{lemma}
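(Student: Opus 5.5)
The plan is to treat each power sum as a linear recurrence sequence and to control its zero set with the Skolem--Mahler--Lech theorem~\cite{Lech1953}, which the proof sketch already flags as the termination tool. Fix $i$ and group equal elements of $\mathcal A_i$. This writes
\[
  s_i(m)=\sum_{a\in\mathcal A_i}a^m=\sum_{j}n_{ij}\alpha_{ij}^m,
\]
where the $\alpha_{ij}\in\AlgNums^\times$ are pairwise distinct and each multiplicity $n_{ij}$ is a positive integer. Because the multiset is nonempty, at least one coefficient is nonzero. The sequence $(s_i(m))_{m\ge0}$ satisfies a linear recurrence with characteristic roots $\alpha_{ij}$.

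By Skolem--Mahler--Lech, the zero set $Z_i=\{m\ge0:s_i(m)=0\}$ is a finite set together with finitely many full arithmetic progressions. The main step is to show that $Z_i$ contains no full progression $\{r+nL:n\ge0\}$. Suppose it did. Then
\[
  \sum_j\bigl(n_{ij}\alpha_{ij}^r\bigr)\bigl(\alpha_{ij}^L\bigr)^n=0
  \qquad\text{for all }n\ge0.
\]
Group the indices $j$ according to the value $\beta=\alpha_{ij}^L$. The functions $n\mapsto\beta^n$ on $(\mathbb Z_{\ge0},+)$ are distinct characters, so \cref{lem:dedekind-independence} (or a Vandermonde argument) forces every group sum $\sum_{\alpha^L=\beta}n_{ij}\alpha^r$ to vanish.

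This is the step I expect to be the obstacle. Distinct $\alpha$'s can share an $L$th power and then cancel, as $1$ and $-1$ do for odd $r$. So vanishing on one progression is not yet a contradiction, and the argument must instead show that the progressions cannot cover all large $m$.

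I would handle this directly rather than through one progression. Choose a common period $L$ for all the progressions appearing in every $Z_i$. Then look at residues $r\equiv0\pmod L$, that is, at $m=nL$. Here each group sum becomes
\[
  \sum_{\alpha^L=\beta}n_{ij}\alpha^{nL}=\beta^n\sum_{\alpha^L=\beta}n_{ij},
\]
and this is nonzero because the multiplicities are positive integers. Hence $s_i(nL)$ is a nontrivial combination of the distinct characters $n\mapsto\beta^n$. By \cref{lem:dedekind-independence} it is not identically zero in $n$. Since every zero progression of $s_i$ has period dividing $L$ and $s_i$ does not vanish on all of $L\mathbb Z_{\ge0}$, the residue class $0\bmod L$ is not one of its zero progressions. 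So $s_i(nL)=0$ for only finitely many $n$.

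Taking the union over $i\in[t]$, all but finitely many $m\in L\mathbb Z_{\ge1}$ satisfy~\eqref{eq:simultaneous-power-sums}. This gives infinitely many valid $m$.

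For the effective part, enumerate $m=1,2,\ldots$. For each $m$, decide every condition $s_i(m)\ne0$ by exact arithmetic in the number field generated by the elements of the multisets~\cite{Cohen1993}. The existence statement just proved guarantees that the search halts, even though Skolem--Mahler--Lech gives no explicit bound on where it halts.
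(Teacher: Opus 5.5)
Your proof is correct and follows essentially the paper's route: Skolem--Mahler--Lech, a common multiple $L$ of all periods, and the observation that along $m=nL$ the grouped coefficients are positive integer multiplicities, which rules out a zero progression in residue class $0$, followed by the same exhaustive search. The one difference is that you invoke the full-residue-class form of Skolem--Mahler--Lech, which is valid here because $s_i(m)=\sum_j n_{ij}\alpha_{ij}^m$ holds for all $m\ge0$ with nonzero roots, so nonvanishing somewhere on $L\mathbb{Z}_{\ge0}$ suffices (even $s_i(0)=\lvert\mathcal A_i\rvert\ne0$ alone would do); the paper instead allows progressions to begin at some $N$ and therefore runs the Vandermonde argument on $h$ consecutive terms beyond $N$, a version your character argument also handles if you apply it to $n\mapsto s_i((n+N)L)$.
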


\begin{proof}
For $i\in[t]$, put $s_i(n)=\sum_{a\in\mathcal A_i}a^n$.  This is a linear
recurrence over the number field generated by $\mathcal A_i$, with
characteristic roots among the distinct elements of $\mathcal A_i$.  By the
Skolem--Mahler--Lech theorem~\cite{Lech1953}, its zero set is the union of a
finite set and finitely many arithmetic progressions.

More explicitly, after absorbing finitely many initial terms into the finite
exceptional set, write each progression as
\[
  \{n\ge N:n\equiv r\pmod p\},
  \qquad 0\le r<p.
\]
Call such a progression a \emph{zero progression}; $p$ is its period and
$r$ its residue.  No zero progression can have residue $0$.  Otherwise, for
some $p\ge1$ and $N\ge0$,
\begin{equation}
  s_i(pn)=0
  \qquad(n\ge N).
  \label{eq:zero-progression-multiples}
\end{equation}
Group the elements of the multiset $\mathcal A_i$ according to their $p$th
powers.  Let $\beta_1,\ldots,\beta_h$ be the distinct resulting nonzero
values, and let $\nu_j\ge1$ be the multiplicity of $\beta_j$.  Then
\[
  s_i(pn)
  =\sum_{a\in\mathcal A_i}(a^p)^n
  =\sum_{j=1}^h\nu_j\beta_j^n.
\]
Applying~\eqref{eq:zero-progression-multiples} at
$n=N,N+1,\ldots,N+h-1$ gives
\begin{equation}
  \begin{pmatrix}
    1&\cdots&1\\
    \beta_1&\cdots&\beta_h\\
    \vdots&&\vdots\\
    \beta_1^{h-1}&\cdots&\beta_h^{h-1}
  \end{pmatrix}
  \begin{pmatrix}
    \nu_1\beta_1^N\\
    \vdots\\
    \nu_h\beta_h^N
  \end{pmatrix}
  =\mathbf0.
  \label{eq:zero-progression-vandermonde}
\end{equation}
The determinant of the first matrix is
\[
  \prod_{1\le j<k\le h}(\beta_k-\beta_j)\ne0,
\]
so~\eqref{eq:zero-progression-vandermonde} forces
$\nu_j\beta_j^N=0$ for every $j$.  Since every $\beta_j$ is nonzero, this
contradicts $\nu_j\ge1$.

Let $M$ be a common multiple of all progression periods occurring for all
$s_i$, taking $M=1$ if there are no such progressions.  Every multiple of
$M$ has residue zero modulo each such period and
therefore belongs to none of the zero progressions.  All sufficiently large
multiples of $M$ also avoid the finitely many exceptional zeros.  This proves
existence of infinitely many simultaneous choices.  It also proves
termination of the exact search that enumerates $m=1,2,\ldots$ and tests
\eqref{eq:simultaneous-power-sums}.
\end{proof}

Call a primitive-positive (pp) formula \emph{equality-free} if it uses
conjunction, existential quantification, and repeated occurrences of the same
logical variable, but has no equality atom between two distinct variables.

\begin{lemma}[Equality-free support realization]
\label{lem:equality-free-support-realization}
Let $h\ge1$ and $G_1,\ldots,G_h\in W_{\cF}$.  Let
$\mathbf x=(x_1,\ldots,x_\ell)$, with $\ell\ge1$, be the tuple of free
variables, and let $\mathbf y=(y_1,\ldots,y_s)$, with $s\ge0$, be the tuple
of existential variables.  Let $\mathcal E$ be a finite set indexing the
atoms.  For every $e\in\mathcal E$, choose
$i(e)\in[h]$ and an ordered scope $\mathbf z_e$ of length
$\arity(G_{i(e)})$ whose entries are variables from $\mathbf x$ and
$\mathbf y$.  The same logical variable may occur repeatedly within one
scope or across several scopes.  Define $R\subseteq D^\ell$ by
\begin{equation}
  R(\mathbf x)
  \quad\Longleftrightarrow\quad
  \exists\mathbf y\;
  \bigwedge_{e\in\mathcal E}
       \SuppRel{G_{i(e)}}(\mathbf z_e).
  \label{eq:equality-free-pp}
\end{equation}
Then there exists $H\in W_{\cF}$ with $\SuppRel H=R$.  Given presentations
of the $G_i$, such an $H$ can be obtained by a terminating exact search.
\end{lemma}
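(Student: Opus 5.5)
The natural plan is to realize the pp-formula as an instance, with each atom replaced by an entrywise power of the corresponding generated table. Choose an integer $m\ge1$, to be fixed later. For each $G_i$ fix a presentation instance $I_i$ whose marginal onto its first $\arity(G_i)$ variables is $G_i$.

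Build an instance $J_m$ as follows. Take fresh variables for $\mathbf x$ and $\mathbf y$. For every atom $e\in\mathcal E$, insert $m$ disjoint copies of $I_{i(e)}$, each with its own hidden variables. In each copy, identify the retained variables with the entries of $\mathbf z_e$; repeated logical variables are handled exactly as in \cref{lem:diagonal-minors}. Retain $\mathbf x$ and sum out $\mathbf y$ together with all hidden variables. Since hidden variables of distinct copies are independent, the resulting table is
\[
  H_m(\mathbf x)=\sum_{\mathbf y\in D^s}\prod_{e\in\mathcal E}G_{i(e)}(\mathbf z_e)^m .
\]
This is a member of $W_{\cF}$. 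Call an assignment $(\mathbf x,\mathbf y)$ a witness when every atom is nonzero. For $\mathbf x\notin R$ no $\mathbf y$ is a witness, so each summand vanishes and $H_m(\mathbf x)=0$ for every $m$.

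For $\mathbf x\in R$, let $\mathcal A_{\mathbf x}$ be the nonempty multiset of products $\prod_e G_{i(e)}(\mathbf z_e)$ taken over the witnesses $\mathbf y$. Its elements are nonzero algebraic numbers, and
\[
  H_m(\mathbf x)=\sum_{a\in\mathcal A_{\mathbf x}}a^m .
\]
There are only finitely many $\mathbf x\in R\subseteq D^\ell$. \Cref{lem:simultaneous-power-sums} therefore yields an $m$ with $H_m(\mathbf x)\ne0$ for all $\mathbf x\in R$ simultaneously, and hence $\SuppRel{H_m}=R$. If $R=\varnothing$, any $m$ works.

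For effectivity, compute the entries of the $G_i$ exactly from their presentations. This determines $R$ and the multisets $\mathcal A_{\mathbf x}$. Then search $m=1,2,\dots$ with exact zero tests; \cref{lem:simultaneous-power-sums} guarantees that this search terminates.

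The main obstacle is cancellation among witnesses, which could make the sum vanish even though $\mathbf x\in R$. The power amplification together with the Skolem--Mahler--Lech argument of \cref{lem:simultaneous-power-sums} is exactly what handles this. The remaining points are routine bookkeeping: the boundary-identification construction stays inside $W_{\cF}$, and variables of $\mathbf x$ that occur in no atom become isolated retained variables, which is legal and yields all values on those coordinates.
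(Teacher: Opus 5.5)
Your proposal is correct and follows the paper's proof essentially step for step. It uses the same powered table $H_m(\mathbf x)=\sum_{\mathbf y}\prod_{e}G_{i(e)}(\mathbf z_e)^m$, realized by $m$ fresh-internal copies of a presentation per atom, the same witness multisets $\mathcal A_{\mathbf x}$ whose $m$th power sums give $H_m$, and the same appeal to \cref{lem:simultaneous-power-sums} for a common exponent and a terminating search; your remark about free variables that occur in no atom becoming isolated retained variables is a harmless detail the paper leaves implicit.
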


\paragraph{Interpretation and example.}
Because $\SuppRel G$ records precisely the nonzero entries of $G$, the direct
set-theoretic meaning of~\eqref{eq:equality-free-pp} is
\begin{equation*}
\begin{aligned}
  R=\{\mathbf a\in D^\ell:\;&
      \text{there exists }\mathbf b\in D^s\text{ such that}\\
    &G_{i(e)}\bigl(\mathbf z_e(\mathbf a,\mathbf b)\bigr)\ne0
      \quad\text{for every }e\in\mathcal E\}.
\end{aligned}
\end{equation*}
Thus $R$ consists of the assignments to the free variables that extend to an
assignment of the hidden variables making every selected generated table
nonzero on its indicated scope.  Equivalently, $R$ is obtained by projecting
the solution set of these support constraints onto the $\mathbf x$
coordinates.

For example, if $G_1$ and $G_2$ are binary, then
\[
  R(x_1,x_2)
  \quad\Longleftrightarrow\quad
  \exists y\;
  \SuppRel{G_1}(x_1,y)\wedge\SuppRel{G_2}(y,x_2)
\]
means directly that
\[
\begin{aligned}
  R=\{(a_1,a_2)\in D^2:\;&\exists b\in D,\\
       &G_1(a_1,b)\ne0,\\[-2pt]
       &G_2(b,a_2)\ne0\}.
\end{aligned}
\]
Literal reuse of the same logical variable, such as the two occurrences of
$y$ in this example, is permitted and is distinct from an equality atom
$u=v$ joining two distinct logical variables.

\Cref{fig:equality-free-support-construction} shows how this binary example
is implemented by one actual base-language instance.

\begin{figure}[!t]
  \centering
  \resizebox{0.94\linewidth}{!}{%
  \begin{tikzpicture}[
      x=1cm,y=1cm,
      title/.style={font=\small\bfseries},
      panel/.style={draw,rounded corners=2pt,line width=.45pt},
      var/.style={circle,draw,minimum size=6.5mm,inner sep=0pt,
                  font=\scriptsize},
      hidden/.style={var,dashed},
      factor/.style={draw,double,double distance=.45pt,rounded corners=2pt,
                     minimum width=9mm,minimum height=6.5mm,
                     font=\scriptsize},
      module/.style={draw,rounded corners=2pt,minimum width=2.2cm,
                     minimum height=6.5mm,line width=.65pt,
                     font=\scriptsize,align=center},
      edge/.style={line width=.42pt},
      arrow/.style={-{Latex[length=2mm]},line width=.55pt}
    ]
    % Panel (a): the support pattern to be implemented.
    \draw[panel] (-.20,5.95) rectangle (14.20,9.25);
    \node[title] at (7.00,8.90) {(a) equality-free support pattern};
    \node[var] (sx1) at (2.30,7.65) {$x_1$};
    \node[factor] (sG1) at (4.40,7.65) {$G_1$};
    \node[hidden] (sy) at (7.00,7.65) {$y$};
    \node[factor] (sG2) at (9.60,7.65) {$G_2$};
    \node[var] (sx2) at (11.70,7.65) {$x_2$};
    \draw[edge] (sx1)--(sG1)--(sy)--(sG2)--(sx2);
    \node[font=\scriptsize,align=center] at (7.00,6.88)
      {$R(x_1,x_2)\iff\exists y\;$
       $\SuppRel{G_1}(x_1,y)\wedge\SuppRel{G_2}(y,x_2)$};
    \node[font=\scriptsize,align=center] at (7.00,6.30)
      {the dashed $y$ node is shared variable reuse, not an equality atom};

    \draw[arrow] (7.00,5.78)--(7.00,5.15)
      node[midway,right=3pt,font=\scriptsize,align=left]
      {replace each support atom by $m$ fresh copies\\
       of a fixed base-language presentation};

    % Panel (b): m copies of presentations of G_1 and G_2.
    \draw[panel] (-.20,-.55) rectangle (14.20,4.95);
    \node[title] at (7.00,4.60)
      {(b) one $\cF$-instance whose marginal is $H_m$};

    \node[module] (I11) at (5.00,3.90) {$I_1^{(1)}\;\bullet$};
    \node[module] (I12) at (5.00,3.05) {$I_1^{(2)}\;\bullet$};
    \node[font=\scriptsize] at (5.00,2.51) {$\vdots$};
    \node[module] (I1m) at (5.00,1.93) {$I_1^{(m)}\;\bullet$};

    \node[module] (I21) at (9.00,3.90) {$I_2^{(1)}\;\bullet$};
    \node[module] (I22) at (9.00,3.05) {$I_2^{(2)}\;\bullet$};
    \node[font=\scriptsize] at (9.00,2.51) {$\vdots$};
    \node[module] (I2m) at (9.00,1.93) {$I_2^{(m)}\;\bullet$};

    \node[var] (x1) at (2.15,1.13) {$x_1$};
    \node[hidden] (y) at (7.00,1.13) {$y$};
    \node[var] (x2) at (11.85,1.13) {$x_2$};

    \foreach \u in {I11,I12,I1m}{
      \draw[edge] (x1)--(\u.west);
      \draw[edge] (y)--(\u.east);
    }
    \foreach \u in {I21,I22,I2m}{
      \draw[edge] (y)--(\u.west);
      \draw[edge] (x2)--(\u.east);
    }

    \node[font=\scriptsize,align=center] at (7.00,-.07)
      {solid boxes are fresh copies of fixed $\cF$-presentations;
       each $\bullet$ denotes fresh internal variables\\
       retain $x_1,x_2$ and sum over the shared dashed $y$ and all
       $\bullet$-variables};
  \end{tikzpicture}%
  }
  \caption{The instance construction in
  \cref{lem:equality-free-support-realization} for the binary example.
  For each atom, take $m$ fresh copies of a base-language presentation of
  the corresponding generated table.  Identify only their displayed
  boundary variables with the logical variables $x_1,y,x_2$; all internal
  variables, denoted by $\bullet$, remain fresh.  Marginalizing the internal
  variables first gives
  $G_1(x_1,y)^mG_2(y,x_2)^m$, and then marginalizing the one shared variable
  $y$ gives $H_m(x_1,x_2)$.}
  \label{fig:equality-free-support-construction}
\end{figure}

\begin{proof}
For $m\ge1$, define
\begin{equation}
  H_m(\mathbf x)
  =\sum_{\mathbf y\in D^s}
       \prod_{e\in\mathcal E}G_{i(e)}(\mathbf z_e)^m.
  \label{eq:powered-support-table}
\end{equation}
The empty product is $1$.  For fixed $\mathbf a\in D^\ell$, let
$\mathcal A_{\mathbf a}$ be the multiset of all nonzero values
\[
  \prod_{e\in\mathcal E}
      G_{i(e)}(\mathbf z_e(\mathbf a,\mathbf b)),
  \qquad \mathbf b\in D^s.
\]
Then $\mathcal A_{\mathbf a}$ is nonempty exactly when $\mathbf a\in R$,
and
\begin{equation}
  H_m(\mathbf a)=\sum_{w\in\mathcal A_{\mathbf a}}w^m,
  \label{eq:witness-power-sum}
\end{equation}
where an empty sum is zero.  Thus tuples outside $R$ give zero term by term
for every $m$.  Apply \cref{lem:simultaneous-power-sums} to the finitely many
nonempty multisets $\mathcal A_{\mathbf a}$, $\mathbf a\in R$.  The resulting
common exponent makes~\eqref{eq:witness-power-sum} nonzero for every tuple in
$R$.  If $R=\varnothing$, any $m$ works.  Hence $\SuppRel{H_m}=R$.

The table $H_m$ belongs to $W_{\cF}$ by the following instance construction.
Choose a presenting base-language instance for each $G_i$.  For every
occurrence $e$ and every
copy index in $[m]$, take a fresh copy of that presentation, identify its
displayed boundary variables with those in $\mathbf z_e$, and keep all of
its hidden variables fresh.  For fixed $(\mathbf x,\mathbf y)$,
marginalizing the copied hidden variables factorizes as
\[
  \prod_{e\in\mathcal E}G_{i(e)}(\mathbf z_e)^m.
\]
Marginalizing $\mathbf y$ then gives~\eqref{eq:powered-support-table}.
Repeated variables in $\mathbf z_e$ are legal scope identifications, and the
retained variables may be put in the displayed order.  Therefore
$H_m\in W_{\cF}$.  Zeros created by earlier cancellation remain zero, while
the simultaneous choice of $m$ removes cancellation in the outer sum.
\end{proof}

\begin{corollary}[Singleton rectangularity]
\label{cor:generated-singleton-rectangularity}
Assume $\BO(\cF)$.  Let $R\subseteq D^{q-1}\times D$, with $q\ge2$, have an
equality-free pp definition from finitely many relations $\SuppRel G$,
$G\in W_{\cF}$.  For $\mathbf u\in D^{q-1}$, put
\[
  R_{\mathbf u}=\{a\in D:(\mathbf u,a)\in R\}.
\]
Then any two nonempty sets $R_{\mathbf u}$ and $R_{\mathbf v}$ are equal or
disjoint.
\end{corollary}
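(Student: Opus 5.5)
By \cref{lem:equality-free-support-realization}, there is a generated table $H\in W_{\cF}$ of arity $q\ge2$ with $\SuppRel H=R$. The plan is to read the statement as a property of the support of $H$, viewed as a $d^{q-1}\times d$ matrix with row indices $\mathbf u\in D^{q-1}$ and column index $a\in D$. In that picture $R_{\mathbf u}$ is exactly the support of the row $H(\mathbf u,\ast)$, and $R_{\mathbf u}\neq\varnothing$ means the row is nonzero.

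Next, invoke $\BO(\cF)$ through \cref{cor:singleton-purification}. It says that some legal singleton purification $P$ of $H$ is block-orthogonal, and so in particular $P$ is block-rank one in the sense of \cref{def:block-orthogonal-function}. Purification replaces nonzero entries by nonzero values and fixes zeros. This is item~1 of \cref{lem:purification-invariance}, and also the observation that support is unchanged used before \cref{thm:finite-bo-certificates}. Hence $\SuppRel P=\SuppRel H=R$, and every row of $P$ has support $R_{\mathbf u}$.

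Now take two nonempty sets $R_{\mathbf u},R_{\mathbf v}$. Block-rank one says the magnitude rows $\abs{P(\mathbf u,\ast)}$ and $\abs{P(\mathbf v,\ast)}$ are either positively proportional or have disjoint supports. In the first case their supports coincide, since a positive scalar multiple has the same zero pattern, which gives $R_{\mathbf u}=R_{\mathbf v}$. In the second case $R_{\mathbf u}\cap R_{\mathbf v}=\varnothing$. This finishes the argument.

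There is essentially no hard step. The only point that needs care is making sure the pp definition really yields a member of $W_{\cF}$ whose \emph{last} coordinate is the distinguished column $a$. This is arranged by ordering the free variables $(\mathbf u,a)$ with $a$ last, which \cref{lem:equality-free-support-realization} permits because the retained variables may be put in any displayed order. It also uses $q\ge2$, so that Block Orthogonality applies to $H$.
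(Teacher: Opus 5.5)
Your proof is correct and is essentially the paper's own argument. You realize $R$ as $\SuppRel H$ for some $H\in W_{\cF}$ in the displayed coordinate order, and apply $\BO(\cF)$ to the singleton $\{H\}$ (you route this through \cref{cor:singleton-purification}, whereas the paper applies the definition to the one-element tuple directly). You then use support invariance of purification and the block-rank-one property of the purified table, exactly as the paper does.
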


\begin{proof}
By \cref{lem:equality-free-support-realization}, in the displayed coordinate
order there is an $H\in W_{\cF}$ with $\SuppRel H=R$.  Apply $\BO(\cF)$ to
the singleton set $\{H\}$ and write $\Pure(H)=\widetilde H$.  Purification
preserves support, and $\abs{\widetilde H}$ is block-rank one.  Hence two
nonzero rows that overlap have equal support.
\end{proof}

\subsection{The universal polymorphism relation}

\begin{lemma}[Finite-family support lemma]
\label{lem:finite-support-maltsev}
Assume $\BO(\cF)$.  Every finite family of generated support relations has a
common Mal'tsev polymorphism.
\end{lemma}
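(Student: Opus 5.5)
We will prove the lemma by the classical ``universal polymorphism relation'' (free-algebra) construction combined with rectangularity from \cref{cor:generated-singleton-rectangularity}. Let $\Phi_{G_1},\ldots,\Phi_{G_h}$ be the given finite family of generated supports, with arities $r_1,\ldots,r_h$. Index $D^3$ by the $N=d^3$ triples $(a,b,c)$, and define the $N$-ary relation
\[
  U(\mathbf w)\quad\Longleftrightarrow\quad
  \bigwedge_{i=1}^h\ \bigwedge_{\text{columns}}\ \SuppRel{G_i}(\ldots),
\]
where, for each $i$ and each choice of three tuples $\mathbf t^{(1)},\mathbf t^{(2)},\mathbf t^{(3)}\in\SuppRel{G_i}$, we impose the atom $\SuppRel{G_i}\bigl(w_{(t^{(1)}_1,t^{(2)}_1,t^{(3)}_1)},\ldots,w_{(t^{(1)}_{r_i},t^{(2)}_{r_i},t^{(3)}_{r_i})}\bigr)$. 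An element $\mathbf w\in U$ is exactly an operation $m:D^3\to D$ (via $m(a,b,c)=w_{(a,b,c)}$) preserving every $\SuppRel{G_i}$. This definition is a conjunction of atoms with repeated variables and no equality atoms, so it is equality-free pp (with no existential variables). The projections $\pi_1,\pi_3$ lie in $U$, so $U\neq\varnothing$.

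We now need an element of $U$ satisfying the Mal'tsev identities. Fix an enumeration of the coordinates indexed by triples of the form $(a,b,b)$ and $(b,b,a)$; the Mal'tsev requirement pins these coordinates to $a$. We force the pinned coordinates one at a time using rectangularity. Concretely, I will show by induction on a set $S$ of pinned ``Mal'tsev coordinates'' that some $\mathbf w\in U$ satisfies all pins in $S$. For the inductive step, add one coordinate $\kappa$ with target value $a_\kappa$. Consider the relation $R\subseteq D^{q-1}\times D$ obtained from $U$ by existentially quantifying all coordinates except those in $S$ and $\kappa$, putting $\kappa$ last. This is again equality-free pp in supports, so \cref{cor:generated-singleton-rectangularity} applies: the fibres $R_{\mathbf u}$ are equal or disjoint. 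Compare the fibre at the pinned tuple $\mathbf u$ (nonempty by induction) with the fibres at the tuples realized by the projections $\pi_1$ or $\pi_3$, chosen so that they produce the target value $a_\kappa$ at $\kappa$. The aim is to exhibit a chain of fibres that overlap: e.g.\ for $\kappa=(a,b,b)$, the projection $\pi_1$ gives value $a$ there, and we must link its restriction to $S$ with $\mathbf u$.

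The main obstacle is this linking step. The restriction of $\pi_1$ to $S$ differs from $\mathbf u$ on the $(b',b',a')$ coordinates, so the fibres are not directly comparable. My first attempt will be to use rectangularity with a reordered split of the coordinates, treating the previously pinned block as ``rows'' and the coordinate $\kappa$ as the column. The key observation to check is that $\pi_1$ and $\pi_3$ agree on the diagonal coordinates $(b,b,b)$, and the idea is to transport overlap through intermediate operations. If this fails, I will instead transfer the standard argument that rectangularity of all pp-definable binary relations (a congruence-permutability / Mal'tsev-type criterion à la Bulatov–Dyer–Richerby ``strongly rectangular'') yields a Mal'tsev polymorphism. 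That argument uses only equality-free pp-definitions together with rectangularity of their $(q-1)\times1$ splittings, both of which are available via \cref{lem:equality-free-support-realization} and \cref{cor:generated-singleton-rectangularity}. Finiteness of the family ensures that $U$ is a legitimate finite equality-free pp formula, so a single $H\in W_{\cF}$ realizes it.
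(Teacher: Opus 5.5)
Your setup matches the paper's. The relation $U$ of common ternary polymorphisms is an equality-free pp conjunction of support atoms that contains the projections. You then pin Mal'tsev cells one at a time: project $U$ onto the pinned cells $S$ plus a new cell $\kappa$ placed last, and apply \cref{cor:generated-singleton-rectangularity}.

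\textbf{The gap.} The proposal stops exactly at the step that carries the proof. Rectangularity needs three corners $(X,v)$, $(Y,v)$, $(X,a_\kappa)$ in the projected relation, where $Y$ is the currently pinned tuple. You note that $\pi_1|_S\neq Y$, and then only gesture at ``transporting overlap through intermediate operations.'' What is missing is a concrete common polymorphism whose restriction to $S$ equals the projection's row, while its value at $\kappa$ equals the current value.

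\textbf{How the paper gets it.} Start from $\pi_3$, which already sends every cell $(a,a,b)$ to $b$, including the diagonal. At each step, take a common polymorphism $t$ realizing the current pins (in your setup, any $\mathbf w\in U$ satisfying them), and form the minor $q(x,y,z)=t(x,y,y)$. This $q$ is again a common polymorphism, and:
\begin{itemize}
  \item on a pinned cell $(a',b',b')$ it gives $t(a',b',b')=a'$;
  \item on a cell $(a',a',b')$ it gives $t(a',a',a')=a'$, by the pinned idempotence;
  \item so $q|_S=\pi_1|_S=:X$;
  \item at the new cell $\kappa=(a,b,b)$ it gives $t(a,b,b)=:v$.
\end{itemize}
Hence $(X,v)$ comes from $q$, $(Y,v)$ from $t$, and $(X,a)$ from $\pi_1$, and rectangularity forces $(Y,a)$. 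Your arbitrary enumeration can be rescued the same way, using $t(y,y,z)$ and $\pi_3$ for new cells of the form $(b,b,a)$. This works only once the diagonal cells are pinned, because idempotence is what makes the minor agree with the projection on previously pinned cells of the other form.

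\textbf{The fallback does not fill the gap.} The usual derivation of a Mal'tsev polymorphism from rectangularity is the one-shot argument: rows are indexed by the cells $(a,b,b)$ with $a\neq b$, columns by the cells $(a,a,b)$, and $\pi_1,\pi_2,\pi_3$ supply three corners. That argument uses rectangularity of a split whose column block has $d^2$ coordinates. \Cref{cor:generated-singleton-rectangularity}, which comes from block-rank-one of a $d^{r-1}\times d$ matrix, only controls the $(q-1)\times1$ split. Your assertion that the standard argument needs only $(q-1)\times1$ splittings is therefore exactly what has to be proved. The cell-by-cell induction with the minor above is how the paper bridges from single-coordinate rectangularity to a Mal'tsev polymorphism.
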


\begin{proof}
Let $\Delta\subseteq\{\SuppRel G:G\in W_{\cF}\}$ be finite.  Empty
relations may be removed because every operation preserves them.  If nothing
remains, fix $a_0\in D$ and use
\[
  m_0(a,b,c)=
  \begin{cases}
    c,&a=b,\\
    a,&b=c,\\
    a_0,&\text{otherwise}.
  \end{cases}
\]
Thus assume that $\Delta$ contains a nonempty relation.  For every
$R\in\Delta$, fix $G_R\in W_{\cF}$ such that $R=\SuppRel{G_R}$, and write
$r_R$ for its arity.

For every $R\in\Delta$, every ordered triple
$\mathbf u^1,\mathbf u^2,\mathbf u^3\in R$, and every $j\in[r_R]$, record
the operation-table input $(u^1_j,u^2_j,u^3_j)\in D^3$.  Let
$T\subseteq D^3$ be the set of all recorded inputs, and introduce one
logical variable $x_\tau$, taking values in $D$, for every $\tau\in T$.
Its intended meaning is
\[
  x_\tau=t(\tau)
\]
for one candidate ternary operation $t:D^3\to D$.  Consequently, an input to
the formula below is an assignment
\[
  \mathbf p=(p_\tau)_{\tau\in T}\in D^T,
\]
equivalently a partial operation table $p:T\to D$; evaluation at $\mathbf p$
sets $x_\tau=p_\tau=p(\tau)$.

For every $R\in\Delta$ and
$\mathbf U=(\mathbf u^1,\mathbf u^2,\mathbf u^3)\in R^3$, use the formal
scope abbreviation
\[
  \mathbf x[\mathbf U]
  =\bigl(x_{(u^1_j,u^2_j,u^3_j)}\bigr)_{j\in[r_R]}.
\]
Define the finite constraint formula, viewed equivalently as a predicate on
$D^T$, by
\begin{align}
  \Psi_\Delta((x_\tau)_{\tau\in T})
  &=\bigwedge_{R\in\Delta}
    \ \bigwedge_{\mathbf U\in R^3}
    \SuppRel{G_R}\bigl(\mathbf x[\mathbf U]\bigr).
  \label{eq:universal-polymorphism-formula}
\end{align}
Directly, for an input partial table $p:T\to D$,
\begin{equation*}
\begin{aligned}
  \Psi_\Delta(\mathbf p)
  \quad\Longleftrightarrow\quad
  &\bigl(p(u^1_j,u^2_j,u^3_j)\bigr)_{j\in[r_R]}\in R\\
  &\text{for every }R\in\Delta\text{ and every}\\[-2pt]
  &\qquad \mathbf u^1,\mathbf u^2,\mathbf u^3\in R.
\end{aligned}
\end{equation*}
Thus $\Psi_\Delta$ checks all coordinatewise preservation tests for all
relations in $\Delta$ simultaneously.  For example, if $D=\{0,1\}$ and one
test for a binary relation $R$ uses
\[
  \mathbf u^1=(0,0),\qquad
  \mathbf u^2=(0,1),\qquad
  \mathbf u^3=(1,1),
\]
then its conjunct is
\[
  R\bigl(x_{(0,0,1)},x_{(0,1,1)}\bigr),
\]
which requires
$(p(0,0,1),p(0,1,1))\in R$ on an input table $p$.
Identical operation-table inputs use literally the same variable.  Thus
\eqref{eq:universal-polymorphism-formula} is equality-free, even when a
variable occurs repeatedly in one atomic scope.

Let $U_\Delta\subseteq D^T$ be its solution relation.  Its elements are
exactly the restrictions to $T$ of common ternary polymorphisms of $\Delta$.
Indeed, a common polymorphism satisfies every displayed atom.  Conversely,
extend a solution arbitrarily from $T$ to $D^3$.  Every coordinate triple in
a preservation test belongs to $T$, and its atom asserts that the
coordinatewise image belongs to the tested relation.  Hence every such
extension is a common polymorphism.

Put
\begin{equation*}
  \begin{aligned}
    A&=\{(a,a,b)\in T:a,b\in D\},\\
    C&=\{(a,b,b)\in T:a,b\in D,\ a\ne b\},
  \end{aligned}
\end{equation*}
and let $P=\pr_{A\cup C}(U_\Delta)$.  The sets $A$ and $C$ are disjoint,
and every constant triple in $T$ lies in $A$.  For each $k\in\{1,2,3\}$,
let
\[
  \pi_k:D^3\to D,
  \qquad
  \pi_k(a_1,a_2,a_3)=a_k,
\]
be the $k$th coordinate-selector operation.  It preserves every relation:
for $R\in\Delta$ and
$\mathbf u^1,\mathbf u^2,\mathbf u^3\in R$,
\[
  \pi_k(\mathbf u^1,\mathbf u^2,\mathbf u^3)
  =\bigl(\pi_k(u^1_j,u^2_j,u^3_j)\bigr)_{j\in[r_R]}
  =\mathbf u^k\in R.
\]
Hence
\begin{equation}
  \pi_k|_T\in U_\Delta,
  \qquad
  \pi_k|_{A\cup C}\in P
  \qquad(k=1,2,3).
  \label{eq:projection-tables-in-P}
\end{equation}
Here $\pi_k$ selects one of three operation arguments, whereas
$\pr_{A\cup C}$ is the different map that restricts a table indexed by $T$
to the coordinates in $A\cup C$.  Moreover,
$A\ne\varnothing$: repeating any tuple of a nonempty positive-arity relation
three times produces a constant triple in $T$.

Enumerate $C=\{c_1,\ldots,c_s\}$.  For $0\le j\le s$, we inductively
construct a partial table $p_j:A\cup C\to D$ satisfying the three separate
invariants
\begin{align}
  p_j&\in P,
    \notag\\
  p_j(a,a,b)&=b
    &&\text{for every }(a,a,b)\in A,
    \notag\\
  p_j(c_i)&=\pi_1(c_i)
    &&\text{for every }i\in[j].
    \label{eq:partial-maltsev-c}
\end{align}
Membership in $P$ is maintained as an independent invariant alongside the
two value conditions.  For
$j=0$, take $p_0=\pi_3|_{A\cup C}$.  Membership follows from
\eqref{eq:projection-tables-in-P}, the first value condition follows from
$\pi_3(a,a,b)=b$, and the second is vacuous.

Suppose the three invariants hold for $p_j$.  Because
\[
  p_j\in P=\pr_{A\cup C}(U_\Delta),
\]
there exists a lift $\widehat p_j\in U_\Delta$ satisfying
\begin{equation*}
  \widehat p_j|_{A\cup C}=p_j.
\end{equation*}
This lift is effective: enumerate all maps
$T\setminus(A\cup C)\to D$, combine each with the fixed values of $p_j$,
and test~\eqref{eq:universal-polymorphism-formula}.  There are at most
$d^{\abs{T\setminus(A\cup C)}}$ candidates, and membership in $P$ guarantees
that one succeeds.  Extend $\widehat p_j$ to a map $t_j:D^3\to D$ by assigning
an arbitrary domain value on $D^3\setminus T$.  By the definition of $T$,
those cells occur in no preservation test, so $t_j$ is a common ternary
polymorphism of $\Delta$.

The minor
\begin{equation*}
  q_j(x,y,z)=t_j(x,y,y)
\end{equation*}
is again a common ternary polymorphism, because identifying and repeating
arguments of a polymorphism preserves every relation it preserves.  Let
$K_j=A\cup\{c_1,\ldots,c_j\}$.  On $K_j$, $q_j$ agrees with $\pi_1$.
This follows from~\eqref{eq:partial-maltsev-c} on the previously fixed cells.
If $(x,x,z)\in A$, then $(x,x,x)\in T$ by repeating the tuple that witnesses
the occurrence of $x$.  Since $(x,x,x)\in A$,
\[
  q_j(x,x,z)=t_j(x,x,x)=x=\pi_1(x,x,z).
\]

Write $c_{j+1}=(a,b,b)$ and set
\[
\begin{aligned}
  X&=\pi_1|_{K_j}=q_j|_{K_j},
  &Y&=p_j|_{K_j},\\
  v&=p_j(c_{j+1})=q_j(c_{j+1}).
\end{aligned}
\]
Project $P$ onto $K_j\cup\{c_{j+1}\}$, ordering $c_{j+1}$ last:
\[
  Q_j=\pr_{K_j\cup\{c_{j+1}\}}(P).
\]
This is an equality-free pp projection of
\eqref{eq:universal-polymorphism-formula}.  Since $A\ne\varnothing$, its
arity is at least two.  The minor $q_j$ is a common polymorphism, so its
restriction belongs to $P$; the same is true of $p_j$ by the induction
invariant and of $\pi_1$ by~\eqref{eq:projection-tables-in-P}.  Their further
restrictions give the three corners
\begin{equation*}
  (X,v),\qquad(Y,v),\qquad(X,a)\in Q_j.
\end{equation*}
\Cref{fig:finite-support-induction} summarizes both the finite lifting step
and the missing-corner argument at this point of the induction.

\begin{figure}[!t]
  \centering
  \resizebox{\linewidth}{!}{%
  \begin{tikzpicture}[
      x=1cm,y=1cm,
      title/.style={font=\small\bfseries},
      panel/.style={draw,rounded corners=2pt,line width=.45pt},
      rel/.style={draw,rounded corners=2pt,minimum width=2.75cm,
                  minimum height=7mm,font=\scriptsize,align=center},
      elem/.style={draw,double,double distance=.45pt,rounded corners=2pt,
                   minimum width=2.75cm,minimum height=7mm,
                   font=\scriptsize,align=center},
      cell/.style={draw,minimum width=2.25cm,minimum height=9mm,
                   font=\scriptsize,align=center},
      forcedcell/.style={cell,double,double distance=.6pt,line width=.65pt},
      arrow/.style={-{Latex[length=2mm]},line width=.55pt},
      lift/.style={-{Latex[length=2mm]},line width=.55pt,dashed}
    ]
    % Panel (a): restrictions, a lift, extension, and a minor.
    \draw[panel] (-.20,5.35) rectangle (14.20,10.75);
    \node[title] at (7.00,10.40) {(a) projection and finite lifting};

    \node[rel,minimum width=3.0cm] (U) at (1.80,9.30)
      {$U_\Delta\subseteq D^T$};
    \node[rel,minimum width=3.4cm] (P) at (7.00,9.30)
      {$P=\pr_{A\cup C}(U_\Delta)$};
    \node[rel,minimum width=3.6cm] (Q) at (12.20,9.30)
      {$Q_j=\pr_{K_j\cup\{c_{j+1}\}}(P)$};
    \draw[arrow] (U)--(P)
      node[midway,above=2pt,font=\scriptsize] {restriction};
    \draw[arrow] (P)--(Q)
      node[midway,above=2pt,font=\scriptsize] {restriction};

    \node[elem,minimum width=3.4cm] (hatp) at (1.80,7.77)
      {$\widehat p_j\in U_\Delta$\\$\widehat p_j|_{A\cup C}=p_j$};
    \node[elem,minimum width=3.0cm] (pj) at (7.00,7.77) {$p_j\in P$};
    \draw[lift] (pj)--(hatp)
      node[midway,above=2pt,font=\scriptsize] {finite lift};
    \draw[arrow] (hatp)--(U);
    \draw[arrow] (pj)--(P);

    \node[elem,minimum width=3.4cm] (tj) at (1.80,6.22)
      {$t_j:D^3\to D$\\extends $\widehat p_j$};
    \node[elem,minimum width=4.2cm] (qj) at (7.20,6.22)
      {$q_j(x,y,z)=t_j(x,y,y)$\\common polymorphism;
       $q_j|_{A\cup C}\in P$};
    \draw[arrow] (hatp)--(tj)
      node[midway,right=2pt,font=\scriptsize] {extend};
    \draw[arrow] (tj)--(qj)
      node[midway,above=2pt,font=\scriptsize] {minor};
    \node[font=\scriptsize,align=center,text width=3.4cm] at (11.95,7.77)
      {restrictions of $p_j$, $q_j$, and $\pi_1$ give the three corners};

    % Panel (b): singleton rectangularity fills the missing corner.
    \draw[panel] (-.20,-.40) rectangle (14.20,5.05);
    \node[title] at (7.00,4.70) {(b) the missing-corner step in $Q_j$};
    \node[font=\scriptsize] at (7.70,4.13)
      {last coordinate $c_{j+1}$};
    \node[font=\scriptsize] at (6.45,3.73) {$v$};
    \node[font=\scriptsize] at (9.15,3.73) {$a$};
    \node[font=\scriptsize] at (4.10,3.18) {row $X$};
    \node[font=\scriptsize] at (4.10,2.13) {row $Y$};

    \node[cell,minimum width=2.5cm] (Xv) at (6.45,3.18)
      {$(X,v)$\\from $q_j$};
    \node[cell,minimum width=2.5cm] (Xa) at (9.15,3.18)
      {$(X,a)$\\from $\pi_1$};
    \node[cell,minimum width=2.5cm] (Yv) at (6.45,2.13)
      {$(Y,v)$\\from $p_j$};
    \node[forcedcell,minimum width=2.5cm] (Ya) at (9.15,2.13)
      {$(Y,a)$\\forced into $Q_j$};

    \node[font=\scriptsize] at (6.25,1.47)
      {the two rows meet at $v$, so rectangularity fills the fourth corner};
    \node[elem,text width=7.0cm] (next) at (7.00,.48)
      {lift $(Y,a)\in Q_j$ to $p_{j+1}\in P$ with
       $p_{j+1}|_{K_j}=Y$ and $p_{j+1}(c_{j+1})=a$};
    % Route the dashed lift down the right side so it does not cover the text.
    \draw[lift] (Ya.south east) to[out=-70,in=70] (next.north east);
  \end{tikzpicture}%
  }
  \caption{One induction step in
    \cref{lem:finite-support-maltsev}.  Top: membership $p_j\in P$
    guarantees a lift $\widehat p_j\in U_\Delta$; because all sets are
    finite, it is found by exhaustive search.  Extending that lift and taking
    the displayed minor supplies the common polymorphism $q_j$.  Solid arrows
    are restrictions, extensions, or minors, whereas dashed arrows are finite
    lifts.  Bottom: in the row view of $Q_j$, the rows $X$ and $Y$ intersect at
    $v$.  Singleton rectangularity makes their supports equal, so the
    double-bordered fourth corner $(Y,a)$ must exist.  Lifting that corner
    produces the next partial table $p_{j+1}$.}
  \label{fig:finite-support-induction}
\end{figure}

By \cref{cor:generated-singleton-rectangularity}, the two nonempty rows
indexed by $X$ and $Y$ have equal support because they intersect at $v$.
Consequently $(Y,a)\in Q_j$.  By the definition of the projection $Q_j$,
this means precisely that there exists $p_{j+1}\in P$ such that
\begin{equation*}
\begin{aligned}
  p_{j+1}|_{K_j}&=Y=p_j|_{K_j},\\
  p_{j+1}(c_{j+1})&=a=\pi_1(c_{j+1}).
\end{aligned}
\end{equation*}
Thus $p_{j+1}$ retains the required values on $A,c_1,\ldots,c_j$ and fixes
the next cell $c_{j+1}$, proving all three invariants at the next step.
This second lift is also a finite exact construction: enumerate
$\widehat p\in D^T$ subject to
\[
  \widehat p|_{K_j}=Y,
  \qquad
  \widehat p(c_{j+1})=a,
  \qquad
  \Psi_\Delta(\widehat p),
\]
and take $p_{j+1}=\widehat p|_{A\cup C}$.  Membership
$(Y,a)\in Q_j$ guarantees that this search terminates.

At $j=s$, choose a lift $\widehat p_s\in U_\Delta$ with
$\widehat p_s|_{A\cup C}=p_s$, using the same finite search.  On every
Mal'tsev-form cell in $T$, this solution has the required value.  Extend it
to $D^3$ by
assigning the required Mal'tsev value to every Mal'tsev-form cell outside
$T$, and assign all remaining cells arbitrarily.  The two prescriptions
overlap only on constant triples and agree there.  Cells outside $T$ occur
in no preservation test, so the resulting operation still preserves every
relation in $\Delta$.  It satisfies
\[
  m(a,b,b)=m(b,b,a)=a
  \qquad(a,b\in D),
\]
and is therefore a common Mal'tsev polymorphism of $\Delta$.
\end{proof}

\subsection{One operation for the entire generated family}

\begin{theorem}[Common Mal'tsev operation for generated supports]
\label{thm:common-support-maltsev}
If $\BO(\cF)$ holds, then
\begin{equation}
  \begin{aligned}
  &\exists m:D^3\to D\quad\forall G\in W_{\cF},\\
  &\qquad
    m\text{ is a Mal'tsev polymorphism of }\SuppRel G.
  \end{aligned}
  \label{eq:global-support-maltsev}
\end{equation}
\end{theorem}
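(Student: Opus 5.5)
The plan is a compactness argument over the finite set of ternary operations, using \cref{lem:finite-support-maltsev} for each finite subfamily. Let $\mathcal O$ be the finite set of all Mal'tsev operations $D^3\to D$; it is nonempty, containing for instance $m_0$ from \cref{lem:finite-support-maltsev}. For each $G\in W_{\cF}$, let $\mathcal O_G\subseteq\mathcal O$ be the set of Mal'tsev operations preserving $\SuppRel G$. The claim~\eqref{eq:global-support-maltsev} is exactly that $\bigcap_{G\in W_{\cF}}\mathcal O_G\neq\varnothing$.

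By \cref{lem:finite-support-maltsev}, applied under $\BO(\cF)$ to the finite family $\{\SuppRel{G_1},\ldots,\SuppRel{G_n}\}$, every finite intersection $\mathcal O_{G_1}\cap\cdots\cap\mathcal O_{G_n}$ is nonempty. Now suppose the full intersection were empty. Then for each $m\in\mathcal O$ choose a witness $G_m\in W_{\cF}$ with $m\notin\mathcal O_{G_m}$. The finitely many tables $\{G_m:m\in\mathcal O\}$ form a finite subfamily whose intersection $\bigcap_m\mathcal O_{G_m}$ is empty, since each $m$ is excluded by its own witness. This contradicts \cref{lem:finite-support-maltsev}. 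Equivalently, among the finitely many subsets of $\mathcal O$ realized as finite intersections, take a minimal one; it is nonempty and contained in every $\mathcal O_G$.

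There is no real obstacle here beyond the lemma itself. The only point to check is that the family $\Delta$ in \cref{lem:finite-support-maltsev} may contain empty relations and relations of varying arities, which that lemma explicitly handles. I would also remark that preservation of $\SuppRel G$ for arity-one $G$ is included, since the lemma places no arity restriction on the supports.
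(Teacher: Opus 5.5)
Your proposal is correct and is essentially the paper's proof: the paper also takes the finite set of Mal'tsev operation tables on $D$, chooses for each one a generated table $G_m$ whose support it fails to preserve, and applies \cref{lem:finite-support-maltsev} to the finite family $\{\SuppRel{G_m}\}$ to reach a contradiction. Your alternative phrasing via a minimal finite intersection is an equivalent form of the same compactness argument.
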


\begin{proof}
Let $\mathcal M_D$ be the finite set of all Mal'tsev operation tables on
$D$.  By \cref{lem:finite-support-maltsev}, every finite family of generated
supports is preserved by some member of $\mathcal M_D$.

Suppose, toward a contradiction, that no member of $\mathcal M_D$ preserves
all generated supports.  For every $m\in\mathcal M_D$, choose
$G_m\in W_{\cF}$ such that $m$ fails to preserve $\SuppRel{G_m}$.  Such a
support is necessarily nonempty.  The finite family
\[
  \Delta=\{\SuppRel{G_m}:m\in\mathcal M_D\}
\]
has a common Mal'tsev polymorphism $m^\star\in\mathcal M_D$ by
\cref{lem:finite-support-maltsev}.  But $\Delta$ contains
$\SuppRel{G_{m^\star}}$, which was chosen not to be preserved by
$m^\star$, a contradiction.  This proves~\eqref{eq:global-support-maltsev}.
\end{proof}

\section{Realizing Row Equivalence as a Generated Support}
\label{sec:omega-realization}

Under global Block Orthogonality, we realize every derived relation
$\RowEq G$ as the support of a table in $W_{\cF}$.  The construction
preserves the zeros forced by Block Orthogonality while avoiding cancellation
on proportional rows.

Let $e_K$ be the exponent of the finite group $\mu(K)$, and define
\begin{equation*}
  K_\mu=\lcm(2,e_K),
  \qquad
  L_D=\lcm(1,\ldots,d).
\end{equation*}
These integers are computable from the input, $K_\mu\ge2$, and
\begin{equation}
  \xi^{K_\mu}=1
  \qquad\text{for every }\xi\in\mu(K).
  \label{eq:Kmu-kills-torsion}
\end{equation}
For a table or vector $H$ and $h\ge1$, let $\Pow{H}{h}$ denote its entrywise
$h$th power: every coordinate $H(\mathbf a)$ is replaced by
$H(\mathbf a)^h$.

\begin{lemma}[Generated entrywise powers]
\label{lem:generated-entrywise-powers}
If $G\in W_{\cF}$ and $h\ge1$, then $\Pow{G}{h}\in W_{\cF}$.
\end{lemma}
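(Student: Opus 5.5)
The plan is to build the entrywise power by running $h$ independent copies of one presentation of $G$ that share only their retained variables. This is the same copying construction used in the proof of \cref{lem:equality-free-support-realization}. Via \cref{lem:pinned-realization} it is the labelled product $\zvec{I^h}=\zvec I^{\odot h}$.

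First, I choose a finite $\CSP{\cF}$ instance $I$ and an integer $t$ with $G=F_I^{[t]}$. Here the retained variables are $x_1,\ldots,x_t$ and the hidden ones are $z_1,\ldots,z_u$. I then form a new instance $J$ on the variable set
\[
  \{x_1,\ldots,x_t\}\cup\{z_j^{(c)}:j\in[u],\ c\in[h]\}.
\]
For every copy index $c\in[h]$ and every constraint $(f;v_1,\ldots,v_{r_f})\in\mathcal C_I$, it contains one constraint obtained by renaming each hidden $z_j$ to $z_j^{(c)}$ and leaving each retained $x_i$ unchanged. Multiplicities and repeated scope positions are kept, so $J$ is a legal instance. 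I order the retained variables first, so that $F_J^{[t]}$ is a member of $W_{\cF}$.

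Next, I fix $\mathbf a\in D^t$ and compute. The product defining $F_J$ splits into $h$ factors, and the $c$-th factor depends only on $\mathbf a$ and the $c$-th block of hidden variables. Since the blocks are disjoint, the finite sum over all hidden assignments factors:
\[
  F_J^{[t]}(\mathbf a)=\prod_{c=1}^h\sum_{\mathbf b^{(c)}\in D^u}F_I(\mathbf a,\mathbf b^{(c)})=G(\mathbf a)^h .
\]
This holds entry by entry, including entries where $G(\mathbf a)=0$ because of cancellation, so $\Pow{G}{h}=F_J^{[t]}\in W_{\cF}$.

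No real obstacle is expected. The only point needing care is that hidden variables must be freshly duplicated and never shared; otherwise the sum would not factor. The degenerate case $u=0$ is harmless, since then $G=F_I$ and $J$ simply repeats each constraint $h$ times. For later use in \cref{cor:degree-multiple-main}, I would also note that each hidden copy keeps its original degree, and each retained variable's degree is multiplied by $h$. Divisibility by $\delta$ is therefore preserved, so the same construction stays inside $\WFdeg{\delta}$.
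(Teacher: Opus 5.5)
Your proposal is correct and matches the paper's proof. You take $h$ copies of one presentation of $G$ that share only the retained variables, keep every hidden variable fresh, and factor the outer sum so that each entry becomes $G(\mathbf a)^h$, cancellation-created zeros included. Your degree bookkeeping for $\WFdeg{\delta}$ is the same observation the paper relies on when proving \cref{cor:degree-multiple-main}.
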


\begin{proof}
Choose a presentation
\[
  G(\mathbf a)=\sum_{\mathbf w\in D^s}P(\mathbf a,\mathbf w),
\]
where $P$ is the product of the constraints in a finite $\cF$-instance.
Take $h$ copies of that instance, identify their retained variables
$\mathbf a$, and make their internal tuples
$\mathbf w^{(1)},\ldots,\mathbf w^{(h)}$ pairwise fresh.  Marginalizing them
gives
\[
  \sum_{\mathbf w^{(1)},\ldots,\mathbf w^{(h)}}
     \prod_{j=1}^hP(\mathbf a,\mathbf w^{(j)})
  =G(\mathbf a)^h.
\]
Thus $\Pow{G}{h}\in W_{\cF}$; the construction preserves every zero of $G$,
including those created by cancellation.
\end{proof}

\begin{lemma}[Finite relative row phases]
\label{lem:finite-row-phases}
Assume $\BO(\cF)$.  Let $G:D^r\to K$, $r\ge2$, belong to $W_{\cF}$, and
let $\mathbf u=G(\mathbf x,\ast)$ and
$\mathbf v=G(\mathbf y,\ast)$ be nonzero rows with overlapping supports.
Then their supports are equal.  Fix any anchor $z_0$ in their common support
$S$, and define
\begin{equation}
  \lambda:=\frac{v_{z_0}}{u_{z_0}},
  \qquad
  \zeta_z:=
    \frac{v_z/u_z}{v_{z_0}/u_{z_0}}
    =\frac{v_zu_{z_0}}{u_zv_{z_0}}
  \qquad(z\in S).
  \label{eq:anchored-row-phases}
\end{equation}
Then $\lambda\in K^\times$ and
$(\zeta_z)_{z\in S}\in\mu(K)^S$.  For every $z\in S$,
\begin{equation}
  v_z=\lambda\zeta_z u_z,
  \qquad
  \zeta_{z_0}=1,
  \qquad
  \zeta_z^{L_D}=1.
  \label{eq:finite-row-phases}
\end{equation}
\end{lemma}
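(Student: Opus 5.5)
The plan is to read everything off Block Orthogonality, applied not only to $G$ but to all of its entrywise powers $\Pow{G}{h}$. These powers lie in $W_{\cF}$ by \cref{lem:generated-entrywise-powers}. All comparisons are translated back to the original entries through \cref{lem:intrinsic-purification-tests}.

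\emph{Equal supports and finite phases.} Let $P$ be a singleton purification of $G$. By $\BO(\cF)$ and \cref{cor:singleton-purification}, $\abs P$ is block-rank one. Purification preserves support, so the rows $\mathbf x,\mathbf y$ of $P$ are nonzero and overlap. Their magnitude rows are therefore positively proportional, and in particular $\supp\mathbf u=\supp\mathbf v=S$. Proportionality gives $\abs{\widehat u_z}\abs{\widehat v_{z_0}}=\abs{\widehat u_{z_0}}\abs{\widehat v_z}$ for $z\in S$. Then \eqref{eq:intrinsic-magnitude-minor} yields $v_zu_{z_0}/(u_zv_{z_0})\in\mu(K)$, which is exactly $\zeta_z\in\mu(K)$. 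The relation $\lambda\in K^\times$ is immediate, since entries of generated tables lie in $K$. The identities $v_z=\lambda\zeta_zu_z$ and $\zeta_{z_0}=1$ follow directly from \eqref{eq:anchored-row-phases}. The only real content is $\zeta_z^{L_D}=1$.

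\emph{Translating BO into power-sum constraints.} Let $N$ be the exponent of the finite group generated by $\{\zeta_z:z\in S\}$. Then $N$ is the least $h\ge1$ with $\zeta_z^h=1$ for all $z\in S$, and $\zeta_z^h=1$ for all $z$ holds exactly when $N\mid h$. Fix $h\ge1$ and apply $\BO(\cF)$ to $\Pow{G}{h}$, whose rows are $\Pow{\mathbf u}{h}$ and $\Pow{\mathbf v}{h}=\lambda^h\Pow{\boldsymbol\zeta}{h}\odot\Pow{\mathbf u}{h}$.

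By \eqref{eq:intrinsic-equal-magnitude}, the equal-magnitude blocks of the purified row $\Pow{\mathbf u}{h}$ are the classes of the relation $u_z^h/u_c^h\in\mu(K)$. Since $\mu(K)$ is closed under taking roots inside $K$, this relation is equivalent to $u_z/u_c\in\mu(K)$, so these blocks $T$ are independent of $h$. Fix a block $T$ with anchor $a\in T$ and put $\omega_z=\zeta_z/\zeta_a$ for $z\in T$. Root covariance \eqref{eq:purification-root-covariance} turns the purified Hermitian sum on $T$ into a nonzero prefactor times $\sum_{z\in T}\overline{\omega_z^h}$. Hence the following holds for every $h$:
\begin{itemize}
\item If $N\nmid h$, the powered rows are not dependent. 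Dependence is detected by ordinary minors \eqref{eq:intrinsic-ordinary-minor}, and it occurs iff all $\zeta_z^h$ are equal, i.e.\ equal to $\zeta_{z_0}^h=1$. Block orthogonality then forces $p_T(h):=\sum_{z\in T}\omega_z^h=0$.
\item If $N\mid h$, then $p_T(h)=\abs T$.
\end{itemize}

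\emph{The key counting step, which I expect to be the main obstacle.} Write $n=\abs T$. We have $p_T(h)=n\,\mathbf 1_{N\mid h}$ for all $h\ge1$, and trivially $p_T(0)=n$. Passing to generating functions gives
\[
  \sum_{z\in T}\frac{1}{1-\omega_z t}=\frac{n}{1-t^N}=\frac nN\sum_{\xi^N=1}\frac{1}{1-\xi t}.
\]
The partial fractions $1/(1-\xi t)$ for distinct $\xi$ are linearly independent; alternatively one can invoke \cref{lem:dedekind-independence} for the characters $h\mapsto\xi^h$ on $\mathbb Z_{\ge0}$. Comparing coefficients, every $N$th root of unity occurs among the $\omega_z$ with multiplicity exactly $n/N$, and nothing else occurs. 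Since this multiplicity is a positive integer, $N\mid n$, so $N\le n\le d$ and hence $N\mid L_D$. Because $N$ is the exponent of the group generated by all the $\zeta_z$, we obtain $\zeta_z^{L_D}=1$ for every $z\in S$.

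The delicate points are two. First, one must check that the dependent/orthogonal dichotomy of the powered rows is governed by the single global exponent $N$ rather than by a per-block order. Second, the magnitude blocks must be confirmed to be stable under $h$. Both reduce to \cref{lem:intrinsic-purification-tests}.
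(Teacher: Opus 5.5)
Your proof is correct, and both points you flagged as delicate are handled properly. The block stability follows from $(u_z/u_c)^h\in\mu(K)\Rightarrow u_z/u_c\in\mu(K)$. Dependence of the powered rows is indeed equivalent to $N\mid h$ for the single global exponent $N$.

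The key step differs from the paper's. The paper first uses \cref{cor:singleton-purification} together with Cai and Chen's corollary to the Purification Lemma to conclude that the unpurified tables $G$ and $\Pow{G}{h}$ are themselves block-orthogonal. It then fixes an \emph{original} magnitude block $B$ and the products $x_z=u_z\overline{v_z}$, and argues by contradiction. If $\Pow{\mathbf u}{h}$ and $\Pow{\mathbf v}{h}$ were independent for every $h\in[d]$, then $\sum_{z\in B}x_z^h=0$ for $h=1,\ldots,\abs B$. Newton's identities would then force $\prod_{z\in B}x_z=0$, which is impossible. Hence some $h\le\abs B\le d$ gives dependence, so $\zeta_z^h=1$ and $h\mid L_D$.

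You instead stay entirely inside \cref{lem:intrinsic-purification-tests}, working with purified blocks and root covariance. From this you obtain the complete periodic pattern $p_T(h)=\abs T\,\mathbf 1_{N\mid h}$ for all $h\ge0$. You then determine the multiset $\{\omega_z\}$ exactly by character independence on $\mathbb Z_{\ge0}$ (\cref{lem:dedekind-independence}, or equivalently your partial-fraction comparison).

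The paper's route is shorter. It needs only the finitely many powers $h\le d$ and an elementary symmetric-function identity. Your route uses all powers, which are freely available by \cref{lem:generated-entrywise-powers}. In exchange, it avoids the external appeal to block orthogonality of the unpurified table, and it proves more: in every purified block the relative phases are equidistributed over the $N$th roots of unity. Thus $N$ divides every purified block size, rather than merely satisfying $N\le\abs B$.
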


The anchor separates one scalar $\lambda$, independent of $z$, from the
coordinate-dependent torsion factors $(\zeta_z)_{z\in S}$.  Although
$\lambda\zeta_z=v_z/u_z\in K^\times$, absorbing this product would produce
a scalar depending on $z$ unless all $\zeta_z$ are equal.  The common
$\lambda$ is what can later be factored out of the detector sum.

\begin{proof}
By \cref{cor:singleton-purification} and Corollary~1 of the Purification Lemma
of Cai and Chen~\cite[Lemma~7 and Corollary~1]{CaiChen2017}, the original
table $G$ is block-orthogonal.
Hence overlapping nonzero rows have equal
supports and proportional magnitude vectors.

By \cref{lem:generated-entrywise-powers}, $\Pow{G}{h}\in W_{\cF}$ for every
$h\ge1$.  Applying the same corollary again shows that each original table
$\Pow{G}{h}$ is block-orthogonal.  Suppose, toward a contradiction,
that $\Pow{\mathbf u}{h}$ and $\Pow{\mathbf v}{h}$ are linearly independent
for every $h\in[d]$.  Fix a nonempty magnitude block $B$ of $\mathbf u$.
Positive powering is injective on positive real numbers, so $B$ remains the
same magnitude block in every $\Pow{\mathbf u}{h}$.  Block orthogonality of
$\Pow{G}{h}$ gives the following power sums.  Put
\[
  b=\abs B,
  \qquad
  x_z=u_z\overline{v_z}\quad(z\in B).
\]
Because $B$ lies in the common support, every $x_z$ is nonzero.  Also
$b\le d$, and hence
\begin{equation*}
  p_h:=\sum_{z\in B}x_z^h
  =\sum_{z\in B}u_z^h\overline{v_z^h}=0,
  \qquad h=1,\ldots,b.
\end{equation*}
Enumerate the $x_z$, $z\in B$, as $x_1,\ldots,x_b$, and let
\[
  e_k=\sum_{1\le i_1<\cdots<i_k\le b}
       x_{i_1}\cdots x_{i_k},
  \qquad e_0=1,
\]
be their elementary symmetric polynomials.  Newton's identities state
\begin{equation}
  k e_k
  =\sum_{j=1}^k(-1)^{j-1}e_{k-j}p_j
  \qquad(k=1,\ldots,b).
  \label{eq:newton-block-identities}
\end{equation}
Since the ambient field has characteristic zero, each integer
$k\in[b]$ is invertible.  As $p_1=\cdots=p_b=0$, induction on $k$ in
\eqref{eq:newton-block-identities} gives
$e_1=\cdots=e_b=0$.  In particular,
\[
  0=e_b=x_1\cdots x_b
  =\prod_{z\in B}u_z\overline{v_z},
\]
contradicting the fact that every factor is nonzero.

Consequently, $\Pow{\mathbf u}{h}$ and $\Pow{\mathbf v}{h}$ are dependent
for some $h\in[d]$.  Thus $(v_z/u_z)^h$ is independent of $z\in S$.  By
\eqref{eq:anchored-row-phases}, for every $z\in S$,
\[
  \zeta_z^h
  =\frac{(v_z/u_z)^h}{(v_{z_0}/u_{z_0})^h}=1.
\]
Since $h$ divides $L_D$, \eqref{eq:finite-row-phases} follows.  All ratios
in~\eqref{eq:anchored-row-phases} belong to $K^\times$, so each
$\zeta_z\in\mu(K)$; the same display gives
$\lambda\in K^\times$ and $\zeta_{z_0}=1$.
\end{proof}

\begin{lemma}[Nondegenerate exponential sums]
\label{lem:nondegenerate-exponential-sums}
Let $s\ge1$, let $a_1,\ldots,a_s,b_1,\ldots,b_s$ be nonzero algebraic
numbers, and
suppose $b_i/b_j$ is not a root of unity whenever $i\ne j$.  Then
\[
  p(t)=\sum_{i=1}^s a_ib_i^t
\]
vanishes for only finitely many integers $t\ge0$.
\end{lemma}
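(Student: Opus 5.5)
The plan is to read $p(t)$ as a linear recurrence sequence and apply the Skolem--Mahler--Lech theorem~\cite{Lech1953}, as in the proof of \cref{lem:simultaneous-power-sums}. The zero set of $t\mapsto p(t)$ on $t\ge0$ is then a finite set together with finitely many arithmetic progressions $\{t\ge N: t\equiv r \pmod{m}\}$. So it suffices to show that $p$ cannot vanish on an entire infinite arithmetic progression. Once that is shown, the zero set is finite.

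Suppose, toward a contradiction, that $p(r+mn)=0$ for all $n\ge N$, with some $m\ge1$, $r\ge0$, $N\ge0$. Then
\[
  0=p(r+mn)=\sum_{i=1}^s \bigl(a_ib_i^{r}\bigr)\bigl(b_i^{m}\bigr)^{n}\qquad(n\ge N).
\]
The hypothesis is used exactly here. Because no ratio $b_i/b_j$ with $i\ne j$ is a root of unity, the numbers $\beta_i=b_i^m$ are pairwise distinct: $b_i^m=b_j^m$ would force $(b_i/b_j)^m=1$. Their coefficients $\gamma_i=a_ib_i^r$ are all nonzero. This is the key difference from \cref{lem:simultaneous-power-sums}, where no grouping of terms is needed because the $\beta_i$ are already distinct and the coefficients cannot cancel after merging.

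Now evaluate at $n=N,\ldots,N+s-1$. This gives a Vandermonde system $V(\beta_1,\ldots,\beta_s)\,(\gamma_i\beta_i^N)_i=\mathbf 0$, exactly as in~\eqref{eq:zero-progression-vandermonde}. Since the $\beta_i$ are distinct, the Vandermonde determinant $\prod_{i<j}(\beta_j-\beta_i)$ is nonzero, so $\gamma_i\beta_i^N=0$ for every $i$. This contradicts $a_i,b_i\ne0$. Hence no infinite zero progression exists, and only finitely many $t\ge0$ satisfy $p(t)=0$.

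The only delicate step is making sure Skolem--Mahler--Lech applies in the stated form over algebraic numbers, which holds in characteristic zero. The case $m$ large is harmless because non-torsion ratios stay non-torsion under powering. If one wants an effective search later, this argument gives only existence of a finite exceptional set, not a bound. That suffices for the choice of $t$ in the detector, which proceeds by enumerating $t$ and performing exact zero tests.
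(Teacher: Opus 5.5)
Your proof is correct and is essentially the paper's argument: Skolem--Mahler--Lech reduces the claim to excluding an infinite zero progression $\{r+mn:n\ge N\}$, the hypothesis that no $b_i/b_j$ is a root of unity makes the bases $b_i^m$ pairwise distinct, and the resulting Vandermonde system forces $a_ib_i^r(b_i^m)^N=0$, a contradiction. One side remark is backwards but harmless: it is \cref{lem:simultaneous-power-sums} that must group terms by their $p$th powers, whereas here the hypothesis makes grouping unnecessary.
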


\begin{proof}
The sequence $p(t)$ is a linear recurrence with characteristic roots among
$b_1,\ldots,b_s$.  The Skolem--Mahler--Lech theorem~\cite{Lech1953} says that
its zero set is a finite set together with finitely many arithmetic
progressions.  Suppose one such progression is
$\{r+mn:n\ge N\}$, where $m\ge1$.  Then
\[
  0=p(r+mn)=\sum_{i=1}^s(a_i b_i^r)(b_i^m)^n
\]
for $n=N,N+1,\ldots,N+s-1$.  The bases $b_i^m$ are pairwise distinct because
no $b_i/b_j$ is a root of unity.  The resulting Vandermonde system is
invertible and forces
$(a_i b_i^r)(b_i^m)^N=0$ for every $i$, contradicting the nonzero hypotheses.
Hence no infinite progression occurs.
\end{proof}

\begin{theorem}[Row-equivalence realization]
\label{thm:omega-support-realization}
Assume $\BO(\cF)$.  Let $G:D^r\to K$, $r\ge2$, be any member of
$W_{\cF}$.  There is an integer $t\ge0$ such that, for $q=1+tL_D$, the
$2(r-1)$-ary function
\begin{equation}
  C_q(\mathbf x,\mathbf y)
  =\sum_{z\in D}
       G(\mathbf x,z)^{(K_\mu-1)q}G(\mathbf y,z)^q
  \label{eq:omega-detector}
\end{equation}
belongs to $W_{\cF}$ and satisfies
\begin{equation}
  \SuppRel{C_q}=\RowEq G.
  \label{eq:omega-is-support}
\end{equation}
We call $C_q$ the \emph{row-equivalence detector}; the displayed equality
says that it is nonzero exactly when the rows indexed by $\mathbf x$ and
$\mathbf y$ are both nonzero and linearly dependent.
Given the exact table of $G$ and an instance presentation of it, a terminating
exact search finds such a $t$, the corresponding exponent $q=1+tL_D$, and an
instance presentation of $C_q$.
\end{theorem}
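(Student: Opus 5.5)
The plan has three parts: show that $C_q$ is generated, show that it vanishes off $\RowEq G$ for every admissible $q$, and choose $t$ so that it is nonzero on $\RowEq G$.

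\emph{Membership.} For every $q\ge1$, \cref{lem:generated-entrywise-powers} gives $\Pow{G}{(K_\mu-1)q}$ and $\Pow{G}{q}$ in $W_{\cF}$. Take fresh copies of instance presentations of these two tables. Attach the first to retained variables $(\mathbf x,z)$ and the second to $(\mathbf y,z)$, sharing the single variable $z$ and keeping all internal variables fresh. Summing out the internal variables and then $z$ produces exactly \eqref{eq:omega-detector}. Retaining $(\mathbf x,\mathbf y)$ in order shows $C_q\in W_{\cF}$, and the construction gives the instance presentation directly.

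\emph{Vanishing off $\RowEq G$.} If either row is zero, every term of the sum vanishes. Assume both rows are nonzero. If their supports are disjoint, every term vanishes again. Otherwise \cref{lem:finite-row-phases} gives a common support $S$ and the relation $v_z=\lambda\zeta_zu_z$ with $\zeta_z^{L_D}=1$. Because $q\equiv1\pmod{L_D}$, we get $v_z^q=\lambda^q\zeta_zu_z^q$. Now $u_z^{(K_\mu-1)q}=u_z^{K_\mu q}/u_z^q$, so
\[
C_q(\mathbf x,\mathbf y)=\lambda^q\sum_{z\in S}u_z^{K_\mu q}\zeta_z.
\]
Group $S$ by magnitude blocks of the purified row. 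By \cref{lem:intrinsic-purification-tests}, within a block $B$ with anchor $z_B$ we have $u_z=\rho_zu_{z_B}$ with $\rho_z\in\mu(K)$. Then $u_z^{K_\mu q}=u_{z_B}^{K_\mu q}$ by \eqref{eq:Kmu-kills-torsion}, so the block contributes $u_{z_B}^{K_\mu q}\sum_{z\in B}\zeta_z$.

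If the rows are not dependent, Block Orthogonality of the purification (through \cref{cor:singleton-purification}) gives a vanishing Hermitian sum on each block. Translating by root covariance, that sum is $\sum_{z\in B}u_z\overline{v_z}\propto\sum_{z\in B}\rho_z\overline{\rho_z\zeta_z}\cdot\overline{\lambda}|u_{z_B}|^2$. Since $|\rho_z|=1$, this is proportional to $\sum_{z\in B}\overline{\zeta_z}$. Hence $\sum_{z\in B}\zeta_z=0$, and $C_q$ vanishes for every $q$. This requires checking that the purified blocks coincide with the level sets of $z\mapsto u_z$ modulo $\mu(K)$; that is exactly \eqref{eq:intrinsic-equal-magnitude}.

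\emph{Nonvanishing on $\RowEq G$ (the main obstacle).} If the rows are dependent, all $\zeta_z=1$, and
\[
C_q=\lambda^q\sum_{B}|B|\,u_{z_B}^{K_\mu q}.
\]
Distinct blocks have anchors whose ratio is not a root of unity. So the bases $b_B=u_{z_B}^{K_\mu L_D}$ have pairwise ratios that are not roots of unity, and the coefficients $|B|\,u_{z_B}^{K_\mu}$ are nonzero. Writing $K_\mu q=K_\mu+tK_\mu L_D$, \cref{lem:nondegenerate-exponential-sums} shows that this sum vanishes for only finitely many $t$.

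There are finitely many dependent pairs $(\mathbf x,\mathbf y)$, so all but finitely many $t$ work simultaneously. An exact search over $t=0,1,\dots$ then terminates: it evaluates $C_q$ on the finitely many pairs in $\RowEq G$ by exact arithmetic in $K$, and the vanishing elsewhere holds automatically.

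The delicate point is the bookkeeping that justifies the block decomposition: the magnitude blocks used in Block Orthogonality must be the same as the $\mu(K)$-classes of entries. The non-root-of-unity condition between distinct blocks, needed by \cref{lem:nondegenerate-exponential-sums}, depends on this as well.
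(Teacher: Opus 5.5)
Your proposal is correct and follows essentially the same route as the paper: membership by fresh copies sharing only $z$; vanishing on independent overlapping pairs via \cref{lem:finite-row-phases}, root covariance, and $\sum_{z\in B}\zeta_z=0$ on each purified magnitude block; and, for dependent pairs, grouping into $\mu(K)$-classes, then \cref{lem:nondegenerate-exponential-sums} and a finite union of exceptional sets of $t$. One wording fix: Block Orthogonality kills the purified sum $\sum_{z\in B}\widehat u_z\overline{\widehat v_z}$, which applies because \eqref{eq:intrinsic-ordinary-minor} keeps the purified rows independent, and root covariance turns that sum into a nonzero multiple of $\sum_{z\in B}\overline{\zeta_z}$; it is not literally the original sum $\sum_{z\in B}u_z\overline{v_z}$ you display, although the two vanish together.
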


\begin{proof}
Fix an allowed exponent $q=1+tL_D$.  Both exponents
in~\eqref{eq:omega-detector} are positive.  Let
$\mathbf u=G(\mathbf x,\ast)$ and $\mathbf v=G(\mathbf y,\ast)$.  If one row
is zero, or if their supports are disjoint, every summand is zero.
Suppose henceforth that the rows are nonzero and overlap.  Global Block
Orthogonality and support invariance under purification imply that their
supports agree and that their purified magnitude rows are proportional.
Write their common support as
\[
  S=\supp(\mathbf u)=\supp(\mathbf v).
\]

First suppose that $\mathbf u$ and $\mathbf v$ are linearly independent.
Fix a legal singleton purification of $G$, and let $B$ be one magnitude block
of the purified row corresponding to $\mathbf u$.  The two purified
magnitude rows are proportional, so $B$ is simultaneously a magnitude block
of the purified row corresponding to $\mathbf v$.  Choose $z_B\in B$.  By
\cref{lem:intrinsic-purification-tests}, the anchor ratios
\begin{equation}
  \eta_z=u_z/u_{z_B},
  \qquad
  \theta_z=v_z/v_{z_B}
  \qquad(z\in B)
  \label{eq:block-anchor-ratios}
\end{equation}
belong to $\mu(K)$.  Purification preserves complex row dependence, so the
two purified rows remain independent and are block-orthogonal.  Root
covariance then turns the Hermitian sum on $B$ into
\begin{equation}
  \sum_{z\in B}\eta_z\overline{\theta_z}=0.
  \label{eq:purified-root-sum}
\end{equation}

Apply \cref{lem:finite-row-phases} with the same anchor $z_B$:
$v_z=\lambda\zeta_z u_z$, where $\zeta_{z_B}=1$ and
$\zeta_z^{L_D}=1$.  Taking ratios gives $\theta_z=\zeta_z\eta_z$.
Substitution in~\eqref{eq:purified-root-sum}, followed by conjugation, yields
\begin{equation*}
  \sum_{z\in B}\zeta_z=0.
\end{equation*}
Equation~\eqref{eq:block-anchor-ratios} and
\eqref{eq:Kmu-kills-torsion} give
$u_z^{K_\mu}=u_{z_B}^{K_\mu}$.  Since $q\equiv1\pmod{L_D}$, also
$\zeta_z^q=\zeta_z$.  The contribution of $B$ to
$C_q(\mathbf x,\mathbf y)$ is therefore
\begin{align}
  \sum_{z\in B}u_z^{(K_\mu-1)q}v_z^q
  &=\lambda^q\sum_{z\in B}\zeta_z^q u_z^{K_\mu q}\notag\\
  &=\lambda^q u_{z_B}^{K_\mu q}\sum_{z\in B}\zeta_z
   =0.
  \label{eq:independent-block-stays-zero}
\end{align}
The purified magnitude blocks partition the common support; write
\[
  S=B_1\mathbin{\dot\cup}\cdots\mathbin{\dot\cup}B_N.
\]
Terms outside $S$ are zero, and
\eqref{eq:independent-block-stays-zero} applies to every $B_i$.  Therefore
\begin{equation*}
\begin{aligned}
  C_q(\mathbf x,\mathbf y)
  &=\sum_{z\in S}u_z^{(K_\mu-1)q}v_z^q\\
  &=\sum_{i=1}^N
      \sum_{z\in B_i}u_z^{(K_\mu-1)q}v_z^q
    =0.
\end{aligned}
\end{equation*}
Thus the detector vanishes on every overlapping independent ordered row pair
for every allowed $q$.

It remains to treat a nonzero dependent pair, say
$\mathbf v=\lambda\mathbf u$ with $\lambda\ne0$.  Using the common support
$S$, we have
\begin{equation*}
  C_q(\mathbf x,\mathbf y)
   =\lambda^q\sum_{z\in S}u_z^{K_\mu q}.
\end{equation*}
Along $q=1+tL_D$, put $a_z=u_z^{K_\mu}$ and
$b_z=u_z^{K_\mu L_D}$.  The remaining sum is
\begin{equation}
  p_{\mathbf x,\mathbf y}(t)=\sum_{z\in S}a_zb_z^t.
  \label{eq:dependent-exponential-sum}
\end{equation}
Group coordinates having the same base $b_z$.  If $b_z=b_w$, then
$u_z/u_w\in\mu(K)$, so $a_z=a_w$ by
\eqref{eq:Kmu-kills-torsion}.  Thus each grouped coefficient is a positive
integer multiple of one nonzero $a_z$.  Moreover, the quotient of two
distinct grouped bases is not a root of unity: otherwise a power of
$u_z/u_w$ would be one, so $u_z/u_w\in\mu(K)$ and the two bases would have
been equal.  After grouping,
\cref{lem:nondegenerate-exponential-sums} shows that
\eqref{eq:dependent-exponential-sum} has only finitely many zeros.

There are finitely many ordered dependent nonzero row pairs in $G$.  Choose
$t$ outside the union of their finite exceptional sets.  Then the detector
is nonzero on every member of $\RowEq G$, whereas the zero-row,
disjoint-support, and independent cases show that it vanishes on every
nonmember.  This proves~\eqref{eq:omega-is-support}.

The witness $t$ is found by exact search.  Compute $\RowEq G$ exactly: a row
is nonzero if one entry is nonzero, and two
nonzero rows are dependent exactly when
\begin{equation*}
  G(\mathbf x,z)G(\mathbf y,w)
  -G(\mathbf x,w)G(\mathbf y,z)=0
  \qquad(z,w\in D).
\end{equation*}
Enumerate $t=0,1,2,\ldots$, evaluate~\eqref{eq:omega-detector} by exact
algebraic arithmetic, and stop when its support equals $\RowEq G$.  The
finite-exception argument proves termination.

Finally, choose a presentation
\[
  G(\mathbf x,z)
   =\sum_{\mathbf h\in D^s}P(\mathbf x,z,\mathbf h)
\]
by a finite $\cF$-instance, and set $A=(K_\mu-1)q$.  Take $A$
fresh-internal copies of this presentation sharing only $(\mathbf x,z)$ and
$q$ fresh-internal copies sharing only $(\mathbf y,z)$.  Make every copied
internal tuple fresh.  For fixed $(\mathbf x,\mathbf y,z)$, marginalizing
the two collections of copied hidden variables gives
\begin{align*}
  \left(\sum_{\mathbf h}P(\mathbf x,z,\mathbf h)\right)^A
    &=G(\mathbf x,z)^A,\\
  \left(\sum_{\mathbf k}P(\mathbf y,z,\mathbf k)\right)^q
    &=G(\mathbf y,z)^q.
\end{align*}
Because the collections are disjoint, their contributions multiply.
Summing over $z$ now gives exactly $C_q(\mathbf x,\mathbf y)$.
Thus $C_q\in W_{\cF}$.  The construction preserves repeated scopes and
previous boundary identifications, and it treats every earlier marginal by
fresh copies of its hidden variables.
\end{proof}

\begin{corollary}[Support preservation extends to all row relations]
\label{cor:omega-reduces-to-supports}
Assume $\BO(\cF)$.  If one operation $m:D^3\to D$ preserves $\SuppRel H$
for every $H\in W_{\cF}$, then the same operation preserves $\RowEq G$ for
every $G\in W_{\cF}$ of arity at least two.
\end{corollary}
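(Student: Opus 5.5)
The argument is essentially a one-line transfer through \cref{thm:omega-support-realization}. Fix $G\in W_{\cF}$ with $\arity(G)=r\ge2$. Under $\BO(\cF)$, that theorem supplies an exponent $q=1+tL_D$ for which the detector $C_q$ of~\eqref{eq:omega-detector} is itself a member of $W_{\cF}$ and satisfies $\SuppRel{C_q}=\RowEq G$. By hypothesis the operation $m$ preserves $\SuppRel H$ for every $H\in W_{\cF}$, and we apply this with $H=C_q$. Preservation is a property of the relation as a subset of $D^{2(r-1)}$, so $m$ then preserves $\RowEq G$ as well.

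The only points I would check explicitly are the following.

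First, the coordinate orders must match. $\RowEq G$ is defined on pairs $(\mathbf x,\mathbf y)\in D^{r-1}\times D^{r-1}$, and $C_q$ is presented with its retained variables ordered as $(\mathbf x,\mathbf y)$. The final construction in the proof of \cref{thm:omega-support-realization} retains exactly these $2(r-1)$ boundary variables in that order. So the equality $\SuppRel{C_q}=\RowEq G$ holds literally, coordinate by coordinate, and no permutation of coordinates is needed.

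Second, the degenerate cases need no separate argument. If $G$ is the zero table, or has no nonzero rows, then $\RowEq G=\varnothing$, and every operation preserves it vacuously.

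Third, the same $m$ serves every $G$. The hypothesis is the single operation from \cref{thm:common-support-maltsev}, quantified over all generated supports. So the exponent $q$ may depend on $G$ without affecting $m$.

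I expect no real obstacle here: all the substantive work (keeping the detector zero on independent pairs, and preventing cancellation on dependent pairs) is already contained in \cref{thm:omega-support-realization}.
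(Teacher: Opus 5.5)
Your proposal is correct and follows the paper's proof: you realize $\RowEq G$ as the support of the generated detector $C_q$ from \cref{thm:omega-support-realization}, apply the hypothesis with $H=C_q$, and note that the exponent may depend on $G$ while $m$ stays fixed. One small wording point: the corollary's hypothesis is an arbitrary operation preserving all generated supports, not specifically the one produced by \cref{thm:common-support-maltsev}, but this does not affect the argument.
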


\begin{proof}
Fix the one operation $m$ from the hypothesis, and let
$G\in W_{\cF}$ have arity at least two.  By
\cref{thm:omega-support-realization}, there is a witness $t_G\ge0$, with
$q_G=1+t_GL_D$, for which the detector table constructed from $G$ belongs
to $W_{\cF}$.  Denote this table by $C_G$.  Then
\[
  \SuppRel{C_G}=\RowEq G.
\]
The hypothesis says that the fixed operation $m$ preserves the support of
every member of $W_{\cF}$, so it preserves $\SuppRel{C_G}=\RowEq G$.
Since $G$ was arbitrary, this same $m$ preserves every generated row
relation.

Although $t_G$ and $C_G$ may depend on $G$, the operation $m$ is fixed.
Hence the quantifier order is $\exists m\,\forall G$.
\end{proof}

\section{Type Partition and the Structural Collapse}
\label{sec:type-collapse}

For equal-length tuples $\mathbf u,\mathbf v,\mathbf w$, we write
$m(\mathbf u,\mathbf v,\mathbf w)$ for coordinatewise application of $m$.

\begin{lemma}[Row equivalence forces Type Partition]
\label{lem:omega-implies-type}
Let $G:D^r\to\AlgNums$, $r\ge2$, and let $m:D^3\to D$ be a Mal'tsev
operation preserving $\RowEq G$.  Then the type map of $G$ is a
type-partition map.
\end{lemma}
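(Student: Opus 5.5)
We argue directly from the definition of types. Fix $\ell\in[r-1]$ and prefixes $\boldsymbol\alpha,\boldsymbol\beta\in D^\ell$ whose types share a label $j$. We will show $\typeop_G(\boldsymbol\alpha)\subseteq\typeop_G(\boldsymbol\beta)$; the reverse inclusion follows by symmetry, so any two types at the same prefix length are equal or disjoint. (For $\ell=r-1$ the prefix is a full row index; the same argument applies verbatim.) Write row indices as $(\boldsymbol\alpha,\mathbf w)$ with $\mathbf w\in D^{r-1-\ell}$.

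Since $j$ lies in both types, we can fix rows $(\boldsymbol\alpha,\mathbf w_1)\in S_j$ and $(\boldsymbol\beta,\mathbf w_2)\in S_j$. Now let $j'\in\typeop_G(\boldsymbol\alpha)$ be arbitrary, witnessed by $(\boldsymbol\alpha,\mathbf w_3)\in S_{j'}$. Consider the three tuples of $\RowEq G\subseteq D^{r-1}\times D^{r-1}$:
\[
  \bigl((\boldsymbol\alpha,\mathbf w_3),(\boldsymbol\alpha,\mathbf w_3)\bigr),\quad
  \bigl((\boldsymbol\alpha,\mathbf w_1),(\boldsymbol\alpha,\mathbf w_1)\bigr),\quad
  \bigl((\boldsymbol\alpha,\mathbf w_1),(\boldsymbol\beta,\mathbf w_2)\bigr).
\]
The first two belong to $\RowEq G$ because each nonzero row is dependent on itself. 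The third belongs because both rows lie in $S_j$. Apply $m$ coordinatewise.

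On the first (left) block, every coordinate has the form $m(a,b,b)$: the prefix coordinates are $m(\alpha_i,\alpha_i,\alpha_i)=\alpha_i$, and the suffix coordinates are $m(w_{3},w_{1},w_{1})=w_3$. So the left component is $(\boldsymbol\alpha,\mathbf w_3)$. On the second (right) block, the prefix coordinates are $m(\alpha_i,\alpha_i,\beta_i)=\beta_i$ by the Mal'tsev identity $m(b,b,a)=a$, and the suffix coordinates are $m(w_3,w_1,w_2)=:\mathbf w'$. Preservation therefore gives $\bigl((\boldsymbol\alpha,\mathbf w_3),(\boldsymbol\beta,\mathbf w')\bigr)\in\RowEq G$.

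Hence $(\boldsymbol\beta,\mathbf w')$ is a nonzero row dependent on $(\boldsymbol\alpha,\mathbf w_3)$, so it lies in $S_{j'}$, and therefore $j'\in\typeop_G(\boldsymbol\beta)$. This proves the inclusion.

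The main point requiring care is the choice of the three tuples. The left components must be arranged so that the Mal'tsev identity returns the original witness $(\boldsymbol\alpha,\mathbf w_3)$ unchanged, while in the right components the identity turns the prefix into $\boldsymbol\beta$. Once this is arranged, the rest is bookkeeping. Note that no Block Orthogonality or other property of $G$ is needed; the lemma uses only that $m$ is Mal'tsev and preserves $\RowEq G$.
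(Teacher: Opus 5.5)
Your proof is correct and is essentially the paper's argument. The paper uses the same three tuples with the two components of each pair swapped, namely $(\mathbf c,\mathbf c)$, $(\mathbf a,\mathbf a)$, $(\mathbf b,\mathbf a)$, and obtains $(\mathbf d,\mathbf c)\in\RowEq G$ from $m(x,y,y)=x$ on one side and $m(\alpha_i,\alpha_i,\beta_i)=\beta_i$ on the prefix of the other.
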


\begin{proof}
Write
\[
  \Row(G)=\{(S_1,\mathbf v_1),\ldots,(S_s,\mathbf v_s)\}.
\]
Fix $\ell\in[r-1]$ and
$\boldsymbol\alpha,\boldsymbol\beta\in D^\ell$.  Suppose their types
intersect.  Choose
\[
  j\in\typeop_G(\boldsymbol\alpha)
       \cap\typeop_G(\boldsymbol\beta)
\]
and rows $\mathbf a,\mathbf b\in S_j$ extending
$\boldsymbol\alpha,\boldsymbol\beta$, respectively.

Let $h\in\typeop_G(\boldsymbol\alpha)$ be arbitrary, and choose
$\mathbf c\in S_h$ extending $\boldsymbol\alpha$.  The three
$2(r-1)$-tuples
\[
  (\mathbf c,\mathbf c),\qquad
  (\mathbf a,\mathbf a),\qquad
  (\mathbf b,\mathbf a)
\]
belong to $\RowEq G$: the first two are diagonal pairs of nonzero rows, and
the third belongs because $\mathbf a$ and $\mathbf b$ lie in the same row
class $S_j$.  Put $\mathbf d=m(\mathbf c,\mathbf a,\mathbf b)$.  Applying
$m$ coordinatewise to these three members of $\RowEq G$ gives
\[
\begin{aligned}
  &m\bigl((\mathbf c,\mathbf c),
          (\mathbf a,\mathbf a),
          (\mathbf b,\mathbf a)\bigr)\\
  &\qquad=
  \bigl(m(\mathbf c,\mathbf a,\mathbf b),
        m(\mathbf c,\mathbf a,\mathbf a)\bigr)
  =(\mathbf d,\mathbf c),
\end{aligned}
\]
where the last equality uses the Mal'tsev identity $m(x,y,y)=x$ in every
coordinate.  Preservation therefore gives
\begin{equation*}
  (\mathbf d,\mathbf c)\in\RowEq G.
\end{equation*}
For every $i\in[\ell]$, the $i$th coordinate of the first component is
\[
  m(\alpha_i,\alpha_i,\beta_i)=\beta_i.
\]
Thus the row indexed by $\mathbf d$ extends $\boldsymbol\beta$.
Membership in $\RowEq G$ also says that this row is
nonzero and belongs to the same original complex row class $S_h$ as
$\mathbf c$.  Therefore $h\in\typeop_G(\boldsymbol\beta)$.  Since $h$ was
arbitrary,
\[
  \typeop_G(\boldsymbol\alpha)
  \subseteq\typeop_G(\boldsymbol\beta).
\]
Interchanging the prefixes proves the reverse inclusion.  Thus any two types
are equal or disjoint.  Empty types and the zero table are covered as well.
\end{proof}

\begin{theorem}[Structural collapse]
\label{thm:structural-collapse}
For every nonempty finite domain $D$ and every finite algebraic-complex
constraint language $\cF$,
\[
  \BO(\cF)\quad\Longrightarrow\quad\TP(\cF)\wedge\Mal(\cF).
\]
More explicitly, under $\BO(\cF)$ there is one Mal'tsev operation
$m:D^3\to D$ such that, for every $G\in W_{\cF}$,
\begin{equation*}
  \begin{aligned}
  &m\text{ preserves }\SuppRel G,\\
  &\arity(G)\ge2
    \quad\Longrightarrow\quad
    m\text{ preserves }\RowEq G.
  \end{aligned}
\end{equation*}
Consequently,
\[
  \CC(\cF)\quad\Longleftrightarrow\quad\BO(\cF).
\]
\end{theorem}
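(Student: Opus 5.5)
The plan is to assemble the structural collapse from results already proved, in three short steps.

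\textbf{Step 1: one operation for all supports.} Assume $\BO(\cF)$. \Cref{thm:common-support-maltsev} supplies a single Mal'tsev operation $m:D^3\to D$ preserving $\SuppRel G$ for every $G\in W_{\cF}$. This already settles the support half of the displayed conclusion.

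\textbf{Step 2: the same operation for row relations.} Apply \cref{cor:omega-reduces-to-supports} to this fixed $m$. Its hypothesis is exactly the output of Step~1, so the same $m$ preserves $\RowEq G$ for every $G\in W_{\cF}$ of arity at least two. The underlying mechanism is the detector of \cref{thm:omega-support-realization}, which gives $\SuppRel{C_G}=\RowEq G$ with $C_G\in W_{\cF}$. Zero tables give empty relations, which are preserved vacuously. Since $m$ is chosen before $G$, the quantifier order $\exists m\,\forall R\in\Lambda_{\cF}$ required in \cref{def:type-maltsev} holds, so $\Mal(\cF)$ follows.

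\textbf{Step 3: Type Partition.} For each $G\in W_{\cF}$ of arity at least two, $m$ is Mal'tsev and preserves $\RowEq G$. \Cref{lem:omega-implies-type} then makes the type map of $G$ a type-partition map, which gives $\TP(\cF)$. Hence $\BO(\cF)\Rightarrow\TP(\cF)\wedge\Mal(\cF)$.

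\textbf{The equivalence.} By definition $\CC(\cF)=\BO(\cF)\wedge\TP(\cF)\wedge\Mal(\cF)$, so $\CC(\cF)\Rightarrow\BO(\cF)$ trivially. The forward implication just proved gives the converse.

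The empty-language case was already handled in \cref{sec:preliminaries}. The earlier lemmas were stated for nonempty $\cF$ where this mattered, so I would remark that when $\cF=\varnothing$ the conclusion holds with $m_0$ from that discussion.

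\textbf{Main obstacle.} None of the steps is expected to be hard, since all of them are citations. The only point needing care is the quantifier order: the operation from \cref{thm:common-support-maltsev} must be fixed before the per-table detectors are chosen. The proof must state explicitly that the witnesses $t_G$ depend on $G$ while $m$ does not.
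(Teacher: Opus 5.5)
Your proposal is correct and matches the paper's proof: \cref{thm:common-support-maltsev} supplies the single operation, the row-equivalence detector of \cref{thm:omega-support-realization} transfers preservation to every $\RowEq G$, and \cref{lem:omega-implies-type} yields Type Partition, with the empty language dispatched as in \cref{sec:preliminaries}. The only difference is that you cite \cref{cor:omega-reduces-to-supports}, whereas the paper writes that detector argument inline; the two are the same argument.
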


\begin{proof}
The empty-language case was handled in \cref{sec:preliminaries}; assume
$\cF\ne\varnothing$.
Assume $\BO(\cF)$.  By \cref{thm:common-support-maltsev}, there is one
Mal'tsev operation $m:D^3\to D$ preserving $\SuppRel H$ for every
$H\in W_{\cF}$.

Fix $G\in W_{\cF}$ of arity $r\ge2$.  By
\cref{thm:omega-support-realization}, there exists a generated table
\[
  C_G:D^{2(r-1)}\to\AlgNums,
  \qquad C_G\in W_{\cF},
\]
such that $\SuppRel{C_G}=\RowEq G$.  Since $m$ preserves every generated
support, it preserves $\RowEq G=\SuppRel{C_G}$.  Hence the same $m$ preserves
every relation in $\Lambda_{\cF}$, proving $\Mal(\cF)$.

Applying \cref{lem:omega-implies-type} to every generated table of arity at
least two proves $\TP(\cF)$.  The argument applies in every generated
coordinate ordering; relabelling the retained variables preserves membership
in $W_{\cF}$.

We have proved $\BO(\cF)\Rightarrow\TP(\cF)\wedge\Mal(\cF)$.  The reverse
implication in the final equivalence is immediate because $\BO(\cF)$ is a
conjunct of $\CC(\cF)$.
\end{proof}

\subsection{Proof of uniform decidability}

\begin{proof}[Proof of \cref{thm:main}]
On input $(D,\cF)$, run \cref{proc:global-bo}.  By
\cref{thm:global-bo-decidable}, this is a total exact procedure and returns
\YES{} if and only if $\BO(\cF)$ holds.  By
\cref{thm:structural-collapse},
\[
  \BO(\cF)\quad\Longleftrightarrow\quad\CC(\cF).
\]
Consequently, the same procedure halts on every input and returns
\YES{} exactly when $\CC(\cF)$ holds, proving both uniform
decidability and~\eqref{eq:main-equivalence}.
\end{proof}

\begin{proof}[Proof of \cref{cor:degree-multiple-main}]
Fix $\delta\ge1$.  The proof of \cref{cor:singleton-purification} restricts
verbatim to $\WFdeg{\delta}$: $\BO^\delta(\cF)$ holds if and only if every
member of $\WFdeg{\delta}$ of arity at least two has a block-orthogonal
singleton purification.  At arity $k$, use the labelled submonoid
$\mathfrak M_{k,\delta}$ and the oracle in
\cref{thm:degree-multiple-universal-identity}.  The purification-invariance
and finite-certificate arguments concern one finite table and are unchanged.
The external-arity compression also remains valid: identifying retained
variables replaces their occurrence degrees by their sum, which is still
divisible by $\delta$.  Hence \cref{proc:global-bo}, using the
degree-multiple oracle, is a total exact decision procedure for
$\BO^\delta(\cF)$.

The structural proof likewise remains inside $\WFdeg{\delta}$.  More
explicitly, every instance substitution used in
\cref{lem:equality-free-support-realization,lem:generated-entrywise-powers,thm:omega-support-realization}
takes copies of degree-multiple presentations, keeps their hidden variables
fresh, and identifies some boundary variables.  A fresh variable retains a
degree divisible by $\delta$, while an identified variable receives a sum
of such degrees.  Labelled products, diagonal minors, coordinate
permutations, and repeated boundary positions obey the same rule.  Thus all
support tables and row-equivalence detectors used in
\crefrange{sec:generated-supports}{sec:type-collapse} are members of
$\WFdeg{\delta}$.  Repeating the proofs gives
\[
  \BO^\delta(\cF)
  \Longrightarrow
  \TP^\delta(\cF)\wedge\Mal^\delta(\cF),
\]
with one Mal'tsev operation preserving the entire degree-multiple generated
family.  Since $\BO^\delta$ is a conjunct of $\CC^\delta$, this proves the
claimed equivalence and decidability.  Together with the dichotomy theorem
for $\#\mathrm{CSP}^{\delta}$ proved by Lin~\cite[Theorem~3]{Lin2021}, this
decides the polynomial-time versus $\#\mathsf P$-hard alternative for
$\CSPdeg{\delta}{\cF}$.
\end{proof}

\section{Conclusion}
\label{sec:conclusion}

We have given a total exact decision procedure for the three tractability
conditions characterizing the polynomial-time side of the complex-weighted
$\#\mathrm{CSP}$ dichotomy theorem of Cai and
Chen~\cite[Theorem~1 and Section~3.1]{CaiChen2017}.  The procedure decides
these conditions on the full unbounded family $W_{\cF}$.
Its finite reduction combines the external-arity bound $d^2+1$ with universal
identity testing over all finite labelled instances.

Global Block Orthogonality moreover forces the Type Partition and common
Mal'tsev conditions from that dichotomy theorem, with a single operation
satisfying
$\exists m\,\forall G$.  The support-amplification and row-equivalence
detector constructions remain within $W_{\cF}$ and handle
cancellation-created zeros exactly.

A cyclic root-of-unity degree filter extends both recognition and the
structural collapse to the degree-divisibility restriction
$\#\mathrm{CSP}^{\delta}$ studied by Lin~\cite[Definition~2 and
Theorem~3]{Lin2021}.  Thus the three infinitary tractability conditions on
$W_{\cF}^{\delta}$ in that dichotomy theorem are uniformly decidable for every
positive degree modulus $\delta$.

\bibliographystyle{alpha}
\bibliography{cai-chen-complex-csp-decidability}
\end{document}